\documentclass{llncs}
\pagestyle{headings}
\pdfoutput=1

\usepackage{tikz}
\usetikzlibrary{snakes,arrows,patterns,shadings}
\usepackage{xspace}
\usepackage{macros}
\usepackage{xcolor}
\usepackage[nointegrals]{wasysym}
\usepackage{amsmath,amssymb}
\usepackage{subfigure}
\usepackage{wrapfig}
\usepackage[linesnumbered,procnumbered,ruled,vlined]{algorithm2e}
\usepackage{hyperref}
\usepackage{cleveref,thm-restate}
\usepackage{enumerate}
\usepackage{boxedminipage}
\usepackage{colonequals}
\usepackage{tabularx}
\usepackage{bibunits}
\usepackage{url}

\usepackage{microtype}

\makeatletter
\let\c@lemma\c@theorem
\let\c@corollary\c@theorem
\let\c@definition\c@theorem
\let\c@proposition\c@theorem
\makeatother
 
\begin{document}
\SetKwData{Cost}{{\scshape Cost}}
\SetKwData{Passed}{{\scshape Passed}}
\SetKwData{Waiting}{{\scshape Waiting}}

\begin{bibunit}[alpha]

\title{Symbolic Optimal Reachability in \\ Weighted Timed Automata}

\author{Patricia \textsc{Bouyer} \and Maximilien \textsc{Colange} \and Nicolas \textsc{Markey}}

\institute{LSV -- CNRS -- ENS Cachan} 

\maketitle

\begin{abstract}
  Weighted timed automata have been defined in the early 2000's for
  modelling resource-consumption or -allocation problems in real-time
  systems. 
  Optimal reachability is decidable in weighted timed automata, and a
  symbolic forward algorithm has been developed to solve that
  problem. This algorithm uses so-called \emph{priced zones}, an~extension of
  standard zones with \emph{cost functions}. In~order to ensure termination, 
  the algorithm requires clocks to be bounded.
  For unpriced timed automata,
  much work has been done to develop sound abstractions adapted to the
  forward exploration of timed automata, ensuring termination of the
  model-checking algorithm without bounding the clocks.  
  In~this paper, we~take advantage of recent developments on
  abstractions for timed automata, and propose an algorithm allowing
  for symbolic analysis of all weighted timed automata, without
  requiring bounded clocks.
\end{abstract}

\section{Introduction}
\label{sec:Introduction}

Timed automata~\cite{AD94} have been introduced in the early 1990's as
a powerful model to reason about (the~correctness~of) real-time
computerized systems. Timed automata extend finite-state automata with
several clocks, which can be used to enforce timing constraints
between various events in the system. They~provide a convenient
formalism and enjoy reasonably-efficient algorithms (e.g.~reachability
can be decided using polynomial space), which explains the enormous
interest that they raised in the community of formal verification.

Hybrid automata~\cite{ACHH93} can be viewed as an extension of timed
automata, involving hybrid variables: those variables can be used to
measure other quantities than time (e.g. temperature, energy
consumption,~...). Their evolution may follow differential equations,
depending on the state of the system. Those variables unfortunately
make the reachability problem undecidable~\cite{HKPV98}, even in the
restricted case of stopwatches (i.e.,~clocks that can be stopped
and restarted).

Weighted (or priced) timed automata~\cite{ATP01,BFH+01} have been
proposed in the early 2000's as an intermediary model for modelling
resource-consumption or -allocation problems in real-time systems
(\eg~optimal scheduling~\cite{BLR05b}). \Cref{fig-exintro} displays an
example of a weighted timed automaton, modelling aircrafts~(left) that
have to land on runways~(right).  In~(single-variable) weighted timed
automata, each location carries an integer, which is the rate by which
the hybrid variable (called \emph{cost} variable hereafter) increases
when time elapses in that location. Edges may also carry a value,
indicating how much the cost increases when crossing this edge. Notice
that, as~opposed to (linear) hybrid systems, the constraints on edges
(a.k.a.~\emph{guards}) only involve clock variables:
the~extra quantitative information measured by the \emph{cost} is just an
observer of the system, and it does not interfere with the behaviors of the
system.

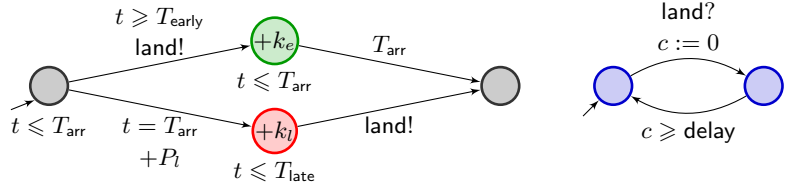
\begin{figure}[t]
\def\arr{\textsf{arr}}
\def\early{\textsf{early}}
\def\late{\textsf{late}}
\def\land{\textsf{land}}
\def\delay{\textsf{delay}}
  \centering
  \begin{tikzpicture}

  \begin{scope}[xshift=5cm]    
    % runway module
    \draw (0,0) node[rond,colbleu] (l0) {};
    \draw (2,0) node[rond,colbleu] (l1) {};
    \draw (l0) edge[draw,line width=.4pt,-latex',bend left] 
      node[above] {$\genfrac{}{}{0pt}{0}{\land?}{c:=0}$} (l1);
    \draw (l1) edge[draw,line width=.4pt,-latex',bend left] 
      node[below] {$c\geq \delay$} (l0);
    \draw[latex'-] (l0.-135) -- +(-135:3mm);
    \end{scope}

    \begin{scope}[xshift=-2.5cm,yscale=.6,xscale=1.5]
    % plane module 
    \draw (0,0) node [rond,colgris] (l'0) {} 
      node[below=3mm] {$t\leq T_{\arr}$};    
    \draw[latex'-] (l'0.-135) -- +(-135:3mm);
    \draw (2,1) node [rond,colvert,inner sep=0pt] (l'1) {$+k_e$} 
      node[below=3mm] {$t\leq T_{\arr}$};    
    \draw (2,-1) node [rond,colrouge,inner sep=0pt] (l'2) {$+k_l$} 
      node[below=3mm] {$t\leq T_{\late}$};    
    \draw (4,0) node [rond,colgris] (l'3) {};    

    \draw (l'0) edge[draw,line width=.4pt,-latex'] 
      node[above] {$\genfrac{}{}{0pt}{0}{t\geq T_{\early}}{\land!}$} (l'1);
    \draw (l'0) edge[draw,line width=.4pt,-latex'] 
      node[below] {$\genfrac{}{}{0pt}{0}{t=T_{\arr}}{+P_l}$} (l'2);
    \draw (l'1) edge[draw,line width=.4pt,-latex'] 
      node[above] {$T_{\arr}$} (l'3);
    \draw (l'2) edge[draw,line width=.4pt,-latex'] 
      node[below] {$\land!$} (l'3);
    \end{scope}

    \end{tikzpicture} 
  \caption{A (simplified) model of the Aircraft Landing System~\cite{LBB+01}:
  aircrafts (left) have an optimal landing time~$T_\arr$ within a possible landing
  interval $[T_{\early},T_{\late}]$. The aircraft can speed up (which incurs
  some extra cost, modelled by~$k_e$) to land earlier than~$T_{\arr}$, or can
  delay landing (which also entails some penalties, modelled by~$P_l$
  and~$k_l$). Some delay has to occur between consecutive landings on the same
  runway, because of wake turbulence; this is taken into account by the model
  of the runways (right).}
\label{fig-exintro}
\end{figure}

Optimal cost for reaching a target, and associated almost-optimal schedules,
can be computed in weighted timed automata~\cite{ATP01,BFH+01,BBBR07}. The
proofs of these results rely on region-based algorithms (either priced
regions~\cite{BFH+01}, or corner-point refinements~\cite{ATP01,BBBR07}).
Similarly to standard regions for timed automaton~\cite{AD94}, such
refinements of regions are not adapted to a real implementation. A~symbolic
approach based on priced zones has been proposed in~\cite{LBB+01}, and later
improved in~\cite{RLS06}. Zones are a standard symbolic representation for
the analysis of timed-automata~\cite{BY03,bouyer04}, and priced zones extend
zones with cost functions recording, for each state of the zone, the
optimal cost to reach that state. 
A~forward
computation in a weighted timed automaton can be performed using priced
zones~\cite{LBB+01}: 
it~is based on a single-step \Post-operation on priced zones, and on
a basic inclusion test between priced zones 
(inclusion of zones, and point-to-point
comparison of the cost function on the smallest zone).
The~algorithmics has been improved in~\cite{RLS06}, and termination and
correctness of the forward computation is obtained for weighted timed
automata \emph{in which all clocks are bounded}. 
Bounding clocks of a weighted timed automaton can always be achieved (while
preserving the cost), but it may increase the size of the model.
We~believe that a better
solution is possible: for timed automata and zones, a~lot of
efforts have been put into the development of sound abstractions adapted to the
forward exploration of timed automata, ensuring termination of the
model-checking algorithms without bounding
clocks~\cite{BY03,BBFL03,BBLP05,HKSW11,HSW12}.  

\bigskip In this paper, we build on~\cite{LBB+01,RLS06}, and extend
the symbolic algorithm to general weighted timed automata, without
artificially bounding the clocks of the model. The keypoint of our algorithm
is an inclusion test between \emph{abstractions} of priced zones, computable
from the (non abstracted) priced zones themselves. It~can be seen as a priced
counterpart of a recently-developed inclusion test over standard
zones~\cite{HSW12}: it~compares abstractions of zones without explicitly
computing them, which has shown its efficiency for the
analysis of timed automata. We~prove that the forward-exploration algorithm
using priced zones with this inclusion test indeed computes the optimal cost,
and that it terminates.
We~also propose an algorithm to effectively
decide inclusion of priced zones. 
We~implemented our algorithm, and we compare it with that of~\cite{RLS06}. 

\paragraph{Related work.}
The approach of~\cite{LBB+01,RLS06} is the closest related work. Our algorithm
applies to a more general class of systems (unbounded clocks), and always
computes fewer symbolic states on bounded models
(see~Remark~\ref{rmk:inclusion});
also, 
while the inclusion test of~\cite{RLS06} reduces to a mincost
flow problem, for which efficient algorithms exist, 
we~had to develop specific algorithms for checking our new inclusion relation.
We~develop this comparison with~\cite{RLS06} further in~\Cref{sec:exp},
including experimental results.

Our algorithm can be used in particular to compute best- and
worst-case execution times. Several tools propose WCET analysis based
on timed automata: TIMES~\cite{AFM+03} uses binary-search to evaluate
WCET, while Uppaal~\cite{GELP10} and METAMOC~\cite{metamoc}
rely on the algorithm of~\cite{RLS06} mentioned
above; in~particular they require bounded
clocks to ensure termination. A~tentative workaround to this problem has been
proposed 
in~\cite{ARF14}, but we~are uncertain about its correctness (as~we~explain
with a counter-example in~\Cref{app:australiens}).

\medskip
By lack of space, all proofs are given in a separate appendix.

\section{Weighted timed automata}
\label{sec:wta}

In this section we define the weighted (or priced) timed automaton
model, that has been proposed in 2001 for representing resource
consumption in real-time systems~\cite{ATP01,BFH+01}

We consider as time domain the set $\IR_{\geq 0}$ of non-negative
reals. We let $X$ be a finite set of variables, called \emph{clocks}.
A \emph{(clock) valuation} over $X$ is a mapping $v\colon X
\rightarrow \IR_{\geq 0}$ that assigns to each clock a time value. The
set of all valuations over~$X$ is denoted $\IR_{\geq 0}^X$. Let $t \in
\IR_{\geq 0}$, the valuation $v+t$ is defined by $(v+t)(x)= v(x)+t$
for every $x\in X$. For $Y \subseteq X$, we denote by $[Y \leftarrow
0]v$ the valuation assigning $0$ (respectively $v(x)$) to every $x\in
Y$ (respectively $x\in X\setminus Y$). We~write~$\mathbf{0}_X$ for the
valuation which assigns $0$ to every clock $x \in X$.

The set of \emph{clock constraints} over $X$, denoted $\C(X)$, is
defined by the grammar
$g\coloncolonequals x \sim c\ \mid\ g \wedge g$,
where $x \in X$ is a clock, $c \in \IN$, and $\mathord\sim \in
\{\mathord<,\mathord\leq,\mathord=,\mathord\geq,\mathord>\}$.

Clock constraints are evaluated over clock valuations, and the
satisfaction relation, denoted $v \models g$, is defined inductively
by $v \models (x \sim c)$ whenever $v(x) \sim c$, and $v \models g_1
\wedge g_2$ whenever $v \models g_1$ and $v \models g_2$.

\begin{definition}
  A \emph{weighted timed automaton} is a tuple $\A =
  (X,L,\ell_0,\Goal,\penalty1000 E,\weight)$ where $X$ is a finite set of clocks,
  $L$ is a finite set of locations, $\ell_0 \in L$ is the initial
  location, $\Goal \subseteq L$ is a set of goal (or final) locations,
  $E \subseteq L \times \C(X) \times 2^X \times L$ is a finite set of
  edges (or transitions), and $\weight: L \cup E \rightarrow \IZ$ is a
  \emph{weight function} which assigns a value to each location and to
  each transition.
\end{definition}
In the above definition, if we omit the \weight\ function, we
obtain the well-known model of \emph{timed automata}~\cite{AD90,AD94}.
The semantics of a weighted timed automaton is that of the underlying
timed automaton, and 
the \weight\ function provides 
quantitative information about the moves and executions of the system.

The semantics of a timed automaton $\A =
(X,L,\ell_0,\Goal,E)$ 
is given as a timed transition system
$\mathcal{T}_{\A} = (S,s_0,\rightarrow)$ where $S = L \times \IR_{\geq
  0}^X$ is the set of configurations (or states) of $\A$, $s_0 =
(\ell_0,\mathbf{0}_X)$ is the initial configuration, and $\rightarrow$
contains two types of moves:
\begin{itemize}
\item delay moves: $(\ell,v) \xrightarrow{t} (\ell,v+t)$ if
  $t \in \IR_{\geq 0}$;
\item discrete moves: $(\ell,v) \xrightarrow{e} (\ell',v')$
  if there exists an edge $e = (\ell,g,Y,\ell')$ in $E$ such that $v
  \models g$, $v' = [Y \leftarrow 0]v$.
\end{itemize}

A run $\varrho$ in $\A$ is a finite sequence of moves in the
transition system ${\cal T}_{\A}$, with a strict alternation of delay
moves (though possibly $0$-delay moves) and discrete moves. In the
following, we may write a run $\varrho = s \xrightarrow{t_1} s'_1
\xrightarrow{e_1} s_1 \xrightarrow{t_2} s'_2 \xrightarrow{e_2} s_2
\ldots$ more compactly as $\varrho = s \xrightarrow{t_1,e_1} s_1
\xrightarrow{t_2,e_2} s_2 \cdots$. 
If $\varrho$ ends in some $s = (\ell,v)$ with
$\ell \in \Goal$, we say that $\varrho$ is accepting. 
For a configuration $s\in S$, we~write $\Runs(\A,s)$
the set of accepting runs 
that start in $s$.

In the following we will assume timed automata are non-blocking, that
is, from every reachable configuration $s$, there exist some delay
$t$, some edge $e$ and some configuration $s'$ such that $s
\xrightarrow{t,e} s'$ in $\A$.

\medskip We can now give the semantics of a weighted timed automaton
$\A = (X,L,\ell_0,\penalty1000\relax \Goal,E,\weight)$.  The value
$\weight(\ell)$ given to location $\ell$ represents a cost rate, and
delaying $t$ time units in a location $\ell$ will then cost
$t\cdot\weight(\ell)$. The value $\weight(e)$ given to edge $e$
represents the cost of taking that edge.  Formally, the cost of the
two types of moves 
is defined as follows:
\[
\left\{\begin{array}{l} \cost\left((\ell,v)
      \xrightarrow{t} (\ell,v+t)\right) = t \cdot \weight(\ell) \\
    \cost\left((\ell,v) \xrightarrow{e} (\ell',v')\right) = \weight(e)
  \end{array}\right.
\]
A \emph{run} $\varrho$ of a weighted timed automaton is a run of the
underlying timed automaton.
The cost of $\varrho$, denoted $\cost(\varrho)$, is
the sum of the costs of all the simple moves along $\varrho$.

%% Version avec 2 figures...
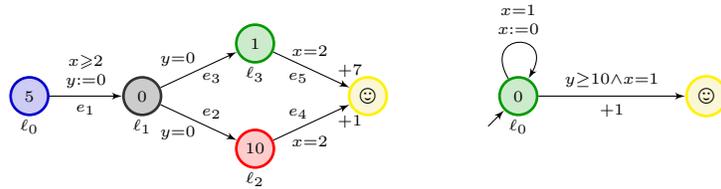
\begin{figure}[ht]
  \centering
  \begin{tikzpicture}
    \everymath{\scriptstyle} 
 \begin{scope}[yscale=.7]
   \draw (-.5,0) node[rond,colbleu,inner sep=0pt] (l0) {$5$} 
     node[below=2mm] {$\ell_0$};
   \draw (1,0) node[rond,colgris,inner sep=0pt] (l1) {$0$} 
     node[below=2mm] {$\ell_1$};
   \draw (2.5,-1) node[rond,colrouge,inner sep=0pt] (l2) {$10$}
      node[below=2mm] {$\ell_2$};
   \draw (2.5,1) node[rond,colvert,inner sep=0pt] (l3) {$1$}
      node[below=2mm] {$\ell_3$};
   \draw (4,0) node[rond,coljaune,inner sep=0pt] (l4) {\smiley} ;
     
   \draw (l0) edge[-latex',line width=.4pt,draw] 
     node[above] {$\genfrac{}{}{0pt}{1}{x\geq 2}{y:=0}$} node[below] {$e_1$}
     (l1); 
   \draw (l1) edge[-latex',line width=.4pt,draw] 
     node[below left=-2pt] {$y=0$} node[above right=-2pt] {$e_2$} (l2); 
   \draw (l1) edge[-latex',line width=.4pt,draw] 
     node[above left=-2pt] {$y=0$} node[below right=-2pt] {$e_3$} (l3); 
   \draw (l2) edge[-latex',line width=.4pt,draw] 
    node[pos=.2,below right=-2pt] {$x=2$} node[pos=.8,below right=-2pt] {$+1$} 
    node[above left=-2pt] {$e_4$} (l4); 
   \draw (l3) edge[-latex',line width=.4pt,draw] 
    node[pos=.2,above right=-2pt] {$x=2$} node[pos=.8,above right=-2pt] {$+7$} 
    node[below left=-2pt] {$e_5$} (l4); 
 \end{scope}

  \begin{scope}[xshift=6cm]
    \draw (0,0) node[rond,colvert,inner sep=0pt] (l0) {$0$} 
      node[below=2mm] {$\ell_0$};
    \draw (2.5,0) node [rond,coljaune,inner sep=2pt] (lf) {$\smiley$};
      \draw [latex'-] (l0.-135) -- +(-135:3mm);
      \draw [-latex'] (l0) .. controls +(120:1cm) and +(60:1cm)
      .. (l0) node [midway,above] {$\genfrac{}{}{0pt}{1}{x=1}{x:=0}$};
      \draw [-latex'] (l0) -- (lf) node [midway,above]
      {$y \ge 10 \wedge x=1$} node [midway,below] {$+1$};

  \end{scope}
    \end{tikzpicture}
  \caption{Examples of weighted timed automata}
  \label{fig:ex1}\label{fig:non-terminate}
\end{figure}
\begin{example}
  \label{ex:1}
  We consider the weighted timed automaton $\A$ depicted in
  \Cref{fig:ex1}~(left). When a weight is non-null, we add a
  corresponding decoration to the location or to the
  transition. A~possible run in~$\A$~is:
  \[
  \varrho\ =\ (\ell_0,0) \xrightarrow{0.1} (\ell_0,0.1)
  \xrightarrow{e_1} (\ell_1,0.1) \xrightarrow{e_3} (\ell_3,0.1)
  \xrightarrow{1.9} (\ell_3,2) \xrightarrow{e_5} (\smiley,2)
  \]
  The cost of $\varrho$ is $\cost(\varrho) = 5 \cdot 0.1 + 1 \cdot 1.9 + 7 =
  9.4$ (the cost per time unit is~$5$ in~$\ell_0$, $1$~in~$\ell_3$, and the
  cost of transition $e_5$ is~$7$).
\end{example}

\subsection*{The optimal-reachability problem}

For this model we are interested in the optimal-reachability problem,
and in the synthesis of almost-optimal schedules.  Given a weighted
timed automaton $\A = (X,L,\ell_0,\Goal,E,\weight)$, the optimal cost
from $s = (\ell,v)$ is defined as:
\[
\optcost_\A(s) = \inf_{\varrho \in \Runs(\A,s)} \cost(\varrho)
\]
If $\epsilon>0$, a run $\varrho \in \Runs(\A,s)$ is
\emph{$\epsilon$-optimal} whenever $\cost(\varrho) \le \optcost_\A(s)
+ \epsilon$.

We are interested in $\optcost_\A(s_0)$, simply written as
$\optcost_\A$, when $s_0$ is the initial configuration of $\A$.  It is
known that $\optcost_\A$ can be computed in polynomial
space~\cite{ATP01,BFH+01,BBBR07}, and that almost-optimal schedules
(that is, for every $\epsilon>0$, $\epsilon$-optimal schedules) can
also be computed.

The solutions developed in the aforementioned papers are based on
refinements of regions, and a symbolic approach has been proposed
in~\cite{LBB+01,RLS06}, which extends standard zones with cost
functions: this algorithm computes the optimal cost in weighted timed
automata with nonnegative weights, assuming the underlying timed
automata are \emph{bounded}, that is, there is a constant~$M$ such
that no clock can go above~$M$.  This is without loss of generality
w.r.t. optimal cost, since any weighted timed automaton can be
transformed into a bounded weighted timed automaton with the same
optimal cost; it may nevertheless increase the size of the model, and
more importantly of the state-space which needs to be explored (it can
be exponentially larger). We believe that a better solution is
possible: for timed automata and zones, a lot of efforts have been put
into the development of sound abstractions adapted to the forward
exploration of timed automata, ensuring termination of the
model-checking algorithm without bounding
clocks~\cite{BY03,BBFL03,BBLP05,HKSW11,HSW12}.

Building on~\cite{LBB+01,RLS06}, we extend the
symbolic algorithm to general weighted timed automata, without
assuming bounded clocks. The keypoint of our
algorithm is an inclusion test between \emph{abstractions} of priced
zones, computable from the (non abstracted) priced zones
themselves. It~can be seen as a priced counterpart of a
recently-developed inclusion test over standard zones~\cite{HSW12},
which compares their abstractions without explicitly computing them,
and has shown its efficiency for the analysis of timed
automata. We~prove that the symbolic algorithm using priced zones and
this inclusion test indeed computes the optimal cost, and that it
terminates.

\section{Symbolic algorithm}
\label{sec:algo}

In this section we briefly recall the approach of~\cite{LBB+01,RLS06},
and explain how we extend it to the general model, explaining which
extra operation is required. The rest of the paper is devoted to
proving correctness, effectiveness and termination of our algorithm.

\subsection{The symbolic representation:
priced zones}

Let $X$ be a finite set of clocks. A \emph{zone} is a set of valuations
defined by a generalized constraint over clocks, given by the grammar
$\gamma~\coloncolonequals~ x \sim c\ \mid\ x-y \sim c \ \mid\ \gamma
\wedge \gamma$,
where $x, y \in X$ are clocks, $c \in \IZ$, and $\mathord\sim \in
\{\mathord<,\mathord\leq,\mathord=,\mathord\geq,\mathord>\}$. 
Zones and their representation using Difference
Bound Matrices (DBMs~in short) are the standard symbolic data
structure used in tools implementing timed
systems~\cite{BY03,bouyer04}.

To deal with weighted timed automata, zones have been extended to
priced zones in~\cite{LBB+01}. A~\emph{priced zone} is a pair
$\mathcal{Z} = (Z,\zeta)$ where $Z$ is a zone, and $\zeta\colon
\IR_{\geq 0}^X \to \IR$ is an affine function. In a symbolic state
$(\ell,\mathcal{Z})$, the cost function $\zeta$ is meant to represent
the optimal cost so far (that is, $\zeta(v)$ is the optimal cost so
far for reaching configuration~$(\ell,v)$).  In~\cite{LBB+01}, it~is
shown how one can simply represent priced zones, and how these can be
used in a forward-exploration algorithm. The algorithm is shown
as~\Cref{algo:optreach}, and we parametrize it by an inclusion
test~$\preceq$ between priced zones.

\begin{algorithm}[b]

\DontPrintSemicolon

$\Cost \gets \infty$ \; $\Passed \gets \emptyset$ \; $\Waiting \gets
\{ (\ell_0, \mathcal{Z}_0) \}$ \; \While{$\Waiting \neq \emptyset$}{
  select $(\ell, \mathcal{Z})$ from $\Waiting$ \; \If{$\ell \in \Goal$
    and $\mincost(\mathcal{Z}) < \Cost$}{ $\Cost \gets
    \mincost(\mathcal{Z})$ \; } %end if
  \If{for all $(\ell, \mathcal{Z'}) \in \Passed$,
    $\mathcal{Z} \not\preceq  \mathcal{Z}'$}
      { add $(\ell, \mathcal{Z})$ to $\Passed$ \; add
    $\Post(\ell, \mathcal{Z})$ to $\Waiting$ \; } %end if
} %end while
\Return{$\Cost$}

\caption{Symbolic algorithm for optimal cost, with inclusion test~$\preceq$}\label{algo:optreach}
\end{algorithm}

Let $\A = (X,L,\ell_0,\Goal,E,\weight)$ be a weighted timed
automaton. The algorithm makes a forward exploration of $\A$ from
$(\ell_0,\mathcal{Z}_0)$ with $\mathcal{Z}_0=(Z_0,\zeta_0)$, where
$Z_0$ is the initial zone defined by $\bigwedge_{x \in X} x=0$ and
$\zeta_0$ is identically~$0$ everywhere. Then, symbolic successors are
iteratively computed, and when the target location is reached, the
minimal cost given by the priced zone is computed (for a priced zone
$\mathcal{Z} = (Z,\zeta)$, we note $\mincost(\mathcal{Z}) = \inf_{v
  \in Z} \zeta(v)$), and compared to the current optimal value
(variable~$\Cost$).  An~inclusion test between priced zones is
performed, which allows to stop the exploration from
$(\ell,\mathcal{Z})$ when $\mathcal{Z} \preceq \mathcal{Z}'$ and
$(\ell,\mathcal{Z}')$ already appears in the set of symbolic states
that have already been explored.  In~\cite{RLS06}, the algorithm uses
the following inclusion test~$\Subset$, which refines the inclusion
test of~\cite{LBB+01}:
inclusion $\mathcal{Z} \Subset \mathcal{Z}'$ holds whenever $Z
\subseteq Z'$ and $\zeta(v) \ge \zeta'(v')$ for every $v \in Z$.
As~shown in~\cite{RLS06}, this algorithm computes the optimal cost
in~$\A$, provided it terminates, and this always happens when the
weights in $\A$ are nonnegative, and when all clocks in~$\A$ are
bounded.

In~the present paper, we~define a refined inclusion
test~$\sqsubseteq$ between priced zones, which will enforce termination
of~\Cref{algo:optreach} even when clocks are not upper-bounded, and,
to some extent, when costs are negative.

\smallskip We now give some definitions which will allow to state the
correctness of the algorithm. Given a timed automaton $\A$, a location
$\ell$ and a priced zone $\mathcal{Z} = (Z, \zeta)$, we say that
$(\ell, \mathcal{Z})$ is \emph{realized} in~$\A$ whenever for every
valuation $v \in Z$, and for every $\epsilon > 0$, there exists a
run~$\varrho$ from the initial state~$(\ell_0,\mathbf{0}_X)$ to
$(\ell, v)$, such that $\zeta(v) \leq \cost(\varrho) \leq \zeta(v) +
\epsilon$.
For a location $\ell$, a priced zone $\mathcal{Z} = (Z,\zeta)$ and a
run $\varrho$ starting in a configuration $s$, we say that $\varrho$
\emph{ends in} $(\ell,\mathcal{Z})$ if $\varrho$~leads from~$s$ to a
configuration $(\ell, v)$ with $v \in Z$ and $\cost(\varrho) \geq
\zeta(v)$.
The post operation $\Post$ on priced zones used
in~\Cref{algo:optreach} is described in~\cite{LBB+01}. Its~computation
is effective (see~\cite{LBB+01}), and is such that (see~\cite{RLS06}):
\begin{itemize}
\item every $(\ell, \mathcal{Z}) \in \Post^*(\ell_0,\mathcal{Z}_0)$ is
  realized in $\A$;
\item for every run $\varrho$ from a configuration $s$ to a
  configuration $s'$, and every mixed move $\tau$ from $s'$, if
  $\varrho$ ends in $(\ell, \mathcal{Z})$, then $\varrho \tau$ ends in
  an element of $\Post(\ell, \mathcal{Z})$.
\item for every run $\varrho$ from $(\ell_0,\mathbf{0}_X)$, there exists $(\ell,
  \mathcal{Z}) \in \Post^*(\ell_0,\mathcal{Z}_0)$ such that $\varrho$
  ends in $(\ell, \mathcal{Z})$ (this is a consequence of the previous property).
\end{itemize}

The purpose of this work is to propose an inclusion test $\sqsubseteq$
such that the following three properties are satisfied:
\begin{enumerate}
\item \textit{(Termination)} \Cref{algo:optreach} with inclusion test
  $\sqsubseteq$ terminates;
\item \textit{(Soudness w.r.t. optimal reachability)}
  \Cref{algo:optreach} with inclusion test $\sqsubseteq$ computes the
  optimal cost for reaching~$\Goal$;
\item \textit{(Effectiveness)}
  There is an algorithm deciding $\sqsubseteq$ on priced zones.
\end{enumerate}
\noindent
We now present our inclusion test, and show its soundness for optimal
reachability. We~then turn to effectiveness (Sect.~\ref{sec:inclusion}),
and then to termination (Sect.~\ref{sec:termination}).

\subsection{The inclusion test}

Our inclusion test is inspired by the inclusion test on (pure) zones proposed
in~\cite{HSW12}.%
\footnote{Contrary to pure reachability, we cannot use the preorder
  $\preceq_{LU}$ (which distinguishes between lower-bounded constraints and
  upper-bounded constraints)~\cite{BBLP05},
  since it does not preserve optimal cost (not
  even optimal time).} We~start by recalling an equivalence relation on
valuations.
We assume a function $M\colon X \mapsto \IN \cup \{ - \infty \}$ such
that $M(x)$ is larger than any constant against which clock~$x$ is compared
to in the (weighted) timed automata under consideration.  Let~$v$ and
$v'$ be two valuations in $\IR_{\ge 0}^X$. Then, $v \equiv_M v'$ iff
for every clock $x \in X$, either $v(x) = v'(x)$, or $v(x) > M(x)$ and
$v'(x) > M(x)$.  We note $[v]_M$ the equivalence class of $v$ under
$\equiv_M$.

\begin{restatable}{lemma}{eqsamefuture}
\label{lemma:equivalence_same_future}
  If $v \equiv_M v'$, then, for any $\ell \in L$,
  $\optcost_\A(\ell,v) = \optcost_\A(\ell,v')$.
\end{restatable}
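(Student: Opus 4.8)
The plan is to show that $\equiv_M$, lifted to configurations sharing the same location, is a strong bisimulation on $\mathcal{T}_\A$ that moreover preserves the cost of every move, and then to transfer this to whole accepting runs and take infima. Concretely, I will establish that from two $\equiv_M$-equivalent valuations the automaton can mimic each other's delays and discrete steps verbatim, at identical cost.

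\textbf{Step 1 (one-step mimicking).} I would first prove that if $v \equiv_M v'$ then: (i)~for every delay $t \in \IR_{\ge 0}$ we have $v+t \equiv_M v'+t$; and (ii)~for every edge $e = (\ell,g,Y,\ell')$, $v \models g$ iff $v' \models g$, and $[Y\leftarrow 0]v \equiv_M [Y\leftarrow 0]v'$. Claim~(i) is immediate, since adding the same non-negative quantity to two coordinates preserves both their equality and the property of exceeding $M(x)$. In~(ii), the reset part is equally immediate (coordinates in $Y$ become $0$ in both valuations, the others are unchanged). The guard part is the only place where the hypothesis on~$M$ is used: writing $g$ as a conjunction of atomic constraints $x \sim c$ with $c \le M(x)$, if $v(x) = v'(x)$ then $v$ and $v'$ obviously agree on $x \sim c$; and if $v(x), v'(x) > M(x) \ge c$, then $v(x) > c$ and $v'(x) > c$, so $x \sim c$ is satisfied by both when $\mathord\sim \in \{\mathord\ge,\mathord>\}$ and by neither when $\mathord\sim \in \{\mathord<,\mathord\le,\mathord=\}$ (the degenerate case $M(x) = -\infty$, meaning $x$ appears in no guard, is trivial).

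\textbf{Step 2 (cost equality) and Step 3 (runs).} A delay move $(\ell,v) \xrightarrow{t}(\ell,v+t)$ costs $t\cdot\weight(\ell)$, and a discrete move along $e$ costs $\weight(e)$; hence moves mimicked with the same delay (resp.\ the same edge) have exactly the same cost. Then, by induction on its length, every run $\varrho$ from $(\ell,v)$ reaching a configuration $(\ell'',w)$ is matched by a run $\varrho'$ from $(\ell,v')$ using the same sequence of delays and edges, reaching $(\ell'',w')$ with $w \equiv_M w'$ and $\cost(\varrho') = \cost(\varrho)$: Step~1 keeps each move enabled and maintains the invariant $w \equiv_M w'$, and Step~2 gives the cost equality. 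Since $\varrho'$ ends in the same location as $\varrho$, it is accepting iff $\varrho$ is; therefore $\{\cost(\varrho) : \varrho \in \Runs(\A,(\ell,v))\} \subseteq \{\cost(\varrho') : \varrho' \in \Runs(\A,(\ell,v'))\}$, so $\optcost_\A(\ell,v') \le \optcost_\A(\ell,v)$. Exchanging $v$ and $v'$ (using that $\equiv_M$ is symmetric) yields the reverse inequality, hence equality.

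\textbf{Main obstacle.} There is no genuine difficulty beyond Step~1(ii), which is the classical observation that once $M$ dominates every constant occurring in a guard, a clock that has gone above its ceiling carries no information the automaton can ever test; the cost layer adds nothing delicate, since the cost of a move is insensitive to the precise clock values and we mimic moves literally. In particular, this argument never uses non-negativity of $\weight$ — that hypothesis is only needed later, for termination.
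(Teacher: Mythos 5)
Your proposal is correct and follows essentially the same route as the paper's proof: mimic the run verbatim (same delays, same edges) from the equivalent valuation, check by induction that each move stays enabled and that the equivalence is maintained, observe that costs of mimicked moves coincide, and conclude by taking infima and using symmetry. The paper compresses all of this into one sentence (invoking that $\equiv_M$ is a strong timed bisimulation, à la~\cite{BBLP05}), whereas you spell out the guard-preservation argument, which is exactly the step the paper leaves implicit.
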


We now define our inclusion test for two priced zones
$\mathcal{Z} = (Z,\zeta)$ and $\mathcal{Z}' = (Z',\zeta')$; it~is
parameterized by~$M$, which gives upper bounds on clocks:
\[
\mathcal{Z} \sqsubseteq_M \mathcal{Z}'\ \text{iff}\ \forall v \in Z,\
\forall \epsilon>0,\ \exists v' \in Z'\ \text{s.t.}\ v \equiv_M v'\
\text{and}\ \zeta'(v') \le \zeta(v) + \epsilon.
\]

\begin{restatable}{theorem}{theosoundness}
  \label{theo:soundness}
  When using $\sqsubseteq_M$, provided~\Cref{algo:optreach}
  terminates, it is sound w.r.t. optimal reachability (the returned
  cost is the optimal one).
\end{restatable}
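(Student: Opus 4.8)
The plan is to show that the returned value \(\Cost\) equals \(\optcost_\A\) by establishing two inequalities: \(\Cost \ge \optcost_\A\) (the algorithm never underestimates) and \(\Cost \le \optcost_\A\) (the algorithm does not miss any cheap run). The first direction is the easy one: every \((\ell,\mathcal{Z})\) that is ever added to \(\Passed\) lies in \(\Post^*(\ell_0,\mathcal{Z}_0)\), hence by the first property of \(\Post\) recalled in the excerpt it is realized in~\(\A\); so whenever \(\ell\in\Goal\) and we update \(\Cost\) to \(\mincost(\mathcal{Z})\), there genuinely is a run reaching a goal configuration with cost arbitrarily close to that value, so \(\Cost \ge \optcost_\A\) throughout the execution.

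For the second direction, fix \(\epsilon>0\) and an \(\epsilon\)-optimal accepting run \(\varrho\) from \((\ell_0,\mathbf 0_X)\) to some \((\ell,v)\) with \(\ell\in\Goal\). By the third property of \(\Post\), there is \((\ell,\mathcal{Z})\in\Post^*(\ell_0,\mathcal{Z}_0)\) in which \(\varrho\) ends, so \(v\in Z\) and \(\cost(\varrho)\ge\zeta(v)\ge\mincost(\mathcal{Z})\). The subtlety is that \((\ell,\mathcal{Z})\) need not have been explored by the algorithm: it may have been discarded because \(\mathcal{Z}\sqsubseteq_M\mathcal{Z}'\) for some \((\ell,\mathcal{Z}')\in\Passed\), or its ancestor along \(\Post^*\) was discarded. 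I would prove the following invariant by induction along a \(\Post\)-path: for every \((\ell,\mathcal{Z})\in\Post^*(\ell_0,\mathcal{Z}_0)\) with \(\mathcal{Z}=(Z,\zeta)\), every \(v\in Z\) and every \(\delta>0\), there exist \((\ell,\mathcal{Z}')\in\Passed\) (at the end of the run of the algorithm) with \(\mathcal{Z}'=(Z',\zeta')\) and a valuation \(v'\in Z'\) such that \(v\equiv_M v'\) and \(\zeta'(v')\le\zeta(v)+\delta\). The base case \((\ell_0,\mathcal{Z}_0)\) is immediate since either \((\ell_0,\mathcal{Z}_0)\) is added to \(\Passed\) or it was subsumed by something already there, and \(\sqsubseteq_M\) unfolds to exactly the required statement. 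For the inductive step, take \((\ell_1,\mathcal{Z}_1)\in\Post(\ell,\mathcal{Z})\): if \(\mathcal{Z}\) itself was added to \(\Passed\) then \(\mathcal{Z}_1\) was added to \(\Waiting\) and is later either passed or subsumed, and one uses the \(\Post\)-compatibility together with the definition of \(\sqsubseteq_M\); if instead \(\mathcal{Z}\) was subsumed by \(\mathcal{Z}''\in\Passed\), one needs to "transfer" the run-endpoint from \(\mathcal{Z}\) to \(\mathcal{Z}''\) and then apply \(\Post\) to \(\mathcal{Z}''\) — here is where the fact that \(\equiv_M\)-equivalent valuations have the same future (Lemma~\ref{lemma:equivalence_same_future}, together with the analogous statement that \(\Post\) respects \(\equiv_M\)) is essential, since it guarantees that a mixed move available from \(v\) is mirrored, up to \(\equiv_M\) and an arbitrarily small cost slack, by a mixed move from the \(\equiv_M\)-equivalent valuation in \(\mathcal{Z}''\), landing in some element of \(\Post(\ell,\mathcal{Z}'')\) already dealt with by the induction hypothesis applied one step further along.

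Once the invariant is established, apply it to the \((\ell,\mathcal{Z})\) in which the \(\epsilon\)-optimal \(\varrho\) ends, with \(\delta=\epsilon\): we get \((\ell,\mathcal{Z}')\in\Passed\) with \(\ell\in\Goal\) and \(\mincost(\mathcal{Z}')\le\zeta'(v')\le\zeta(v)+\epsilon\le\cost(\varrho)+\epsilon\le\optcost_\A+2\epsilon\). Since \((\ell,\mathcal{Z}')\) was passed and \(\ell\in\Goal\), the algorithm executed the update of \(\Cost\), so \(\Cost\le\mincost(\mathcal{Z}')\le\optcost_\A+2\epsilon\); letting \(\epsilon\to 0\) gives \(\Cost\le\optcost_\A\), which combined with the first direction yields \(\Cost=\optcost_\A\). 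The main obstacle, as indicated above, is the inductive step when an intermediate symbolic state is subsumed: one must propagate run-endpoints through a chain of subsumptions of unbounded length, and the right bookkeeping is to phrase the induction on the \(\Post\)-distance in the (possibly discarded) forward graph while always concluding with a state that actually ended up in \(\Passed\), carefully tracking how the \(\epsilon\)-slacks accumulate (finitely many steps, so a uniform splitting of \(\delta\) suffices) and invoking Lemma~\ref{lemma:equivalence_same_future} to license replacing a valuation by an \(\equiv_M\)-equivalent one at each transfer.
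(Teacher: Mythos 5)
Your overall strategy matches the paper's: the lower bound $\Cost \ge \optcost_\A$ follows from realizability of every element of $\Post^*(\ell_0,\mathcal{Z}_0)$, and the upper bound from showing that discarded symbolic states are covered by $\Passed$ via the definition of $\sqsubseteq_M$ together with \Cref{lemma:equivalence_same_future}. The gap is in the bookkeeping of the upper bound. Your invariant quantifies over \emph{all} $v_1\in Z_1$ for $(\ell_1,\mathcal{Z}_1)\in\Post(\ell,\mathcal{Z})$, and its inductive step in the case where $\mathcal{Z}$ was subsumed by some $\mathcal{Z}''\in\Passed$ silently requires a fact that is neither among the three listed properties of $\Post$ nor proved by you: that an arbitrary $v_1\in Z_1$ admits a \emph{near-optimal predecessor inside the parent zone}, i.e.\ some $v\in Z$ and mixed move $\tau$ with $\zeta(v)+\cost(\tau)\le\zeta_1(v_1)+\delta$. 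Without it, the chain of inequalities breaks: after transferring to $v''\equiv_M v$ in $Z''$ (with $\zeta''(v'')\le\zeta(v)+\delta$), mirroring $\tau$, and invoking the ``ends in'' property, you only bound the cost function of the relevant element of $\Post(\ell,\mathcal{Z}'')$ by $\zeta(v)+\cost(\tau)+2\delta$, and the natural inequality goes the wrong way ($\zeta_1(v_1)\le\zeta(v)+\cost(\tau)$, not $\ge$). Realizability of $(\ell_1,\mathcal{Z}_1)$ gives a cheap concrete run to $(\ell_1,v_1)$, but nothing forces its penultimate configuration to lie in $Z$ with cost-so-far close to $\zeta(v)$. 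In effect your induction needs $\Post$ to be monotone with respect to $\sqsubseteq_M$, which is precisely the crux; the phrase ``the analogous statement that $\Post$ respects $\equiv_M$'' names the missing lemma but does not discharge it.

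The paper avoids this by never inverting $\Post$: it fixes a single $\epsilon/3$-optimal concrete run $\rho$, follows the symbolic states that the successive prefixes of $\rho$ \emph{end in} (so the inequality $\cost(\text{prefix})\ge\zeta(v)$ comes for free from the ``ends in'' relation), and at the first prefix whose symbolic state was not added to $\Passed$ it splices in a cheap run to the matching $v'\equiv_M v$ (by realizability of the subsuming state) followed by a copy of the remaining suffix (by the bisimulation argument in the proof of \Cref{lemma:equivalence_same_future}); it then recurses on the strictly shorter suffix. If you re-anchor your induction to the prefixes of one fixed run in this way, your argument goes through using only the stated properties of $\Post$. Finally, you implicitly assume $\optcost_\A$ is finite: when $\optcost_\A=-\infty$ there is no $\epsilon$-optimal run, and the argument must be rerun with an arbitrary bound $K$ in place of $\optcost_\A+\epsilon$; the case $\optcost_\A=+\infty$ is trivial but should be stated.
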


\begin{remark}\label{rmk:inclusion}
  Remember that the inclusion test~$\Subset$ of~\cite{RLS06}
  requires $Z \subseteq Z'$ and, for~every $v \in Z$, $\zeta(v) \ge
  \zeta'(v)$. It~is easily seen that $\mathcal{Z} \Subset
  \mathcal{Z}'$ implies $\mathcal{Z} \sqsubseteq_M \mathcal{Z}'$ for
  any~$M$; hence the branches are always stopped earlier in our
  algorithm (which uses $\sqsubseteq_M$) than in the original
  algorithm of~\cite{RLS06} (which uses $\Subset$).
  Moreover, 
  $\Subset$~does not ensure termination of the forward exploration
  when clocks are not bounded: on~the automaton of
  \Cref{fig:non-terminate}~(right), where the optimal time to reach
  the right state is~$10$, the forward algorithm successively computes
  zones $x \le 1 \wedge n \le y-x \le n+1$, for every integer~$n$. Any
  two such zones are always incomparable (for~$\Subset$).
\end{remark}

\section{Effective inclusion check}
\label{sec:inclusion}

In this section we show that we can effectively check the inclusion
test~$\sqsubseteq_M$ of priced zones. For the rest of this section, we~fix two
priced zones $\mathcal{Z} = (Z,\zeta)$ and $\mathcal{Z}' = (Z',\zeta')$, and a
function~$M$. To improve readability, we~write~$\equiv$ and~$\sqsubseteq$ in place
of~$\equiv_M$ and~$\sqsubseteq_M$.
\subsection{Formulation of the optimization problem}
\label{subsec:opt}

We~first express the inclusion of the two priced zones as an
optimization problem.

\begin{restatable}{lemma}{optimisation}
  \label{lemma:inclusion_sup}
  $\mathcal{Z} \sqsubseteq \mathcal{Z}' \iff \sup_{v \in Z}
  \inf_{\substack{v' \in Z'\\v' \equiv v}} \zeta'(v') - \zeta(v) \leq
  0$.
\end{restatable}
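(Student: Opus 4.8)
The plan is to unfold the definition of $\sqsubseteq_M$ and rewrite the nested quantifiers as a sup-inf expression, then verify both directions of the equivalence carefully, with particular attention to how the $\epsilon$ in the definition interacts with the infimum.

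First I would recall the definition: $\mathcal{Z} \sqsubseteq \mathcal{Z}'$ holds iff for every $v \in Z$ and every $\epsilon > 0$, there exists $v' \in Z'$ with $v' \equiv v$ and $\zeta'(v') \le \zeta(v) + \epsilon$. The key observation is that "for every $\epsilon > 0$ there exists $v'$ with $\zeta'(v') \le \zeta(v) + \epsilon$" is, by definition of infimum, exactly the statement $\inf_{v' \in Z',\ v' \equiv v} \zeta'(v') \le \zeta(v)$. One subtlety to handle: the set $\{v' \in Z' : v' \equiv v\}$ may be empty, in which case the infimum is $+\infty$ and the inequality fails — which matches the fact that the inclusion test also fails for that $v$ (no witness $v'$ exists). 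So after this rewriting, $\mathcal{Z} \sqsubseteq \mathcal{Z}'$ is equivalent to: for every $v \in Z$, $\inf_{v' \in Z',\ v' \equiv v} \zeta'(v') - \zeta(v) \le 0$.

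Then I would take the supremum over $v \in Z$: a family of quantities is all $\le 0$ if and only if their supremum is $\le 0$. This gives $\sup_{v \in Z} \bigl(\inf_{v' \in Z',\ v' \equiv v} \zeta'(v') - \zeta(v)\bigr) \le 0$, which is the claimed right-hand side (noting that $-\zeta(v)$ is a constant with respect to the inner infimum, so it can be pushed inside or outside freely). For the forward direction, termination is immediate from these two equivalences read left to right; for the backward direction, one reads them right to left, and given $v \in Z$ and $\epsilon > 0$, the bound $\inf_{v' \equiv v} \zeta'(v') \le \zeta(v) < \zeta(v) + \epsilon$ furnishes (by definition of infimum, since $\zeta(v) + \epsilon$ strictly exceeds the infimum) some $v' \in Z'$ with $v' \equiv v$ and $\zeta'(v') \le \zeta(v) + \epsilon$.

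I do not expect a serious obstacle here — this lemma is essentially a reformulation and the proof is a short chain of logical equivalences. The only point requiring a word of care is the empty-fiber case and the convention $\inf \emptyset = +\infty$, so that the sup-inf expression correctly evaluates to $+\infty > 0$ precisely when some $v \in Z$ has no $\equiv$-equivalent witness in $Z'$; I would state this convention explicitly so the statement is unambiguous.
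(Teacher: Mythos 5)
Your proof is correct and follows essentially the same route as the paper's: unfold the definition of $\sqsubseteq$, use the characterization of the infimum (with the convention $\inf\emptyset=+\infty$ handling the case where no $\equiv$-equivalent witness exists), and pass to the supremum over $v\in Z$. The only detail the paper mentions that you omit is the trivial case $Z=\emptyset$, where the supremum is $-\infty$ and both sides hold vacuously.
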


Note that $\mathcal{Z} \sqsubseteq \mathcal{Z}'$ already requires some
relation between zones $Z$ and~$Z'$: indeed, for the above inclusion
to hold, it should be the case that for every $v \in Z$, there exists
some $v' \in Z'$ such that $v \equiv v'$.  Interestingly, this
corresponds to the test on (unpriced) zones developed in~\cite{HSW12}
(with~$L=U=M$); this can be done efficiently (in time quadratic in the
number of clocks) as a preliminary test~\cite[Theorem 34]{HSW12}.

\begin{remark}
  The constraint $v \equiv v'$ is not convex, and we have a bi-level
  optimization problem to solve.
  Hence common techniques for convex optimization, such as
  dualization~\cite{BV04}, do~not directly apply to the above
  problem. Still, it~is possible to transform it into finitely many
  so-called \emph{generalized semi-infinite optimization problems}
  (GSIPs)~\cite{RS01} (using~$Z_Y$'s as defined later in this
  section).  As~far as we know, such problems do not have dedicated
  efficient algorithmic solutions. We~thus propose a more direct
  solution, that benefits from the specific structure of our problem
  (see~for instance~\Cref{sec:righty}); it~provides a feasible way to
  solve our optimization problems, hence to decide $\sqsubseteq$ on
  priced zones.
\end{remark}

In order to compute the above optima, we transform our problem into a
finite number of optimization problems that are easier to solve.
Let $Y \subseteq X$.  A~zone~$Z$ is $M$-bounded on~$Y$ if, for every
$v \in Z$, $\{ x \mid v(x) \leq M(x) \} = Y$.  We note $Z_Y$ the
restriction of~$Z$ to its $M$-bounded-on-$Y$ component:
$Z_Y = Z \cap \bigcap_{x \in Y} (x \leq M(x)) \cap \bigcap_{x \notin
  Y} (x > M(x))$.  Note that $Z_Y$ may be empty, and that the
family~$(Z_Y)_{Y\subseteq X}$ forms a partition of~$Z$. We also define
$\mathcal{Z}_Y$ as the priced zone $(Z_Y,\zeta)$. We~define the
natural projection $\pi_Y \colon \IR_{\ge 0}^X \to \IR_{\ge 0}^Y$,
which associates with $v \in \IR_{\ge 0}^X$ the valuation $v' \in
\IR_{\ge 0}^Y$ that coincides with~$v$ on~$Y$.

\begin{figure}[tb]
\colorlet{vert}{green!80!blue}
\colorlet{rouge}{red}
\begin{minipage}[t]{.48\linewidth}
\centering
\begin{tikzpicture}[scale=.4]
\draw[latex'-latex'] (11,0) node[right] {$x$} -| (0,9) node[above] {$y$};
\draw[dashed] (8,0) node[below] {$M(x)$} -- +(0,9);
\draw[dashed] (0,4) node[left] {$M(y)$} -- +(11,0);
\draw[draw=vert,fill=vert!50!white,opacity=.5,line width=1pt] (2,0) -- (2,2) --
  (9,9) -- (9,3) -- (6,0) -- (2,0);
\draw[draw=rouge,fill=rouge!50!white,opacity=.2,line width=1pt] (0,1) -- (6,7) --
  (10,7) -- (10,2) -- (8,0) -- (0,0) -- (0,1);
\draw (6,2.5) node {$Z$};
\draw (.7,.7) node {$Z'$};
\begin{scope}
\path[clip] (3,4) -- (6,7) -- (8,7) -- (8,4) -- (3,4);
\foreach \x in {3,3.4,...,11} {\draw[rouge,opacity=.4] (\x,4) -- +(-4,4);}
\end{scope}
\begin{scope}
\path[clip] (4,4) -- (8,8) -- (8,4) -- (4,4);
\foreach \x in {3.2,3.6,...,11} {\draw[vert,opacity=.4] (\x,4) -- +(0,4);}
\end{scope}
\path[use as bounding box] (0,0);
\draw[line width=1mm,rouge,opacity=.6,cap=round] (3,.04) -- (8,.04);
\draw[loosely dotted] (3,.04) -- (3,4);
\draw[line width=1mm,vert,opacity=.4,cap=round] (4,-.04) -- (8,-.04);
\draw[loosely dotted] (4,.04) -- (4,4);
\path (5.2,1.4) node (a) {$\scriptstyle\pi_{Y}(Z'_{Y})$};
\draw[-latex'] (a.-150) .. controls +(-120:3mm) and +(70:3mm) .. (3.8,.2);
\path (5,-1.4) node (b) {$\scriptstyle\pi_{Y}(Z_{Y})$};
\draw[-latex'] (b) -- (5.5,-.2);
\draw[dotted] (2.3,3.3) -| (8.7,8.7) -| (2.3,3.3);
\path (3.3,6) node (c) {$Z'_Y$};
\draw[-latex'] (c) .. controls +(-70:6mm) and +(140:6mm) .. (4.1,4.6);
\path (5,7.5) node (d) {$Z_Y$};
\draw[-latex'] (d) .. controls +(-70:12mm) and +(140:12mm) .. (6.5,5);
\end{tikzpicture}
\caption{Two-dimensional zones~$Z$ and~$Z'$, and sub-zones $Z_{Y}$
  and $Z'_{Y}$ for $Y=\{x\}$.}
\label{fig-ZY}
\end{minipage}\hfill
\begin{minipage}[t]{.42\linewidth}
\centering
\begin{tikzpicture}[scale=.72]
\draw[dotted] (2.3,3.3) -| (8.7,8.7) -| (2.3,3.3);
\begin{scope}
\path[clip] (2.2,3.2) -| (8.8,8.8) -| (2.2,3.2);
%\draw[latex'-latex'] (11,0) node[right] {$x$} -| (0,9) node[above] {$y$};
\draw[dashed] (8,0) node[below] {$M(x)$} -- +(0,9);
\draw[dashed] (0,4) node[left] {$M(y)$} -- +(11,0);
\draw[draw=vert,fill=vert!50!white,opacity=.5,line width=1pt] (2,0) -- (2,2) --
  (9,9) -- (9,3) -- (6,0) -- (2,0);
\draw[draw=rouge,fill=rouge!50!white,opacity=.2,line width=1pt] (0,1) -- (6,7) --
  (10,7) -- (10,2) -- (8,0) -- (0,0) -- (0,1);
%\draw (4,.5) node {$Z$};
%\draw (7.3,.5) node {$Z'$};
\begin{scope}
\path[clip] (3,4) -- (6,7) -- (8,7) -- (8,4) -- (3,4);
\foreach \x in {3,3.4,...,11} {\draw[rouge,opacity=.4] (\x,4) -- +(-4,4);}
\end{scope}
\begin{scope}
\path[clip] (4,4) -- (8,8) -- (8,4) -- (4,4);
\foreach \x in {3.2,3.6,...,11} {\draw[vert,opacity=.4] (\x,4) -- +(0,4);}
\end{scope}
%\draw[line width=1mm,rouge,opacity=.6] (3,4.04) -- (8,4.04);
%\draw[line width=1mm,vert,opacity=.4] (4,3.96) -- (8,3.96);
\end{scope}
\path[use as bounding box] (5,2.2);
\draw[dotted,-latex',line width=1pt,opacity=.7] (-1.5,7) .. controls +(20:1cm)
  and +(160:1cm) .. (2.1,7); 
\draw[rouge,line width=1mm,opacity=.5,cap=round] (3.1,4.1) -- (5.95,6.95);
\draw[rouge,line width=1mm,opacity=.5,cap=round] (6.1,7) -- (8,7);
\path (4,7) node[text width=2.2cm,align=center] (f) {upper facets of $Z'_Y$ w.r.t.~$y$};
\draw[-latex'] (f) .. controls +(-90:5mm) and +(130:5mm) .. (4,5.1); 
\draw[-latex'] (f.10) .. controls +(20:2mm) and +(160:2mm) .. (6.4,7.1); 
\draw[vert,line width=1mm,opacity=.5,cap=round] (4.05,4.05) -- (8,8);
\path (5.3,8.3) node (g) {upper facet of $Z_Y$ w.r.t.~$y$};
\draw[-latex'] (g.-20) .. controls +(-80:2mm) and +(170:2mm) .. (7.5,7.6); 
\draw[rouge,line width=1mm,opacity=.5,cap=round] (3,3.96) -- (8,3.96);
\draw[vert,line width=1mm,opacity=.5,cap=round] (4.04,4.04) -- (8,4.04);
\path (5.5,2.4) node (h) {lower facets of $Z_Y$ and~$Z'_Y$ w.r.t.~$y$};
\draw[-latex'] (h.160) .. controls +(120:5mm) and +(-90:5mm) .. (3.5,3.8);
\draw[-latex'] (h.50) .. controls +(60:5mm) and +(-90:5mm) .. (6.5,3.8);
\end{tikzpicture}
\caption{Simple facets of $Z_Y$ and $Z'_Y$ w.r.t. clock~$y$.}
\label{fig-facets}
\end{minipage}
\end{figure}

\begin{restatable}{lemma}{timedinclusion}
  \label{lemma:timed_inclusion}
  The following two properties are equivalent:
  \begin{enumerate}[(i)]
  \item for every $v \in Z$, there is $v' \in Z'$ such that $v'
    \equiv v$
  \item for every $Y \subseteq X$, $\pi_Y(Z_Y) \subseteq \pi_Y(Z'_Y)$.
  \end{enumerate}
\end{restatable}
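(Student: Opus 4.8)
The plan is to reduce both conditions to a single pointwise statement via the partition $(Z_Y)_{Y\subseteq X}$, using the key observation that once we fix the set of clocks on which a valuation is $M$-bounded, the relation $\equiv_M$ becomes rigid.

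First I would fix $Y\subseteq X$ and $v\in Z_Y$, and describe exactly which valuations $v'$ satisfy $v'\equiv v$. By definition of $Z_Y$ we have $v(x)\le M(x)$ for $x\in Y$ and $v(x)>M(x)$ for $x\notin Y$. If $v'\equiv v$, then for $x\in Y$ the alternative ``$v(x)>M(x)$ and $v'(x)>M(x)$'' is excluded, forcing $v'(x)=v(x)$; for $x\notin Y$, both alternatives imply $v'(x)>M(x)$. Hence $v'\equiv v$ holds iff $v'$ agrees with $v$ on $Y$ and $v'(x)>M(x)$ for every $x\notin Y$. In particular any such $v'$ lying in $Z'$ is automatically $M$-bounded exactly on $Y$, i.e.\ lies in $Z'_Y$; and conversely, every $v'\in Z'_Y$ with $\pi_Y(v')=\pi_Y(v)$ satisfies $v'\equiv v$.

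Consequently, for $v\in Z_Y$, the statement ``there exists $v'\in Z'$ with $v'\equiv v$'' is equivalent to ``$\pi_Y(v)\in\pi_Y(Z'_Y)$''. It then remains only to quantify correctly. Since $(Z_Y)_{Y\subseteq X}$ is a partition of $Z$, condition (i) --- for every $v\in Z$ there is $v'\in Z'$ with $v'\equiv v$ --- is equivalent to: for every $Y\subseteq X$ and every $v\in Z_Y$, $\pi_Y(v)\in\pi_Y(Z'_Y)$; and this is exactly: for every $Y\subseteq X$, $\pi_Y(Z_Y)\subseteq\pi_Y(Z'_Y)$, which is condition (ii).

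There is no genuine obstacle here; the only points requiring care are checking the rigidity of $\equiv_M$ on each block $Z_Y$ --- including the degenerate clocks with $M(x)=-\infty$, which simply never belong to any relevant $Y$ and are handled uniformly by $v(x)\ge 0>M(x)$ --- and observing that $\equiv_M$ cannot move a valuation between two different blocks $Z'_Y$, so the ``smaller'' side of the inclusion may legitimately be analysed block by block rather than globally.
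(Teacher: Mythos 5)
Your proof is correct and follows essentially the same route as the paper's: you first establish, for $v \in Z_Y$, the characterization ``$v' \equiv v$ iff $v' \in Z'_Y$ and $\pi_Y(v') = \pi_Y(v)$'' (which is exactly the paper's auxiliary Lemma on $\equiv$ and $\pi_Y$), and then conclude by quantifying over the partition $(Z_Y)_{Y \subseteq X}$. Nothing is missing.
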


This allows to transform the initial optimization problem into
finitely many optimization problems.

\begin{restatable}{lemma}{decomposition}
  \label{lemma:decompos}
  \quad \(
  \displaystyle
  \sup_{v \in Z} \pinf_{\substack{v' \in Z'\\v' \equiv v}} \zeta'(v')
  - \zeta(v) = \pmax_{Y \subseteq X} \sup_{v \in Z_Y}
  \pinf_{\substack{v' \in Z'_Y\\v' \equiv v}} \zeta'(v') - \zeta(v)
  \).
\end{restatable}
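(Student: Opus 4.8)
The plan is to split the outer supremum according to the partition $(Z_Y)_{Y\subseteq X}$ of $Z$ recalled just before the statement, and then to argue that, for a valuation lying in a block $Z_Y$, the inner infimum over $Z'$ is unchanged when restricted to the block $Z'_Y$. So the proof will be in two steps, plus some care with the usual conventions $\inf\emptyset=+\infty$ and $\sup\emptyset=-\infty$.

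First, since $(Z_Y)_{Y\subseteq X}$ is a partition of $Z$ into finitely many (possibly empty) blocks, for every function $f\colon Z\to\IR\cup\{+\infty,-\infty\}$ one has $\sup_{v\in Z}f(v)=\max_{Y\subseteq X}\sup_{v\in Z_Y}f(v)$, the maximum being over the finite set $2^X$ and an empty block contributing the value $-\infty$, which never affects the maximum as long as $Z$ is nonempty (and if $Z=\emptyset$ both sides are $-\infty$). Taking $f(v)=\bigl(\inf_{v'\in Z',\,v'\equiv v}\zeta'(v')\bigr)-\zeta(v)$ gives immediately the desired identity but with $Z'$, rather than $Z'_Y$, under the inner infimum; it then only remains to replace $Z'$ by $Z'_Y$ inside each block.

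Second, and this is the crux, I will fix $Y\subseteq X$ and $v\in Z_Y$ and check that every $v'\in Z'$ with $v'\equiv_M v$ in fact lies in $Z'_Y$, i.e. satisfies $\{x\mid v'(x)\le M(x)\}=Y$. For $x\in Y$ we have $v(x)\le M(x)$, so the clause of $\equiv_M$ that requires $v(x)>M(x)$ cannot hold, forcing $v(x)=v'(x)$ and hence $v'(x)\le M(x)$; for $x\notin Y$ we have $v(x)>M(x)$, and in either case of the definition of $\equiv_M$ this yields $v'(x)>M(x)$. Consequently $\{v'\in Z'\mid v'\equiv v\}=\{v'\in Z'_Y\mid v'\equiv v\}$ for every $v\in Z_Y$, so the two inner infima agree — also when this set is empty, both being $+\infty$, which then propagates to $+\infty$ on both sides of the claimed equation. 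Plugging this into the expression obtained in the first step finishes the proof, the outer maximum being attained because $2^X$ is finite.

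The argument is largely bookkeeping; the only genuine content is the claim in the second step that $\equiv_M$ preserves the ``$M$-bounded support'' of a valuation, together with the discipline of tracking $\inf\emptyset=+\infty$ and $\sup\emptyset=-\infty$ so that empty blocks $Z_Y$ and unsatisfiable equivalence constraints are handled consistently on both sides of the identity. I expect no real obstacle beyond getting these conventions right.
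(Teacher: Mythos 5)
Your proof is correct and follows essentially the same route as the paper: split the outer supremum over the partition $(Z_Y)_{Y\subseteq X}$, then observe that for $v\in Z_Y$ the set $\{v'\in Z'\mid v'\equiv v\}$ coincides with $\{v'\in Z'_Y\mid v'\equiv v\}$ (the paper isolates this observation as a separate small lemma about $\equiv_M$ and projections). Your explicit tracking of the $\inf\emptyset=+\infty$ and $\sup\emptyset=-\infty$ conventions is a welcome addition that the paper leaves implicit.
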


\begin{corollary}
  \label{coro:decompos}
  $\mathcal{Z} \sqsubseteq \mathcal{Z}'$ iff for every $Y \subseteq
  X$, $\mathcal{Z}_Y \sqsubseteq \mathcal{Z}'_Y$
\end{corollary}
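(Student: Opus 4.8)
The plan is simply to chain together the three preceding results. First I would rewrite $\mathcal{Z} \sqsubseteq \mathcal{Z}'$ via \Cref{lemma:inclusion_sup}, turning it into the single scalar inequality $\sup_{v \in Z} \inf_{v' \in Z',\, v' \equiv v} \zeta'(v') - \zeta(v) \leq 0$. Then I would apply \Cref{lemma:decompos} to replace the left-hand side by $\max_{Y \subseteq X} \sup_{v \in Z_Y} \inf_{v' \in Z'_Y,\, v' \equiv v} \zeta'(v') - \zeta(v)$. Since a maximum over the finite index set $2^X$ is $\leq 0$ precisely when each of its terms is $\leq 0$, this inequality is equivalent to: for every $Y \subseteq X$, $\sup_{v \in Z_Y} \inf_{v' \in Z'_Y,\, v' \equiv v} \zeta'(v') - \zeta(v) \leq 0$.

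It then remains to recognise each of these per-$Y$ inequalities as an instance of \Cref{lemma:inclusion_sup}, now applied to the priced zones $\mathcal{Z}_Y = (Z_Y,\zeta)$ and $\mathcal{Z}'_Y = (Z'_Y,\zeta')$: that lemma says precisely that the displayed inequality is equivalent to $\mathcal{Z}_Y \sqsubseteq \mathcal{Z}'_Y$. Assembling the resulting chain of equivalences yields the stated corollary.

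The only points needing a little care are the degenerate cases. If $Z_Y = \emptyset$, the supremum over the empty set is $-\infty$ and $\mathcal{Z}_Y \sqsubseteq \mathcal{Z}'_Y$ holds vacuously, so the two descriptions still agree; similarly, if some $v \in Z_Y$ has no $\equiv$-equivalent valuation in $Z'_Y$, the inner infimum is $+\infty$ and both the inequality and the inclusion fail. Such trivial or $+\infty$ terms are harmless inside the $\max$ of \Cref{lemma:decompos} exactly because $2^X$ is finite. I do not anticipate any genuine obstacle here: the real content has already been absorbed into \Cref{lemma:inclusion_sup} and \Cref{lemma:decompos} — in particular the observation that for $v \in Z_Y$ the constraints $v' \in Z'$ and $v' \equiv v$ force $v' \in Z'_Y$, which is what legitimises restricting the inner infimum to $Z'_Y$ component-by-component — so the corollary itself is pure bookkeeping.
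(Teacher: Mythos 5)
Your proof is correct and matches the paper's (implicit) argument exactly: the corollary is obtained by chaining \Cref{lemma:inclusion_sup} and \Cref{lemma:decompos} and noting that a finite maximum is $\leq 0$ iff each term is, with the degenerate empty/$+\infty$ cases handled as you describe. The paper treats this as immediate and gives no separate proof; your bookkeeping fills in precisely the intended steps.
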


In the sequel, we~write 
\[
S(\calZ,\calZ',Y)=\sup_{v \in Z_Y} \inf_{\substack{v' \in Z'_Y \\ v' \equiv v}}
\zeta'(v') - \zeta(v)
\]
\Cref{lemma:inclusion_sup} and~\Cref{coro:decompos} suggest an
algorithm for deciding whether $\calZ \sqsubseteq \calZ'$: enumerate
the subsets~$Y$ of $X$, and prove that $S(\calZ,\calZ',Y) \le
0$. 
We~now show how to solve the latter optimization problem (for~a
fixed~$Y$), and then show how we can drive the choice of~$Y$ so that
not all subsets of~$X$ have to be analyzed.

\subsection{Computing $S(\calZ,\calZ',Y)$}

We show the following main result to compute $S(\calZ,\calZ',Y)$,
which produces a simpler optimization problem, allowing to decide the
inclusion of two priced zones, on parts where cost functions are
lower-bounded.

\begin{restatable}{theorem}{projectionfacettes}
  \label{theo:projections-facets}
  Let $\mathcal{Z}=(Z,\zeta)$ and $\mathcal{Z}'=(Z',\zeta')$ be two
  non-empty priced zones, and let $Y \subseteq X$ be such that
  $\pi_Y(Z_Y) \subseteq \pi_Y(Z'_Y)$ and $\zeta$ and $\zeta'$ are
  lower-bounded on $Z_Y$ and $Z_Y'$ respectively.
  Then we can compute finite sets $\mathcal{K}_Y$ and $\mathcal{K}'_Y$
  of zones over $Y$, and affine functions $\zeta_F$ and $\zeta'_{F'}$
  for every $F \in \mathcal{K}_Y$ and $F' \in \mathcal{K}'_Y$ s.t.:
  \begin{equation}
  S(\calZ,\calZ',Y) = \pmax_{F \in \mathcal{K}_Y} \pmax_{F' \in
    \mathcal{K}'_Y} \sup_{u \in F \cap F'} \zeta'_{F'} (u) - \zeta_F(u).
  \label{eq-thmS}
  \end{equation}
\end{restatable}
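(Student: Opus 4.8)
The plan is to reduce the inner optimisation $\inf_{v'\in Z'_Y,\ v'\equiv v}\zeta'(v')$ to an optimisation over a bounded polytope, and then to identify where the optimum is attained. First I would use the hypothesis $\pi_Y(Z_Y)\subseteq\pi_Y(Z'_Y)$ so that for each $v\in Z_Y$ the set $\{v'\in Z'_Y : v'\equiv v\}$ is nonempty: by definition of $\equiv$ (recall $\equiv=\equiv_M$), such a $v'$ must agree with $v$ on every clock $x\in Y$ (since $v(x)\le M(x)$ forces $v'(x)=v(x)$), and must satisfy $v'(x)>M(x)$ for $x\notin Y$. Hence the inner problem is $\inf\{\zeta'(v') : v'\in Z'_Y,\ \pi_Y(v')=\pi_Y(v)\}$, an optimisation over the fibre of $Z'_Y$ above the point $u:=\pi_Y(v)$, which depends only on $u$ and not on the coordinates of $v$ outside $Y$. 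Since $\zeta'$ is affine and lower-bounded on $Z'_Y$, this fibre-wise infimum is a piecewise-affine convex function of $u$ on $\pi_Y(Z'_Y)$; I would argue that over each cell of a suitable polyhedral subdivision it equals an affine function $\zeta'_{F'}$ associated with a ``lower facet'' $F'$ of $Z'_Y$ with respect to the clocks outside $Y$ (cf.~\Cref{fig-facets}). Symmetrically, for the outer supremum $\sup_{v\in Z_Y}(\dots)-\zeta(v)$, because $\zeta$ is affine, $-\zeta$ is affine, and the whole expression is convex piecewise-affine in $u=\pi_Y(v)$ after the inner reduction; but we are \emph{maximising}, so the supremum over $v\in Z_Y$ reduces to a supremum over the projection $\pi_Y(Z_Y)$, and over that polytope the relevant affine pieces of the objective come from the ``lower facets'' of $Z_Y$ with respect to the clocks outside $Y$ — these are collected into $\mathcal K_Y$ with their attached affine maps $\zeta_F$.

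Concretely, I would define $\mathcal K_Y$ (resp.\ $\mathcal K'_Y$) as the set of simple lower facets of $Z_Y$ (resp.\ $Z'_Y$) with respect to the clocks in $X\setminus Y$: a facet obtained by saturating, for each $x\notin Y$, one lower constraint of $Z_Y$ on $x$ (either a bound $x\sim c$ or a difference $x-z\sim c$); each such facet projects bijectively onto a zone $F\subseteq\pi_Y(Z_Y)$ over $Y$, and on it $\zeta$ restricts to an affine function $\zeta_F$ of $u\in F$ (obtained by substituting the saturated equalities into $\zeta$). The key geometric claim is: (a)~on $F$, $\zeta_F(u)=\inf\{\zeta(v):v\in Z_Y,\ \pi_Y(v)=u\}$ is \emph{not} what we want for the outer problem — there we want a \emph{max}, so actually the decomposition should put the burden on the inner min; let me restate. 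The inner minimum over the fibre of $Z'_Y$ is attained on a lower facet $F'$ (since $\zeta'$ is minimised and lower-bounded, it decreases towards the lower boundary), giving $\inf_{v'}\zeta'(v')=\min_{F'\in\mathcal K'_Y,\ u\in F'}\zeta'_{F'}(u)$; then the outer sup of $\bigl(\min_{F'}\zeta'_{F'}(u)\bigr)-\zeta(v)$ over $v\in Z_Y$ is, for fixed $u$, maximised by taking $v$ minimising $\zeta(v)$ on the fibre, i.e.\ $\zeta(v)=\min_{F\in\mathcal K_Y,\ u\in F}\zeta_F(u)$ — wait, that again is a min inside a max, which does not split as a max over $F,F'$. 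The correct move, which I would make carefully, is: $\sup_v(-\zeta(v))=\sup_F \sup_{u\in F}(-\zeta_F(u))$ only if the $F$'s \emph{cover} $\pi_Y(Z_Y)$ and on their overlap the true value $\inf_v\zeta(v)$ is the \emph{pointwise minimum} — then $-\inf_v\zeta(v)=\max_F(-\zeta_F)$ pointwise, so $\sup_v(-\zeta(v))=\sup_u\max_F(-\zeta_F(u))=\max_F\sup_u(-\zeta_F(u))$. Likewise $\inf_{v'}\zeta'(v')=\min_{F'}\zeta'_{F'}$ pointwise, hence $S(\calZ,\calZ',Y)=\sup_u\bigl(\min_{F'}\zeta'_{F'}(u)-\min_F(-\!\!-\zeta_F)(u)\bigr)$; finally, I must exchange $\sup_u$ with $\max_F\max_{F'}$, which \emph{is} valid here because $\sup_u(\min_{F'}\zeta'_{F'}(u)+\max_F\!(\text{best }{-}\zeta_F))$ — the outer is a max over $F$ (pointwise max commutes with sup) while for the inner min over $F'$ one uses that a sup of a min is bounded by the min of the sups and conversely the max over $F$ of $\sup_{u\in F}$ of $(\zeta'_{F'}-\zeta_F)$ is achieved; establishing this exchange rigorously, using convexity/concavity of the pieces and the fact that each $F\cap F'$ is again a zone over $Y$, is the crux.

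The main obstacle is exactly this last interchange of $\sup_u$ with the two finite maxima: naively $\sup_u \min_{F'}(\cdot) \ne \max_{F'}\sup_u(\cdot)$. I would handle it by showing that the optimum in $\sup_{v\in Z_Y}\inf_{v'\in Z'_Y,\,v'\equiv v}\zeta'(v')-\zeta(v)$ is attained (or approached) at a point $v^\star$ whose projection $u^\star$ lies in some single cell where one particular facet $F^\star\in\mathcal K_Y$ realises $\inf$ of $\zeta$ on the fibre and one $F'^\star\in\mathcal K'_Y$ realises $\inf$ of $\zeta'$ on its fibre; this uses that each of these piecewise-affine functions is attained on finitely many affine pieces indexed by facets, that $u^\star$ can be pushed into the relative interior of a cell (or to a vertex) without decreasing the objective by convexity of $\min_{F'}\zeta'_{F'}$ and affineness of $\zeta_{F^\star}$, and that $F^\star\cap F'^\star$ is nonempty (both contain $u^\star$) and is a zone. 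Then $S(\calZ,\calZ',Y)\le\max_{F,F'}\sup_{u\in F\cap F'}(\zeta'_{F'}(u)-\zeta_F(u))$ follows by plugging the witnesses, and the reverse inequality $\ge$ is routine since any $u\in F\cap F'$ together with a $v\in Z_Y$ projecting to $u$ on $F$ and its $\equiv$-companion $v'\in Z'_Y$ on $F'$ gives $\zeta'(v')-\zeta(v)=\zeta'_{F'}(u)-\zeta_F(u)\le S(\calZ,\calZ',Y)$. The effectiveness part — computing $\mathcal K_Y,\mathcal K'_Y$ and the affine maps — is then immediate from the DBM representation of $Z_Y,Z'_Y$: enumerate the saturations of lower constraints on the clocks $X\setminus Y$, and substitute into the (affine) cost functions.
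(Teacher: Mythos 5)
Your plan follows the same overall architecture as the paper's proof: decouple the inner infimum through the projection $\pi_Y$ (so that both terms become fibre-infima over $u\in\pi_Y(Z_Y)$), replace each fibre-infimum by affine functions attached to facets, and conclude that the supremum of an affine function over the zone $F\cap F'$ is attained at a vertex. However, two steps are genuinely wrong as written. First, taking only \emph{lower} facets of $Z_Y$ and $Z'_Y$ with respect to the clocks of $X\setminus Y$ does not yield $\zeta_F(u)=\inf\{\zeta(v)\mid v\in Z_Y,\ \pi_Y(v)=u\}$: if $\partial\zeta/\partial x<0$ for some $x\notin Y$, the infimum along an $x$-fibre is attained at its \emph{upper} endpoint (take $\zeta=-x$ on $M(x)<x\le M(x)+1$). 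The correct construction chooses, for each eliminated clock $x$, the lower facets when $\partial\zeta/\partial x\ge 0$ and the upper facets when $\partial\zeta/\partial x<0$; and this is precisely where the lower-boundedness hypothesis enters --- not to force the minimum onto the lower boundary, but to guarantee that when $\partial\zeta/\partial x<0$ the zone contains no half-line in the direction of increasing $x$, so that the upper facets exist and their shadows cover $Z_Y$.

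Second, your treatment of the $\sup$/$\min$ interchange, which you rightly single out as the crux, does not close. The ``routine'' reverse inequality is false as stated: for $u\in F\cap F'$ where $F'$ is \emph{not} the facet realising the fibre-minimum of $\zeta'$ at $u$, the chosen companion $v'$ on $F'$ satisfies $\zeta'(v')>\inf\{\zeta'(\tilde v')\mid \tilde v'\equiv v\}$, so $\zeta'_{F'}(u)-\zeta_F(u)$ may strictly exceed $S(\calZ,\calZ',Y)$; the bound $\zeta'(v')-\zeta(v)\le S(\calZ,\calZ',Y)$ for a \emph{particular} $v'\equiv v$ simply does not follow from the definition of $S$ as a $\sup\inf$. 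What actually makes the double-max formula correct --- and what the paper proves --- is that, with the gradient-dependent facet choice, \emph{every} $\zeta'_{F'}$ restricted to its own domain coincides with the honest fibre-infimum: if $u$ lies in the shadows of two admissible facets, the corresponding boundary points of the fibre coincide, so the facet functions agree on overlaps. Hence no residual pointwise $\min_{F'}$ survives, and $\max_F\max_{F'}\sup_{u\in F\cap F'}$ is merely a reindexing of $\sup_u$ over a finite cover of $\pi_Y(Z_Y)$. You would need to establish this agreement property (the paper does so one clock at a time, iterating the facet construction and re-deriving the identity $\zeta_F(w)=\inf_{v}\zeta(v)$ at each stage); without it, the interchange you are trying to justify is not merely delicate but false in general.
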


The details of the proof of this theorem is given in~\Cref{app:proof}.
The idea behind this result is to first rewrite $S(\calZ,\calZ',Y)$
into:
\[
S(\calZ,\calZ',Y) = \sup_{u \in \pi_Y(Z_Y)}
\Bigl[\Bigl(\inf_{\substack{v' \in Z_Y'\\\pi_Y(v') = u}}
\zeta'(v')\Bigr) - \Bigl(\inf_{\substack{v \in Z_Y\\\pi_Y(v) = u}}
\zeta(v)\Bigr)\Bigr]
\]
which decouples the dependency of $v'$ on $v$.
The algorithm then uses the notion of facets (introduced
in~\cite{LBB+01}), which corresponds to the boundary of the zone
w.r.t. a clock (if~$W$ is the zone, a~facet of~$W$ w.r.t.~$x$ is
$\adherence{W} \cap (x=n)$ or $\adherence{W} \cap (x-y=m)$ whenever $x
\bowtie n$ or $x-y \bowtie m$ is a constraint defining~$W$). Given a
clock $x \in X \setminus Y$, we~consider the facets of~$Z_Y$
w.r.t.~$x$ that minimize, for any $w \in \pi_{X \setminus \{x\}}(Z_Y)$,
the function $v \mapsto \zeta(v)$ when $\pi_{X \setminus \{x\}}(v) = w$.
The~restriction of~$\zeta$ on such a facet is a new affine function,
which we can compute. We~then iterate the process for all clocks in $X
\setminus Y$. We~do the same for~$\zeta'$. This~yields the result
claimed above: sets~$\mathcal{K}_Y$ and~$\mathcal{K'}_Y$ are sets of
projections of facets over~$Y$.

Facets are zones, and so are their projections on~$Y$ and
intersections thereof. Additionally, all functions $\zeta_F$ and
$\zeta'_{F'}$ are affine; hence the supremum in Eq.~\eqref{eq-thmS} is
reached at some vertex~$u_0$ of zone $F \cap F'$, for some facets~$F$
and~$F'$.
By~construction of~$\zeta_F$ and~$\zeta'_{F'}$, we~get
\[
S(\calZ,\calZ',Y) = \inf_{\substack{v' \in Z'_Y \\ \pi(v') = u_0}}
\zeta'(v') - \inf_{\substack{v \in Z_Y\vphantom{Z'_Y} \\ \pi(v) =
    u_0}} \zeta(v)
\]

In particular, $u_0$ has integral coordinates. We end up with the
following result, which will be useful for proving the termination
of~\Cref{algo:optreach}:
\begin{corollary}
  \label{theo:vertices}\label{coro:vertices}
  Let $\mathcal{Z}=(Z,\zeta)$ and $\mathcal{Z}'=(Z',\zeta')$ be two
  non-empty priced zones, and let $Y \subseteq X$ be such that
  $\pi_Y(Z_Y) \subseteq \pi_Y(Z'_Y)$ and $\zeta$ and $\zeta'$ are
  lower-bounded on $Z_Y$ and $Z_Y'$ respectively. Then the following
  holds:
  \[
%  \sup_{v \in Z_Y} \inf_{\substack{v' \in Z'\\ v' \equiv_M v}}
%  \zeta'(v') - \zeta(v) = 
  S(\calZ,\calZ',Y) = \max_{\substack{u_0 \in \pi_Y(\adherence{Z_Y})
      \\u_0\in\bbN^Y}} \Bigl[
  \min_{\substack{\vphantom{\adherence{Z_Y}}v' \in Z'_Y \\
      v'\equiv u_0}} \zeta'(v') -
  \min_{\substack{\vphantom{\adherence{Z_Y}Z'_Y}v \in Z_Y\\ \vphantom{v'}v\equiv u_0}}
  \zeta(v)\Bigr]
  \]
\end{corollary}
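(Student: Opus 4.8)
The statement is essentially a repackaging of \Cref{theo:projections-facets} together with two elementary facts about zones, so the plan is to start from Equation~\eqref{eq-thmS},
\[
S(\calZ,\calZ',Y)\ =\ \pmax_{F \in \mathcal{K}_Y}\ \pmax_{F' \in \mathcal{K}'_Y}\ \sup_{u \in F \cap F'} \zeta'_{F'}(u) - \zeta_F(u),
\]
and to analyse, for each pair $(F,F')$, the inner supremum of the affine function $\zeta'_{F'} - \zeta_F$ over $F \cap F'$ (which is a zone, being an intersection of projections of facets, and facets are zones).

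First I would record that $\pi_Y(Z_Y)$ is a \emph{bounded} zone: on $Z_Y$ every clock $x \in Y$ satisfies $0 \le x \le M(x)$. Each $F \in \mathcal{K}_Y$ is, by the construction in the proof of \Cref{theo:projections-facets}, the projection onto $Y$ of a facet of $Z_Y$, and facets lie inside $\adherence{Z_Y}$; hence $F \subseteq \pi_Y(\adherence{Z_Y})$, and likewise $F' \subseteq \pi_Y(\adherence{Z'_Y})$, the latter containing $\pi_Y(\adherence{Z_Y})$ since $\pi_Y(Z_Y) \subseteq \pi_Y(Z'_Y)$. So $F \cap F'$ is bounded, $\adherence{F \cap F'}$ is a compact polytope, and its vertices are integral (a vertex is the unique solution of a subsystem of the defining constraints $x \sim c$, $x - y \sim c$ with $c \in \IZ$, turned into equalities, hence an integer point). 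Thus each inner supremum in~\eqref{eq-thmS} is finite and attained at some integral vertex $u_0 \in \adherence{F \cap F'} \subseteq \pi_Y(\adherence{Z_Y})$; moreover $\pi_Y(\adherence{Z_Y}) \cap \bbN^Y$ is finite, so the right-hand side of the corollary is a genuine maximum.

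Next, by the way $\zeta_F$ and $\zeta'_{F'}$ are built in \Cref{theo:projections-facets} --- iterated restriction of $\zeta$, resp. $\zeta'$, to the facets minimising the cost along each clock of $X \setminus Y$ --- one has $\zeta_F(u) = \inf\{\zeta(v) : v \in Z_Y,\ \pi_Y(v) = u\}$ for $u \in F$, and symmetrically for $\zeta'_{F'}$; for the maximising pair $(F,F')$ and its vertex $u_0$ these infima are attained, which is where lower-boundedness of $\zeta$ and $\zeta'$ is used (the facet construction also pins the relevant coordinates to integer boundary values, so the minimisers lie in $Z_Y$ and $Z'_Y$). Since $u_0 \in \bbN^Y$ and on $Z_Y$, $Z'_Y$ every clock of $Y$ is $\le M$, the condition $\pi_Y(v) = u_0$ is exactly $v \equiv u_0$; hence $S(\calZ,\calZ',Y)$ equals $\min_{v' \equiv u_0}\zeta'(v') - \min_{v \equiv u_0}\zeta(v)$ for this $u_0$, giving ``$\le$''. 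For ``$\ge$'' I would invoke the rewriting $S(\calZ,\calZ',Y) = \sup_{u \in \pi_Y(Z_Y)}[\,\inf_{v' \in Z'_Y,\,\pi_Y(v')=u}\zeta'(v') - \inf_{v \in Z_Y,\,\pi_Y(v)=u}\zeta(v)\,]$ stated after \Cref{theo:projections-facets}: an integral $u_0 \in \pi_Y(Z_Y)$ contributes a term of this supremum, hence $\le S(\calZ,\calZ',Y)$, while for integral $u_0 \in \pi_Y(\adherence{Z_Y}) \setminus \pi_Y(Z_Y)$ the fibre is empty, so $\min_{v \equiv u_0}\zeta(v) = +\infty$ and the term is $-\infty$, harmless for the maximum.

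The main obstacle is the bookkeeping of $Z_Y$ versus its closure $\adherence{Z_Y}$: zones may use strict inequalities, so an affine supremum over $Z_Y$ need not be attained inside $Z_Y$, and one has to check that the integral maximiser $u_0$ --- even when it lies on the boundary of $Z_Y$ --- still yields attained (hence $\min$, not merely $\inf$) costs on both sides, and that no boundary point produces an ill-defined $(+\infty)-(+\infty)$ term. Both points are resolved by the facet construction of \Cref{theo:projections-facets}, which relocates the optimum onto facets where clocks are frozen at integer values, together with the lower-boundedness hypothesis; this is the only non-routine ingredient of the argument.
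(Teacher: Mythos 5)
Your argument is correct and follows essentially the same route as the paper, which derives the corollary from Equation~\eqref{eq-thmS} by observing that the supremum of the affine function $\zeta'_{F'}-\zeta_F$ over the zone $F\cap F'$ is attained at an integral vertex $u_0$, and that $\zeta_F$ and $\zeta'_{F'}$ are by construction the fibre-wise infima of $\zeta$ and $\zeta'$ over $\pi_Y^{-1}(u_0)$. You are in fact somewhat more careful than the paper (boundedness of $F\cap F'$, the closure/attainment issue, and the explicit ``$\ge$'' direction via the $\sup_{u\in\pi_Y(Z_Y)}$ rewriting), which the paper leaves implicit.
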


The requirement for lower-bounded priced zones
in~\Cref{theo:projections-facets} is crucial in the proof. 
But the case when this requirement is not met can easily be handled separately, so
that $\sqsubseteq$ can always be effectively decided:

\begin{restatable}{lemma}{relaxation}
  \label{lemma:relaxation}
  Let $\mathcal{Z}=(Z,\zeta)$ and $\mathcal{Z}'=(Z',\zeta')$ be two
  non-empty priced zones.
  \begin{itemize}
  \item If $\zeta$ is not lower-bounded on $Z$ but $\zeta'$ is
    lower-bounded on $Z'$, then $\mathcal{Z} \not\sqsubseteq
    \mathcal{Z}'$.
  \item Let $Y \subseteq X$ such that $\pi_Y(Z_Y) \subseteq \pi_Y(Z_Y')$. If $\zeta'$ is not lower-bounded on
    $Z'_Y$, then $\mathcal{Z}_Y \sqsubseteq \mathcal{Z}'_Y$.
  \end{itemize}
\end{restatable}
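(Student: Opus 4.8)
The goal is to establish the two dichotomy cases in \Cref{lemma:relaxation}, which handle the situations left open by \Cref{theo:projections-facets} (where lower-boundedness of the cost functions was assumed). I~would treat the two bullet points separately, as they rely on different observations.

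For the first item, the hypothesis is that $\zeta$ is not lower-bounded on~$Z$ while $\zeta'$ is lower-bounded on~$Z'$. The plan is to exhibit a witness valuation~$v\in Z$ for which the inclusion condition of $\sqsubseteq$ fails. Using \Cref{lemma:inclusion_sup}, it suffices to show $\sup_{v\in Z}\inf_{v'\in Z', v'\equiv v}\zeta'(v')-\zeta(v)>0$. Since $\zeta$ is affine and not lower-bounded on the (convex, closed-up-to-boundary) zone~$Z$, there is a sequence $(v_n)$ in~$Z$ with $\zeta(v_n)\to-\infty$. On the other hand, let $m'$ be a lower bound for $\zeta'$ on~$Z'$: then for any~$v_n$ and any $v'\equiv v_n$ with $v'\in Z'$, we have $\zeta'(v')\ge m'$, so $\inf_{v'\equiv v_n}\zeta'(v')-\zeta(v_n)\ge m'-\zeta(v_n)\to+\infty$. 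Hence the supremum is $+\infty>0$ and $\mathcal Z\not\sqsubseteq\mathcal Z'$. (One subtlety to check: if for some~$v_n$ there is \emph{no} $v'\in Z'$ equivalent to~$v_n$, the inner infimum is $+\infty$ by convention, which only helps; so the argument goes through regardless.)

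For the second item, the hypothesis is that $\pi_Y(Z_Y)\subseteq\pi_Y(Z'_Y)$ but $\zeta'$ is not lower-bounded on~$Z'_Y$. Here I~want to show $\mathcal Z_Y\sqsubseteq\mathcal Z'_Y$, i.e. (again via \Cref{lemma:inclusion_sup} applied to the pair $\mathcal Z_Y,\mathcal Z'_Y$) that $S(\calZ,\calZ',Y)=\sup_{v\in Z_Y}\inf_{v'\in Z'_Y, v'\equiv v}\zeta'(v')-\zeta(v)\le 0$. Fix $v\in Z_Y$. By the inclusion hypothesis there is at least one $\bar v'\in Z'_Y$ with $\pi_Y(\bar v')=\pi_Y(v)$, hence $\bar v'\equiv v$ (since both agree on~$Y$ and both have all coordinates outside~$Y$ strictly above~$M$, they are $\equiv_M$-equivalent). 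The set of such~$v'$ — those in $Z'_Y$ projecting to~$\pi_Y(v)$ — is a non-empty zone, and $\zeta'$ restricted to~$Z'_Y$ is affine and not lower-bounded; the key claim is that $\zeta'$ is also not lower-bounded on this fibre $\{v'\in Z'_Y : \pi_Y(v')=\pi_Y(v)\}$. Granting that, $\inf_{v'\equiv v, v'\in Z'_Y}\zeta'(v')=-\infty$, so $\inf\zeta'(v')-\zeta(v)=-\infty\le 0$ for every~$v$, whence $S(\calZ,\calZ',Y)\le0$ and the inclusion holds.

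The main obstacle is the claim in the last paragraph: that non-lower-boundedness of the affine function~$\zeta'$ on the zone~$Z'_Y$ propagates down to non-lower-boundedness on each non-empty fibre above a point of $\pi_Y(Z'_Y)$. This needs the structure of~$Z'_Y$: on~$Z'_Y$ every clock outside~$Y$ is unbounded above (by definition $x>M(x)$ there, and there is no finite upper constraint on such~$x$, since $M$~dominates all constants; the only constraints involving these clocks are lower bounds and difference constraints). So $\zeta'$, being affine, can only fail to be lower-bounded on~$Z'_Y$ because of a direction increasing some out-of-$Y$ clocks without bound; such a direction can be taken to keep all in-$Y$ coordinates (hence the projection~$\pi_Y$) fixed, because the recession cone of~$Z'_Y$ contains the relevant unbounded ray and its $Y$-projection can be zeroed out — one shows that the recession cone of~$Z'_Y$ projected to~$Y$ is compatible, so a ray along which $\zeta'\to-\infty$ lives inside a single fibre (or can be translated into one). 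Making this precise — characterizing the recession cone of~$Z'_Y$ and showing $\zeta'$ decreasing along a fibre direction — is the technical heart; once it is done, both bullets follow cleanly from \Cref{lemma:inclusion_sup}.
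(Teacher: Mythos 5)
Your proposal follows essentially the same route as the paper: the first bullet is handled by exhibiting a $v\in Z$ with $\zeta(v)$ below the lower bound of $\zeta'$ (your sequence formulation is equivalent to the paper's single witness), and the second bullet by finding a recession direction $d$ of $Z'_Y$ along which $\zeta'$ decreases, observing that $d$ has zero components on the $Y$-clocks (automatically, since those coordinates are confined to $[0,M(y)]$ on $Z'_Y$ --- no ``zeroing out'' is needed), so that the ray stays inside the equivalence class $[v]_M$ of each $v\in Z_Y$. The one step you defer --- that a direction $d$ with $v'_0+\IR_{\geq 0}\cdot d\subseteq Z'_Y$ for one point is a recession direction at \emph{every} point of $Z'_Y$, so that each non-empty fibre contains a full ray --- is precisely where the paper spends the bulk of its proof, via a case analysis on the finite entries of the normalized DBM (zones need not be closed, so this is not just the textbook fact about closed convex sets); your plan is correct modulo supplying that argument, which you have correctly identified as the technical heart.
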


\begin{corollary}
  Let $\mathcal{Z}=(Z,\zeta)$ and $\mathcal{Z}'=(Z',\zeta')$ be two
  priced zones. Then we can effectively decide whether $\mathcal{Z}
  \sqsubseteq \mathcal{Z}'$.
\end{corollary}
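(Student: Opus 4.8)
The plan is to assemble the corollary directly from the machinery developed in this section, reducing the decision of $\mathcal{Z} \sqsubseteq \mathcal{Z}'$ to a finite collection of elementary computations. First I would dispose of the trivial cases. If $Z$ or $Z'$ is empty, the inclusion holds vacuously (if $Z=\emptyset$) or can be checked directly; and by the preliminary observation following \Cref{lemma:inclusion_sup}, before anything else one runs the quadratic-time unpriced inclusion test of~\cite[Theorem 34]{HSW12} (with $L=U=M$) to check whether for every $v \in Z$ there is $v' \in Z'$ with $v \equiv v'$; by \Cref{lemma:timed_inclusion} this is the same as checking $\pi_Y(Z_Y) \subseteq \pi_Y(Z'_Y)$ for every $Y \subseteq X$. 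If this fails, then $\mathcal{Z} \not\sqsubseteq \mathcal{Z}'$ by \Cref{lemma:inclusion_sup} (the $\inf$ over an empty set is $+\infty$), and we return ``no''.

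Next I would invoke \Cref{coro:decompos}: $\mathcal{Z} \sqsubseteq \mathcal{Z}'$ iff $\mathcal{Z}_Y \sqsubseteq \mathcal{Z}'_Y$ for every $Y \subseteq X$, equivalently $S(\calZ,\calZ',Y) \le 0$ for every $Y$. Since $X$ is finite there are finitely many subsets~$Y$, so it suffices to decide, for each fixed~$Y$ (with $Z_Y$ and $Z'_Y$ both nonempty and $\pi_Y(Z_Y) \subseteq \pi_Y(Z'_Y)$, the other cases being handled below), whether $S(\calZ,\calZ',Y) \le 0$. Here I would split on whether the cost functions are lower-bounded, using \Cref{lemma:relaxation}. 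If $\zeta$ is not lower-bounded on $Z$ (hence not on some $Z_Y$) while $\zeta'$ is lower-bounded on $Z'$, the first bullet of \Cref{lemma:relaxation} gives $\mathcal{Z} \not\sqsubseteq \mathcal{Z}'$ outright. If, for the $Y$ under consideration, $\zeta'$ is not lower-bounded on $Z'_Y$, the second bullet gives $\mathcal{Z}_Y \sqsubseteq \mathcal{Z}'_Y$, so that $Y$ imposes no obstruction and we move on. (Lower-boundedness of an affine function on a zone is a linear-programming feasibility question, hence decidable.)

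In the remaining case — $\pi_Y(Z_Y) \subseteq \pi_Y(Z'_Y)$ and both $\zeta,\zeta'$ lower-bounded on $Z_Y, Z'_Y$ respectively — I would apply \Cref{theo:projections-facets}: it yields computable finite sets $\mathcal{K}_Y, \mathcal{K}'_Y$ of zones over $Y$ and computable affine functions $\zeta_F, \zeta'_{F'}$ with
\[
S(\calZ,\calZ',Y) = \pmax_{F \in \mathcal{K}_Y} \pmax_{F' \in \mathcal{K}'_Y} \sup_{u \in F \cap F'} \zeta'_{F'}(u) - \zeta_F(u).
\]
Each inner supremum is the supremum of an affine function over a (bounded or unbounded) zone $F \cap F'$, so it is either $+\infty$ or attained at a vertex, and in both cases it is computed by a single linear program; the outer maxima are over finite index sets. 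Thus $S(\calZ,\calZ',Y)$ is computed exactly, we compare it to~$0$, and combining the verdicts over all $Y \subseteq X$ via \Cref{coro:decompos} decides $\mathcal{Z} \sqsubseteq \mathcal{Z}'$. The only subtlety worth flagging — and the place where one must be a little careful rather than where there is real difficulty — is making sure the three cases of \Cref{lemma:relaxation} and \Cref{theo:projections-facets} together cover all configurations of $(Z_Y, Z'_Y, \zeta, \zeta')$: when $\zeta$ is lower-bounded on $Z$ but $\zeta'$ is lower-bounded on $Z'$ as well, every $Y$ falls under \Cref{theo:projections-facets}; when $\zeta'$ fails to be lower-bounded on some $Z'_Y$, that $Y$ is discharged by the second bullet; and a global failure of lower-boundedness of $\zeta$ is caught by the first bullet before the per-$Y$ analysis even begins. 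Since there are only finitely many $Y$, and each branch terminates, the overall procedure terminates and is correct.
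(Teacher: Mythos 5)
Your proposal follows exactly the paper's intended assembly: decompose over the finitely many $Y\subseteq X$ via \Cref{coro:decompos}, pre-check $\pi_Y(Z_Y)\subseteq\pi_Y(Z'_Y)$ (\Cref{lemma:timed_inclusion}), dispatch the non-lower-bounded cases with \Cref{lemma:relaxation}, and reduce the remaining case to finitely many linear programs via \Cref{theo:projections-facets}. This is the right argument, and almost all of it is correct.

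There is, however, one small hole in the coverage argument you flag at the end. The first bullet of \Cref{lemma:relaxation} applies only when $\zeta$ is not lower-bounded on $Z$ \emph{and} $\zeta'$ \emph{is} lower-bounded on $Z'$. Consider the configuration where neither $\zeta$ nor $\zeta'$ is lower-bounded globally, but for some particular $Y$ the function $\zeta'$ is lower-bounded on $Z'_Y$ while $\zeta$ is not lower-bounded on $Z_Y$. Then your first branch does not fire (its hypothesis on $\zeta'$ fails), your second branch does not fire ($\zeta'$ is lower-bounded on this $Z'_Y$), and \Cref{theo:projections-facets} does not apply ($\zeta$ is not lower-bounded on $Z_Y$). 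The fix is immediate and is in fact what the paper's proof of \Cref{lemma:relaxation} does implicitly: the argument of the first bullet localizes to $Y$. If $\zeta$ is unbounded below on $Z_Y$ but $\zeta'$ is bounded below on $Z'_Y$, pick $v\in Z_Y$ with $\zeta(v)$ below $\inf_{v'\in Z'_Y}\zeta'(v')-1$; by Lemma~\ref{lem-equivpi} any $v'\equiv v$ in $Z'$ lies in $Z'_Y$, so no witness exists and $\mathcal{Z}_Y\not\sqsubseteq\mathcal{Z}'_Y$, hence $\mathcal{Z}\not\sqsubseteq\mathcal{Z}'$. With this per-$Y$ version of the first bullet added, the four cases for each $Y$ (projection inclusion fails; $\zeta'$ unbounded below on $Z'_Y$; $\zeta'$ bounded but $\zeta$ unbounded below on $Z_Y$; both bounded) are exhaustive and each is decidable, and your procedure goes through. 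A cosmetic remark: the zones $F\cap F'$ in Eq.~\eqref{eq-thmS} are subsets of $\prod_{y\in Y}[0,M(y)]$, hence bounded, so the inner suprema are always attained at vertices and the $+\infty$ alternative you mention never occurs there.
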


\subsection{Finding the right $Y$}
\label{sec:righty}

Applying \Cref{lemma:decompos}, the main obstacle to efficiently
decide~$\sqsubseteq_M$ is to find the appropriate $Z_Y$ in which the sought
supremum is reached. Unless good arguments can be found to guide the search
towards the best choice for~$Y$, an exhaustive enumeration of all the $Y$'s
will be required.

\begin{example}
  We consider the zone $Z$ defined by the constraints $x \geq 0$, $y
  \geq 1$, $x \leq y$ and $y \leq x+2$.  We fix $M(x) = 2$ and $M(y) =
  3$.  We then consider $Z' = Z$. The zone~$Z$ is equipped with a constant cost
  function~$\zeta$.
  In~\Cref{fig:examples:middle}, $Z'$ is attached $\zeta'(x,y) = x + y$, and
  the  
  expression of the function 
$f(v) = \inf_{v' \in Z',\ v' \equiv_M v} \zeta'(v')$ 
  is given in each~$Z_Y$, for~$Y\subseteq X$.
  It is then easy to see that the supremum of~$f$ is reached at the
  point 
  $(2,3)$, in the middle of the zone.  In~\Cref{fig:examples:border},
  we~take $\zeta'(x,y) = 2x - y$, and the expression of the function
  $f(v) = \inf_{v' \in Z'.\ v' \equiv_M v} \zeta'(v')$ is given in
  each~$Z_Y$.  The supremum of~$f$ is then reached at the point 
  $(2,2)$, on the border, but not at a corner of the zone.  The latter
  example also shows that $f$ is not continuous on the whole zone~$Z$.
\end{example}

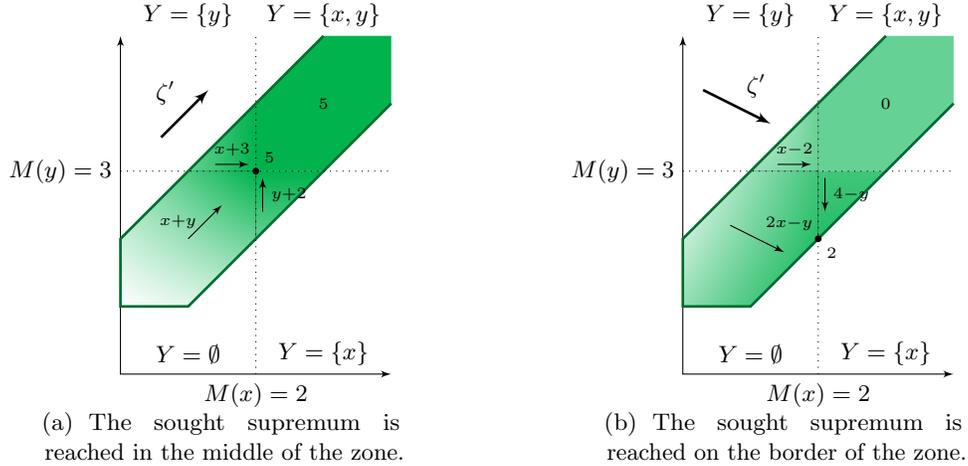
\begin{figure}[tb]
\colorlet{vert}{green!70!blue}
\centering
\subfigure[The sought supremum is reached in the middle of the
zone.\label{fig:examples:middle}]{
\begin{tikzpicture}[scale=.9]
\draw[latex'-latex'] (4,0) -| (0,5);
\path[use as bounding box] (-1,-.4);
\draw[-latex',line width=1pt] (.6,3.5) -- (1.3,4.2) node[midway,above left]
  {$\zeta'$};
\begin{scope}[opacity=1]
\fill[shading=axis,lower left=vert!0!white,upper left=vert!40!white,lower
right=vert!20!white,upper right=vert!100!white] (2,2) -- (1,1) -| (0,2) -- (1,3) -| (2,2);
\fill[shading=axis,lower left=vert!60!white,lower right=vert!60!white,
  upper left=vert,upper right=vert]  (2,2) -- (3,3) -| (2,2);
\fill[shading=axis,lower left=vert!60!white,lower right=vert,
  upper left=vert!60!white,upper right=vert]  (1,3) -- (2,4) |- (1,3);
\fill[vert] (2,4) |- (3,3) -- (4,4) |- (3,5) -- (2,4);
\end{scope}
\draw[vert!60!black,line width=1pt] (4,4) -- (1,1) -| (0,2) -- (3,5);
\draw[dotted] (0,3) node[left] {$M(y)=3$} -- +(4,0);
\draw[dotted] (2,0) node[below] {$M(x)=2$} -- +(0,5);
\draw[-latex'] (1,2) -- +(.5,.5) node[midway,left] {$\scriptstyle x+y$};
\draw[-latex'] (1.4,3.1) -- +(.5,0) node[midway,above] {$\scriptstyle x+3$};
\draw[-latex'] (2.1,2.4) -- +(0,.5) node[midway,right] {$\scriptstyle y+2$};
\draw (3,4) node {$\scriptstyle 5$};
\fill (2,3) circle (.5mm) node[above right] {$\scriptstyle 5$};
\draw (1,.3) node {$Y=\emptyset$};
\draw (3,.3) node {$Y=\{x\}$};
\draw (1,5.3) node {$Y=\{y\}$};
\draw (3,5.3) node {$Y=\{x,y\}$};
\end{tikzpicture}
}
\hfill
\subfigure[The sought supremum is reached on the border of the
zone.\label{fig:examples:border}]{
\begin{tikzpicture}[scale=.9]
\draw[latex'-latex'] (4,0) -| (0,5);
\path[use as bounding box] (-1,-.4);
\draw[-latex',line width=1pt] (.3,4.2) -- (1.3,3.7) node[midway,above right]
  {$\zeta'$};
\begin{scope}
\fill[shading=axis,shading angle=-71,lower left=vert!0!white,
  upper left=vert!60!white,lower right=vert!60!white,upper right=vert]  (2,2) -- (1,1) -| (0,2) -- (1,3) -| (2,2);
\fill[shading=axis,lower left=vert!85!white,lower right=vert!85!white,
  upper left=vert!75!white, upper right=vert!75!white]  (2,2) -- (3,3) -| (2,2);
\fill[shading=axis,lower left=vert!40!white,lower right=vert!60!white,upper right=vert!60!white]  (1,3) -- (2,4) |- (1,3);
\fill[vert!60!white] (2,4) |- (3,3) -- (4,4) |- (3,5) -- (2,4);
\end{scope}
\draw[vert!60!black,line width=1pt] (4,4) -- (1,1) -| (0,2) -- (3,5);
\draw[dotted] (0,3) node[left] {$M(y)=3$} -- +(4,0);
\draw[dotted] (2,0) node[below] {$M(x)=2$} -- +(0,5);
\draw[-latex'] (.7,2.2) -- +(.8,-.4) node[midway,above right] {$\scriptstyle 2x-y$};
\draw[-latex'] (1.4,3.1) -- +(.5,0) node[midway,above] {$\scriptstyle x-2$};
\draw[-latex'] (2.1,2.9) -- +(0,-.5) node[midway,right] {$\scriptstyle 4-y$};
\draw (3,4) node {$\scriptstyle 0$};
\fill (2,2) circle (.5mm) node[below right] {$\scriptstyle 2$};
\draw (1,.3) node {$Y=\emptyset$};
\draw (3,.3) node {$Y=\{x\}$};
\draw (1,5.3) node {$Y=\{y\}$};
\draw (3,5.3) node {$Y=\{x,y\}$};
\end{tikzpicture}
}
\caption{The supremum may lie in the middle of zones or facets}
\label{fig:examples}
\end{figure}

Nevertheless, in many cases, we will be able to guide the search of the $Z_Y$
where the sought optimal is to be found.
The~following development
focuses on the zone, not on the cost function. Given a zone~$Z$, we define a
preorder~$\preceq$ on the clocks, such that if $Z_Y \neq \emptyset$, then
$Y$~is downward-closed for~$\preceq$. In~other words, whenever $x \preceq y$,
$y \in Y$ and $Z_Y \neq \emptyset$, then $x \in Y$. The knowledge of $\preceq$
can be a precious help to guide the enumeration of non-empty~$Z_Y$'s.
Indeed, if $Z_Y \neq \emptyset$, $Y$~is downward-closed for~$\preceq$, and
candidates for~$Y$ are thus found by enumerating the antichains of~$\preceq$.
In particular, if $\preceq$ is total, then there are at most $|X|+1$ sets $Y$
such that $Z_Y \neq \emptyset$.

To be concrete, 
let $X_{\le M}$ and~$X_{>M}$ be the (disjoint) sets of clocks~$x$ such that $Z
\subseteq (x \le M(x))$ and ${Z \subseteq (x>M(x))}$, respectively.
We~define the relation~$\preceq_Z$ as the
least
relation satisfying the following conditions:
\begin{itemize}
\item for each $x \in X_{\le M}$, for each $y \in X$, $x \preceq_Z y$;
\item for each $y \in X_{>M}$, for each $x \in X$, $x \preceq_Z y$;
\item for each $y \in X_{>M}$, for each $x \in X \setminus X_{>M}$, $y
  \not\preceq_Z x$;
\item for all $x,y \in X \setminus X_{>M}$, $Z \subseteq (x-y \le
  M(x)-M(y))$ implies $x \preceq_Z y$.
\end{itemize}
It is not difficult to show that $\preceq_Z$ is a preorder such that:
$y \in X_{\le M}$ and $x \preceq_Z y$ implies $x \in X_{\le M}$, and
$x \in X_{>M}$ and $x \preceq_Z y$ implies $y \in X_{>M}$.

\begin{restatable}{lemma}{Ydownwardclosed}
  Let $Y \subseteq X$ such that $Z_Y \ne \emptyset$. Then $Y$ is
  downward-closed for $\preceq_Z$.
\end{restatable}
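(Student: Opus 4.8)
For $Y \subseteq X$ with $Z_Y \neq \emptyset$, the set $Y$ is downward-closed for $\preceq_Z$.

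The plan is to unfold the definition of downward-closedness and argue by cases on which of the four defining clauses of $\preceq_Z$ is used to witness a relation $x \preceq_Z y$ with $y \in Y$. Concretely, I would fix $Y$ with $Z_Y \neq \emptyset$, fix a witness valuation $v \in Z_Y$ (so $\{x : v(x) \le M(x)\} = Y$), take any $x \preceq_Z y$ with $y \in Y$, and show $x \in Y$, i.e. $v(x) \le M(x)$. Since $\preceq_Z$ is the least relation closed under the four rules, it suffices to handle the generating pairs (and then transitivity and reflexivity are immediate: reflexivity is trivial, and transitivity is handled because if $x \preceq_Z y \preceq_Z z$ with $z \in Y$ then by one application $y \in Y$ and by another $x \in Y$).

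\textbf{Case analysis on the generating rules.} First, if $x \in X_{\le M}$, then by definition $Z \subseteq (x \le M(x))$, so $v(x) \le M(x)$ and $x \in Y$ — regardless of what $y$ is. Second, consider a pair generated by the fourth rule: $x,y \in X \setminus X_{>M}$ and $Z \subseteq (x - y \le M(x) - M(y))$, with $y \in Y$. Then $v(y) \le M(y)$ and $v(x) - v(y) \le M(x) - M(y)$, hence $v(x) \le M(x)$, so $x \in Y$. Third, the rule ``for each $y \in X_{>M}$, for each $x \in X$, $x \preceq_Z y$'' produces pairs whose right-hand side $y$ lies in $X_{>M}$; but $Z \subseteq (y > M(y))$ forces $v(y) > M(y)$, so such a $y$ is never in $Y$, and the implication is vacuous. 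The remaining rule (``$y \not\preceq_Z x$ for $y \in X_{>M}$, $x \notin X_{>M}$'') is a \emph{negative} constraint and does not generate any pairs, so it plays no role here (it is only needed to justify, together with the auxiliary claim stated just before the lemma, that $\preceq_Z$ is a well-defined preorder; this I may invoke as already established).

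\textbf{Main obstacle.} The only subtle point is making precise that $\preceq_Z$, being defined as a \emph{least relation} subject to a mix of positive generating clauses and one negative (forbidding) clause, is still well-behaved: one must check that the positive clauses never conflict with the forbidding clause, so that the ``least relation satisfying the conditions'' is genuinely well-defined and coincides with the reflexive-transitive closure of the positive pairs. The excerpt already asserts this (``$\preceq_Z$ is a preorder such that \dots''), so for the lemma itself I can take that for granted and the argument reduces to the routine three-case verification above; the bookkeeping about transitivity — that downward-closedness need only be checked on generating pairs before being propagated along chains — is the one place to state carefully but presents no real difficulty.
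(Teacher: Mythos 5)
Your proof is correct and follows essentially the same route as the paper's: a case analysis on which generating clause of $\preceq_Z$ can produce a pair $x \preceq_Z y$ with $y \in Y$, with the same arithmetic in the difference-bound case ($v(y) \le M(y)$ and $v(x)-v(y) \le M(x)-M(y)$ give $v(x) \le M(x)$ for any witness $v \in Z_Y$). The paper merely phrases the argument as a proof by contradiction and routes two of the cases through the auxiliary properties of $\preceq_Z$ stated just before the lemma, so there is no substantive difference.
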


The preorder $\preceq_Z$ can be computed in polynomial time, since it
only requires to check emptiness of zones, which can be done in time
polynomial in $|X|$ (cubic in $|X|$ with DBMs for instance).

We recall that, if $Z$ is a zone generated in a timed automaton where
only resets of clocks to $0$ are allowed, for any pair of clocks
$x,y$, it~cannot be the case that $Z$ crosses the diagonal hyperplane of
equation $x = y$. 

\begin{proposition}
  \label{prop:total_samebound}
  If $Z$ is generated by a timed automaton, and all clocks have the
  same bound $M$, then $\preceq_Z$ is total.
\end{proposition}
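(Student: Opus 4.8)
The plan is to establish totality by a direct case analysis on the position of the two clocks with respect to the partition $X = X_{\le M}\cup X_{>M}\cup X_{\mathrm{mid}}$, where $X_{\mathrm{mid}} = X\setminus(X_{\le M}\cup X_{>M})$ gathers the clocks whose range of values in $Z$ straddles the common bound $M$ (i.e.\ $Z$ is contained neither in $(x\le M)$ nor in $(x>M)$). Since $\preceq_Z$ is a preorder, it is reflexive, so $x\preceq_Z x$ is free; fix two distinct clocks $x,y$ and show they are comparable.

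First the easy cases. If $x\in X_{\le M}$, the first defining clause of $\preceq_Z$ gives $x\preceq_Z y$ immediately (symmetrically if $y\in X_{\le M}$). If $x\in X_{>M}$, the second clause (taking $x$ in the role of the upper-range clock) gives $y\preceq_Z x$ (symmetrically if $y\in X_{>M}$). In all of these, the pair is comparable. Note that the third, negative clause never obstructs totality: it only rules out $y'\preceq_Z x'$ when $y'\in X_{>M}$ and $x'\notin X_{>M}$, and in precisely that situation the second clause already supplies $x'\preceq_Z y'$, so the relation stays total.

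It remains to handle $x,y\in X_{\mathrm{mid}}$; then in particular $x,y\in X\setminus X_{>M}$, so the fourth defining clause applies to this pair. Because all clocks have the same bound $M$, we have $M(x)-M(y)=0$, and that clause specializes to: $Z\subseteq(x-y\le 0)$ implies $x\preceq_Z y$, and symmetrically $Z\subseteq(y-x\le 0)$ implies $y\preceq_Z x$. Now I invoke the recalled property of zones reachable in a timed automaton with resets to~$0$ only: such a zone cannot cross the hyperplane $x=y$. Since a zone is convex, this means $Z\subseteq(x\le y)$ or $Z\subseteq(x\ge y)$, i.e.\ $Z\subseteq(x-y\le 0)$ or $Z\subseteq(y-x\le 0)$. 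In the former case the fourth clause yields $x\preceq_Z y$; in the latter, $y\preceq_Z x$. Either way the pair is comparable, which exhausts all cases and proves $\preceq_Z$ total.

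There is essentially no real obstacle here: the only points to watch are that the same-bound hypothesis collapses $M(x)-M(y)$ to $0$ so that the fourth clause becomes a plain diagonal comparison, and that the negative third clause does not spoil totality. All the substantive content is packaged into the already-recalled ``no diagonal crossing'' fact for zones of timed automata, which is exactly what forces comparability for clocks in $X_{\mathrm{mid}}$.
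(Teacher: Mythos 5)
Your proof is correct and takes essentially the same route as the paper: both reduce totality to the recalled fact that a zone generated by a timed automaton lies entirely in one of the half-spaces $x\le y$ or $x\ge y$, which together with $M(x)=M(y)$ triggers the fourth clause of the definition of $\preceq_Z$. Your explicit preliminary case split for clocks in $X_{\le M}$ or $X_{>M}$, and the check that the negative third clause cannot obstruct totality, are details the paper's one-line argument leaves implicit.
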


\begin{proof}
  Let $x$ and $y$ be two clocks. Since $Z$ is generated by a timed automaton,
  it~is contained either in the half-space of equation $[x \leq y]$, or in the
  one of equation $[x \geq y]$. By definition of $\preceq_Z$, and since $M(x)
  = M(y)$, the former entails $x \preceq_Z y$, and the latter $y \preceq_Z x$.
  Any two clocks are thus always comparable, and $\preceq_Z$ is therefore
  total. \qed
\end{proof}

Under the assumptions of~\Cref{prop:total_samebound}, there are
polynomially many subsets $Y\subseteq X$ to~try.
  Note that these assumptions are easily realized by taking
  $\widetilde{M} = \max_{x \in X} M(x)$ as the unique maximal constant for
  all the clocks.
  Formally, $\sqsubseteq_{\widetilde{M}}$~is~an under-approximation of
  the exact version of $\sqsubseteq_M$.  This approximation does not
  hinder correctness, and illustrate the trade-off between the
  complexity of the inclusion procedure and the number of priced zones
  that will be explored.

\section{Termination of the computation}
\label{sec:termination}

In this section we prove termination of our algorithm, by 
exhibiting an appropriate well-quasi-order.
We fix a timed automaton~$\mathcal{A}$ and a maximal-constant function~$M$
(for~every clock $x \in X$, the~integer~$M(x)$ is larger than any constant
with which clock~$x$ is compared in~$\mathcal{A}$).

\begin{restatable}{proposition}{preorder}
  $\sqsubseteq$ is a preorder (or quasi-ordering).
\end{restatable}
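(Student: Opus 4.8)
The plan is to verify the two defining properties of a preorder for $\sqsubseteq = \sqsubseteq_M$: reflexivity and transitivity. Reflexivity is immediate: given a priced zone $\mathcal{Z} = (Z,\zeta)$, for every $v \in Z$ and every $\epsilon > 0$ we may take $v' = v$, which satisfies $v \equiv_M v'$ trivially and $\zeta(v') = \zeta(v) \le \zeta(v) + \epsilon$. Hence $\mathcal{Z} \sqsubseteq_M \mathcal{Z}$.

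For transitivity, suppose $\mathcal{Z} = (Z,\zeta) \sqsubseteq_M \mathcal{Z}' = (Z',\zeta')$ and $\mathcal{Z}' \sqsubseteq_M \mathcal{Z}'' = (Z'',\zeta'')$; I want $\mathcal{Z} \sqsubseteq_M \mathcal{Z}''$. Fix $v \in Z$ and $\epsilon > 0$. Applying the first inclusion with tolerance $\epsilon/2$ yields $v' \in Z'$ with $v \equiv_M v'$ and $\zeta'(v') \le \zeta(v) + \epsilon/2$. Then applying the second inclusion to this $v' \in Z'$, again with tolerance $\epsilon/2$, yields $v'' \in Z''$ with $v' \equiv_M v''$ and $\zeta''(v'') \le \zeta'(v') + \epsilon/2 \le \zeta(v) + \epsilon$. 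Since $\equiv_M$ is an equivalence relation (in particular transitive), $v \equiv_M v'$ and $v' \equiv_M v''$ give $v \equiv_M v''$. Thus $v''$ witnesses the required condition, and $\mathcal{Z} \sqsubseteq_M \mathcal{Z}''$.

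The only subtlety worth flagging is the handling of the $\epsilon$'s: the definition of $\sqsubseteq_M$ quantifies over all $\epsilon > 0$ rather than asking for an exact witness, so one cannot simply compose exact witnesses. Splitting the budget as $\epsilon/2 + \epsilon/2$ (or indeed any $\epsilon_1 + \epsilon_2 = \epsilon$ with both positive) resolves this cleanly, and the transitivity of $\equiv_M$ — which follows directly from its definition clause-by-clause on each clock — is what lets the two equivalence witnesses chain together. There is no real obstacle here; the statement is essentially a bookkeeping exercise, and the main point is just to be careful that the composition of the two "almost-witnesses" stays within the prescribed tolerance.
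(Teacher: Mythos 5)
Your proof is correct and follows essentially the same argument as the paper's: reflexivity via the identity witness, and transitivity by splitting the tolerance as $\epsilon/2 + \epsilon/2$ and chaining witnesses through the transitivity of $\equiv_M$. Your remarks on the $\epsilon$-bookkeeping are exactly the (small) point the paper's proof also hinges on.
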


We now consider the ``converse''
preorder~$\sqsupseteq$, defined over priced zones by $\mathcal{Z}'
\sqsupseteq \mathcal{Z}$ iff $\mathcal{Z} \sqsubseteq
\mathcal{Z}'$.  We~show that $\sqsupseteq$ is a \emph{well
  quasi-ordering~(wqo)}.  Thus the relation~$\sqsupseteq$ has no
infinite antichain, which 
entails
termination of
Algorithm~\ref{algo:optreach}. 

We now gather the results to exhibit a sufficient condition for~$\sqsupseteq$
to be a~wqo.

\begin{restatable}{theorem}{wqobis}
%  For a fixed function $M$, 
  For every $\mu \in \mathbb{Z}$, $\sqsupseteq$ is a well-quasi-order
  on (non-empty) priced zones whose cost functions are either not
  lower-bounded, or lower-bounded by $\mu$.
\end{restatable}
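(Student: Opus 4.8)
The plan is to prove that $\sqsupseteq$ is a well-quasi-order on the restricted family of priced zones described in the statement by combining two ingredients: (a)~the fact that the underlying \emph{unpriced} part of $\sqsubseteq_M$ (namely the relation ``for every $v \in Z$ there is $v' \in Z'$ with $v \equiv_M v'$'') is a wqo on zones --- this is essentially the wqo underlying the $\preceq_M$-based termination argument of~\cite{HSW12}, via \Cref{lemma:timed_inclusion} --- and (b)~a bound on the ``cost component'' which holds precisely because the cost functions are either not lower-bounded or lower-bounded by the fixed constant~$\mu$. The overall scheme is the standard one for proving a relation is a wqo: from any infinite sequence of priced zones $(\calZ_i)_{i\ge0}$, extract an infinite subsequence along which the relevant finitely-many discrete parameters are constant, and then show that within such a homogeneous subsequence some $\calZ_i \sqsubseteq \calZ_j$ with $i<j$.

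First I would fix an infinite sequence $(\calZ_i=(Z_i,\zeta_i))_{i\ge0}$ of non-empty priced zones from the restricted family. Partition the sequence into those $\calZ_i$ whose cost function is not lower-bounded and those lower-bounded by~$\mu$; one of the two parts is infinite, so assume we work inside one of them. For the ``not lower-bounded on $Z$'' part, \Cref{lemma:relaxation} already gives a lot: $\calZ_i \not\sqsubseteq \calZ_j$ forces $\zeta_j$ non-lower-bounded too, so I would argue directly that among infinitely many zones whose cost is nowhere lower-bounded, the relevant comparison collapses essentially to the unpriced wqo on zones (using \Cref{coro:decompos} and the second item of \Cref{lemma:relaxation}: on any $Y$ with $\zeta'$ not lower-bounded on $Z'_Y$ the per-$Y$ inclusion holds for free, so one only needs to handle the finitely many $Y$'s where $\zeta'_j$ \emph{is} lower-bounded, and those can be pushed into the lower-bounded case or excluded). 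For the ``lower-bounded by $\mu$'' part, the key is \Cref{coro:vertices}: $S(\calZ_i,\calZ_j,Y)$ is computed as a max over integer points $u_0 \in \pi_Y(\adherence{Z_{i,Y}})$ of a difference of two minima of the affine cost functions. I would then extract a subsequence along which, for each $Y\subseteq X$, the combinatorial type of $Z_{i,Y}$ (its set of defining diagonal/non-diagonal constraints, restricted to which are bounded), is constant --- finitely many types --- so that the unpriced inclusions $\pi_Y(Z_{i,Y})\subseteq\pi_Y(Z_{j,Y})$ hold down the tail by the wqo from~\cite{HSW12}/\Cref{lemma:timed_inclusion}.

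It remains, inside such a homogeneous subsequence, to make the cost terms cooperate. Here I would exploit that each $\zeta_i$ is affine with a slope vector that can be assumed constant along a further subsequence (there are only finitely many slope vectors arising, since the \Post-operation produces cost functions with bounded, in fact integer-derived, slopes; this is where I would lean on the structure guaranteed by the construction in~\cite{LBB+01}). Once slopes are fixed, $\zeta_i$ is determined by a single ``offset'' parameter per relevant region, and the lower bound $\mu$ together with the fact that the relevant minima in \Cref{coro:vertices} are attained at integer points means these offsets, where they matter for the sign of $S$, range over a \emph{well-ordered} set (bounded below, essentially by $\mu$ up to the fixed slopes over a bounded region of integer points). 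A pigeonhole / Dickson-type argument on these finitely many real offset parameters then yields indices $i<j$ with $\zeta_j$ pointwise ``at least as good'' as $\zeta_i$ up to the needed $\epsilon$, i.e. $S(\calZ_i,\calZ_j,Y)\le 0$ for all $Y$, hence $\calZ_i \sqsubseteq \calZ_j$ by \Cref{coro:decompos}; equivalently $\calZ_j \sqsupseteq \calZ_i$, so no infinite antichain and no infinite strictly $\sqsupseteq$-decreasing sequence exists.

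The main obstacle I expect is precisely the cost-offset argument: making rigorous why the offsets that matter for the sign of $S(\calZ,\calZ',Y)$ live in a well-ordered (bounded-below) set. The subtlety is that $\zeta'$ is minimized over $\{v'\in Z'_Y : v'\equiv u_0\}$, i.e. over an unbounded slice (clocks outside $Y$ are $>M$), so one must argue the infimum is still governed by the lower bound $\mu$ and attained in a controlled way --- this is exactly what \Cref{theo:projections-facets} and \Cref{coro:vertices} are engineered to give (reduction to finitely many facets and to integral vertices $u_0$ in a \emph{bounded} region $\pi_Y(\adherence{Z_Y})\subseteq \prod_{x\in Y}[0,M(x)]$). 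Carefully threading the $\epsilon$ in the definition of $\sqsubseteq_M$ through these extractions, and handling the mixed case where different $Y$'s fall on different sides of the lower-boundedness dichotomy, is the delicate bookkeeping; the rest is a routine Ramsey/Dickson extraction over finitely many discrete parameters.
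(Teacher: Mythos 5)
Your overall strategy (extract a homogeneous subsequence over the finitely many discrete parameters, reduce the cost comparison to integer data via \Cref{coro:vertices}, dispatch the non-lower-bounded case with \Cref{lemma:relaxation}, and finish with a Dickson-type extraction) is the paper's strategy, and your treatment of the non-lower-bounded case and of the unpriced projections $\pi_Y(Z_{i,Y})$ (finitely many integer-cornered zones inside a bounded box) is sound. But the core combinatorial step has a genuine gap. You first fix the slope vector of $\zeta_i$ along a subsequence, on the grounds that ``only finitely many slope vectors arise'' from \Post; this is both unjustified (the theorem quantifies over \emph{all} priced zones with cost lower-bounded by $\mu$, not only \Post-generated ones) and false in general even for \Post-generated zones, whose cost coefficients can grow without bound along longer and longer paths. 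Worse, the concluding ``pigeonhole / Dickson-type argument on these finitely many real offset parameters \dots bounded below, hence well-ordered'' is a non sequitur: a subset of $\IR$ that is bounded below need not be well-ordered, and $(\IR_{\ge\mu},\le)$ is not a wqo (a strictly decreasing bounded sequence admits no $i<j$ with $x_i\le x_j$). As written, your argument does not exclude an infinite antichain of priced zones whose relevant minima strictly decrease towards $\mu$ through non-integer values.

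The paper closes exactly this hole without any slope/offset decomposition: for each $Y$ it attaches to $\calZ_i$ the single signature $\kappa_i=\bigl(\min_{v\in\adherence{Z_{i,Y}},\,\pi_Y(v)=u}\zeta_i(v)\bigr)_u$, indexed by the finitely many \emph{integer} points $u$ of $\prod_{y\in Y}[0,M(y)]$. Each entry is an integer $\ge\mu$ or $+\infty$ --- integrality of the minima (attained at integral vertices, with integral cost functions), not mere boundedness, is the crucial point --- so Dickson/Higman applies to the tuples directly. \Cref{coro:vertices} converts $\kappa_i\le\kappa_j$ pointwise (together with inclusion of the projections, available by finiteness) into $\calZ_j\sqsubseteq\calZ_i$ for some $i<j$, and \Cref{coro:decompos} combines the finitely many $Y$'s. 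If you replace your slope/offset step by this signature, the rest of your plan goes through.
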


\begin{corollary}
  \Cref{algo:optreach} terminates on weighted timed automata, which
  generate priced zones with a uniform lower bound on the cost
  functions,
\end{corollary}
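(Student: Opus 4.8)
The plan is to reduce the claim to the preceding \wqobis\ theorem. First I would fix a weighted timed automaton~$\A$ of the stated form, together with its maximal-constant function~$M$, and run \Cref{algo:optreach} with the inclusion test~$\sqsubseteq_M$. The key point is to identify a single integer~$\mu$ such that \emph{every} priced zone $(\ell,\calZ)$ that can be inserted in the \Passed\ set during the execution satisfies the hypothesis of \wqobis: its cost function is either not lower-bounded, or lower-bounded by~$\mu$. By assumption on~$\A$, all reachable priced zones have cost functions admitting a \emph{uniform} lower bound, so such a~$\mu$ exists and depends only on~$\A$ (not on the run of the algorithm). More precisely, $\Post^*(\ell_0,\calZ_0)$ is exactly the set of priced zones that may ever be explored, and by hypothesis this set is contained in the class of priced zones whose cost functions are either not lower-bounded or lower-bounded by~$\mu$.

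Next I would invoke \wqobis\ for this particular~$\mu$: on that class of priced zones, $\sqsupseteq$ is a well-quasi-order. The standard argument for termination of a forward-exploration algorithm with a covering/inclusion test then applies. Suppose for contradiction that the algorithm does not terminate. Since each location~$\ell \in L$ generates, via $\Post$, priced zones from a fixed class, and $L$ is finite, some location~$\ell$ is added to \Passed\ with infinitely many priced zones $\calZ_1, \calZ_2, \calZ_3, \dots$ in the order they are processed. By the inclusion-test guard in \Cref{algo:optreach}, when $(\ell,\calZ_j)$ is added to \Passed, we have $\calZ_j \not\sqsubseteq \calZ_i$ for all $i < j$, i.e. $\calZ_i \not\sqsupseteq \calZ_j$ for all $i < j$. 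Hence $(\calZ_j)_{j \ge 1}$ is an infinite sequence in which no earlier element is $\sqsupseteq$ a later one — equivalently, an infinite bad sequence for~$\sqsupseteq$. This contradicts the fact that $\sqsupseteq$ is a wqo on the relevant class (recall a relation is a wqo iff it admits no infinite bad sequence, equivalently no infinite antichain and no infinite strictly descending chain). Therefore the algorithm terminates.

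The main obstacle, and the only genuinely delicate point, is ensuring that the single~$\mu$ is legitimately uniform over the \emph{whole} exploration, not chosen per priced zone: this is precisely what the hypothesis ``generate priced zones with a uniform lower bound on the cost functions'' buys us, and it is why the corollary is phrased with that proviso rather than stated unconditionally — for general (possibly negative) weights the $\Post$ operation can drive the infimum of the cost function arbitrarily low along longer runs, so no fixed~$\mu$ need exist and \wqobis\ cannot be applied. A minor additional point to check is that priced zones with \emph{non}-lower-bounded cost functions are unproblematic: they too lie in the class covered by \wqobis, and in fact \Cref{lemma:relaxation} shows they interact well with~$\sqsubseteq$, so they cause no separate source of non-termination. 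Everything else — finiteness of~$L$, the translation of the algorithm's guard into a bad sequence, the equivalence between wqo and absence of infinite bad sequences — is routine.
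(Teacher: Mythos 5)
Your proposal is correct and follows essentially the same route as the paper, which treats this corollary as an immediate consequence of the preceding well-quasi-order theorem: the uniform bound~$\mu$ supplied by the hypothesis places all generated priced zones in the class where $\sqsupseteq$ is a wqo, and the algorithm's guard would otherwise produce an infinite bad sequence (the paper phrases this as the absence of an infinite antichain for~$\sqsupseteq$). Your spelling-out of the pigeonhole over the finitely many locations and of the bad-sequence contradiction is the standard argument the authors leave implicit, and it is sound.
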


We can argue (see Appendix~\ref{app-mu}) that infinite antichains for
$\sqsupseteq$ generated by a forward exploration of $\A$ actually
corresponds to infinite paths in $\A$ with cost $-\infty$. While this
condition can be decided (using the corner-point abstraction
of~\cite{BBL08}), we do not want to check this as a preliminary step,
since this is as complex as computing the optimal cost. Furthermore,
symbolically, this would amount to finding a cycle of symbolic states
which is both $\omega$-iterable~\cite{JR11,DH+14} and cost-divergent;
this is a non-trivial problem. We can nevertheless give simple
syntactic conditions for the condition to hold: this is the case of
weighted timed automata with non-negative weights (this is the class
considered in~\cite{LBB+01,RLS06}); let $T_\ell$ be the minimum
(resp. maximum) delay that can be delayed in $\ell$ if location $\ell$
has positive (resp. negative) cost: if along any cycle of the weighted
timed automaton, the sum of the discrete weights and of each $T_\ell
. \weight(\ell)$ is nonnegative, then the above condition will be
satisfied; this last condition encompasses all the acyclic weighted
timed automata, like all scheduling problems~\cite{BLR05b}.

\section{Experimental Results}
\label{sec:exp}

We have implemented a prototype, \tiamo{}, to test our new inclusion
test.  It is based on the DBM library of Uppaal
(in~C++),\footnote{\url{http://people.cs.aau.dk/~adavid/UDBM/}} which
features the inclusion test of~\cite{RLS06}.  We added our inclusion
test (also in~C++).  This core is then wrapped in OCaml code, in which
the main algorithm is written.
Note that \tiamo{} relies on the representation of priced zones from
the DBM library of Uppaal, which assumes that all cost functions are
always non-negative.  For this reason, \tiamo{} does currently not
support models with negative costs.

\tiamo{} is able to prune the state space using the best cost so~far.
Concretely, it~would not explore states whose cost exceeds the current
optimal cost. This can dramatically reduce the state space to explore,
but is sound only when all costs in the model are non-negative.
On such models, the user can provide a hint, a known cost to \tiamo{}
(obtained for example by a reachability analysis, or by other independent
techniques) to be used to prune the model. 
Moreover, \tiamo{} reports, during the computation, the best known
cost so~far. Such values are upper bounds on the sought optimum, and
may be interesting to get during long computations.

A direct comparison between \tiamo{} and
Uppaal\footnote{\url{http://www.uppaal.org/}}
(or~Uppaal-CORA\footnote{\url{http://people.cs.aau.dk/~adavid/cora/}})
is difficult: the source code of Uppaal (and Uppaal-CORA) is not open,
and it is often hard to know what is precisely implemented. For
instance, on the unbounded automaton of~\Cref{fig:non-terminate},
the~algorithm described in~\cite{LBB+01,RLS06} does not
terminate. Depending on the way it is queried (asking for the fastest
trace, or with an \texttt{inf} query), Uppaal terminates or runs
forever on this model.

In order to measure the impact of the inclusion test on the algorithm, we
decided to compare the performance of \tiamo{} running one or the other
inclusion test ($\Subset$~or~$\sqsubseteq$). Our~primary concern is to compare
the number of (symbolic) states explored, and the number of inclusion tests
performed.

We run our experiments with and without pruning activated. Deactivated pruning
allows to measure the impact of the choice of the inclusion test itself. It~is
also more representative of the behavior that can be expected on models with
negative costs, for which pruning is not sound.

\begin{table}[tb]
\centering
\scriptsize
\newcolumntype{P}{>{\raggedleft}p{10mm}}
\newcolumntype{Q}{>{\raggedleft}p{12mm}}
\def\arraystretch{1.2}
\begin{tabular}{|l|c|c|c|c|c|c|c|c|c|c|}
\cline{2-11}   \multicolumn{1}{c|}{}              & \multicolumn{4}{c|}{ALS}
& \multicolumn{2}{c|}{ETS}
& \multicolumn{4}{c|}{\Cref{fig:non-terminate} (right)} \\
\cline{2-11}   \multicolumn{1}{c|}{}              & \multicolumn{2}{c|}{SBFS+P}   & \multicolumn{2}{c|}{SBFS}        & \multicolumn{2}{c|}{SBFS+P}       & \multicolumn{2}{c|}{SBFS+P} & \multicolumn{2}{c|}{SBFS} \\
\cline{2-11}   \multicolumn{1}{c|}{}              & $\sqsubseteq$   & $\Subset$     & $\sqsubseteq$ & $\Subset$         & $\sqsubseteq$ & $\Subset$           & $\sqsubseteq$   & $\Subset$   & $\sqsubseteq$   & $\Subset$ \\
\hline
$\#$ added to $\Waiting$    & 12,016          & 26,549        & 73,487        & 908,838           & 107           & 664                 & 14              & NT          & 14              & NT \\
$\#$ added to $\Passed$     & 4,914           & 9,687         & 63,914        & 908,612           & 84            & 606                 & 13              & NT          & 14              & NT \\
max. $\#$ stored            & 9,705           & 21,407        & 40,297        & 599,721           & 83            & 590                 & 14              & NT          & 14              & NT \\
\hline
$\#$ tests                  & 351,198         & 1,598,135     & 14,508,440
& %9,293,330,374     
  $>9\cdot 10^9$
& 174           & 17,684              & 135             & NT         & 135             & NT \\
$\#$ successful tests             & 11,796          & 27,885        & 354,126       & 3,627,958         & 66            & 455                 & 3               & NT           & 3               & NT \\
\hline
time                        & $< 2$ s         & $< 2$ s       & 12 s          & 41 min            & $< 1$ s       & $< 1$ s             & $< 1$ s         & NT     & $< 1$ s         & NT \\
\hline
\end{tabular}
\smallskip

\caption{Experimental results\label{table:exp}}
\end{table}

\paragraph{The models.}
We briefly describe the models used in our experiments. The first two are
case studies described on the web page of Uppaal-CORA.

The Aircraft Landing System (ALS) problem has been described
in~\Cref{fig-exintro}: it~consists in scheduling landings of aircrafts
arriving to an airport with several runways.
In our model, we had two runways and 7 aircrafts.
The model has $5$ clocks (one global clock, plus two per runway) and $48,000$ discrete states.

In the Energy-optimal Task-graph Scheduling (ETS) problem, several processors
having different speeds and powers are to be used to perform interdependent
tasks. The aim is to optimize energy consumption for performing the given set
of tasks within a certain delay. The model we used for our experiments is the
one described in~\cite[Example~3]{BFLM11}.
It has $2$ clocks (one per CPU) and $55$ discrete states.

Finally, we also ran \tiamo\ on the model~\Cref{fig:non-terminate}, to
illustrate that $\sqsubseteq$ handles unbounded models.
This model has two clocks and two discrete states.

\paragraph{Exploration strategies.}
\tiamo{} implements several strategies to explore the symbolic state
space.  We retain here only the one called SBFS, a modification of BFS
based on the observation that, if $s$ subsumes~$s'$, the successors
of~$s'$ are subsumed by successors of~$s$. Successors of~$s$ are thus
explored first, until all successors of~$s'$ in the $\Waiting$ list
are subsumed. This is a very naive implementation of a strategy
proposed in~\cite{HT15}. The strategy has two variants, depending on
whether pruning is activated~(SBFS+P) or~not~(SBFS).  For the ETS
problem, both strategies yield very similar results, so~we chose to
only present SBFS+P.

\paragraph{Experimental results.}
The results are summed up in~\Cref{table:exp}. For each model, and for
different combinations of inclusion test and exploration strategy, we indicate
the number of symbolic states explored, as~well as the number of tests
(successful or~not) that have been performed. 
We also indicate the maximal size of the list $\Passed$; although not detailed in~\Cref{algo:optreach},
the tool ensures that $\Passed$ remains an antichain.
This minimizes the number of inclusion tests.
When a new element is added to the $\Passed$ list, all elements of $\Passed$ subsumed by
the new one are removed, so that the size of $\Passed$ does not necessarily increase.

The mention ``NT'' means that the computation does not terminate.
We~observe that $\sqsubseteq$ always explores fewer states
than~$\Subset$, for any given exploration strategy. Though this was
expected (recall~\Cref{rmk:inclusion}), we believe the reduction is
impressive.  It is significant even for small models
(such~as~ETS). The~case of ALS with no pruning shows that the higher
complexity of~$\sqsubseteq$ can be largely compensated by the
reduction in the size of the state space to explore.
On the model of~\Cref{fig:non-terminate}, our inclusion~$\sqsubseteq$ ensures
termination, while $\Subset$ does~not.

\section{Conclusion}

In this paper we have built over a symbolic approach to the
computation of optimal cost in weighted timed
automata~\cite{LBB+01,RLS06}, by proposing an inclusion test between
priced zones. Using that inclusion test, the forward symbolic
exploration terminates and computes the optimal cost for all weighted
timed automata, regardless whether clocks are bounded or not. The idea
of this approach is based on recent works on pure timed
automata~\cite{HSW12}, where a clever inclusion test ``replaces'' any
abstraction computation during the exploration.

We will pursue our work with extensive experimentations using our tool
\tiamo{}. We will also look for more dedicated methods for specific
application domains, like planning problems.

\putbib[biblio,extra]
\end{bibunit}

\newpage
\appendix
\makeatletter
\@addtoreset{theorem}{section}
\makeatother
\setcounter{theorem}{0}
\def\thetheorem{\Alph{section}.\arabic{theorem}}
\def\thelemma{\Alph{section}.\arabic{lemma}}
\def\thecorollary{\Alph{section}.\arabic{corollary}}
\def\theproposition{\Alph{section}.\arabic{proposition}}
\def\thedefinition{\Alph{section}.\arabic{definition}}
\begin{bibunit}[alpha]
\section{A counter-example to the algorithm of~\cite{ARF14}}
\label{app:australiens}

\Cref{algo:australiens} is the algorithm proposed in~\cite{ARF14} to compute
optimal-time reachability in timed automata. Briefly, the approach considers a
global clock, denoted by $x_i$, to measure the duration of the current run.
The algorithm explores the state space in a forward manner.
Newly encountered zones are compared against those already explored using the
abstract inclusion test described in~\cite{HSW12}, with the particularity that
the global clock~$x_i$ is \emph{not} considered in this test. The~clock~$x_i$
keeps track of the time elapsed so~far, and plays no role in the transition
relation, and this is why it does not need to be used in the comparison of
zones. 

The abstract test between $Z$ and $Z'$ is noted $Z \subseteq
closure_{\alpha}(Z')$. Zones are represented as sets of clock constraints,
$\textit{UC}$~denotes the set of constraints involving the global clock~$x_i$.
Thus, $Z \setminus \textit{UC}$ denotes the zone~$Z$ where constraints on~$x_i$
are ignored (though this is not formally defined in~\cite{ARF14}).

\SetKwData{lowerbound}{lowerBound}
\SetKwData{Cost}{BCET}

\begin{algorithm}[ht]

\DontPrintSemicolon

$\Cost \gets \infty$ \;
$\Passed \gets \emptyset$ \;
$\Waiting \gets \{ (\ell_0, Z_0) \}$ \;
\While{$\Waiting \neq \emptyset$}{
  select $(\ell, Z)$ from $\Waiting$ \;
  \tcp{In \cite{ARF14}, $\Goal$ is the set of final states, \ie{} states with no successors}
  \If{$(\ell, Z) \in \Goal$ and $\lowerbound(Z, x_i) < \Cost$}{
    $\Cost \gets \lowerbound(Z, x_i)$ \;
  } %end if
  add $(\ell, Z)$ to $\Passed$ \;
  \ForAll{$(\ell',Z') \in \Post(\ell, Z)$}{
    \tcp{check if lower bound of $x_i$ in the new zone is less than the best known BCET}
    \If{$\lowerbound(Z', x_i) < \Cost$}{
      \If{$(Z' \setminus \textit{UC}) \not\subseteq closure_{\alpha}(Z'' \setminus \textit{UC})$ for all $(\ell', Z'') \in \Passed$}{
        add $(\ell', Z')$ to $\Waiting$ \;
      } %end if
    } %end if
  } %end for
} %end while
\Return{$\Cost$}

\caption{Algorithm for optimal time reachability
  from~\cite{ARF14}}\label{algo:australiens} 
\end{algorithm}

We claim that the algorithm fails to find the BCET of the automaton shown
in~\Cref{fig:contrexemple}. We~show this by describing the run
of~\Cref{algo:australiens} on this automaton.

\begin{figure}[ht]
  \centering
  \begin{tikzpicture}
      \everymath{\scriptstyle}
      
      \draw (0,0) node [rond, colvert,inner sep=0pt] (l0) {$\ell_0$};
      
      \draw (3,0) node [rond,colbleu,inner sep=0pt] (l1) {$\ell_1$};

      \draw (1.5,-1) node [rond,coljaune,inner sep=0pt] (l2) {$\ell_2$};

      \draw (5,0) node [rond,colrouge,inner sep=0pt] (l3) {$\ell_3$};

      \draw (7,0) node [inner sep=0pt,rond,colgris] (lf)
      {$\smiley$};

      \draw [-latex'] (l0) -- (l1) node [midway,above]
        {$x > 1$} node [midway,below] {$e_1$};

      \draw [-latex',rounded corners=2mm] (l0) |- (l2) node
            [pos=.75,above] 
      {$e_2$} node [pos=.75,below] {$x \leq 1$};

      \draw [-latex'] (l1) -- (l3) node [midway,below,sloped]
      {$e_3$} node [midway,sloped,above] {$x:=0$};
      
      \draw [-latex'] (l3) -- (lf) node [midway,below,sloped]
      {$e_5$} node [midway,sloped,above] {};

      \draw [-latex',rounded corners=2mm] (l2) -| (l1) node
            [pos=.25,below] {$x > 2, x := 0$} 
      node [pos=.25,above] {$e_4$};

      \draw [latex'-] (l0) -- +(-.6,0);
    \end{tikzpicture}
  \caption{Counter-example to~\Cref{algo:australiens}}
  \label{fig:contrexemple}
\end{figure}

Assume that the state space is explored depth-first.
The initial state is $s_0 = (\ell_0, (x = x_i = 0))$. Initially, $\Cost =
\infty$, $\Passed = \emptyset$ and $\Waiting = \{s_0\}$. State~$s_0$ has two
successors $s_1 = (\ell_1, (x = x_i \wedge x > 1))$ and $s_2 = (\ell_2, (x =
x_i \wedge x \leq 1))$. It~holds $\lowerbound(s1,x_i) = 1 < \Cost$ and
$\lowerbound(s2,x_i) = 2 < \Cost$, so both states are added to~$\Waiting$,
while $s_0$ is added to~$\Passed$.

If $s_2$ is the next state being picked out of~$\Waiting$, it~is easy to see
that its successors $s_3 = (\ell_1, (x_i > 2 \wedge x = 0))$ and $s_4 =
(\ell_3, (x_i > 2 \wedge x = 0))$ are added to~$\Passed$. At~the~end of this
branch, state~$s_5 = (\ell_f, (x_i > x+2))$ is final, and its lower value
for~$x_i$ is used to update the value of~$\Cost$. We now have $\Cost = 2$.

$\Waiting$ now contains a single state~$s_1$. Its single successor is $s_6 =
(\ell_3, (x_i > 1 \wedge x = 0))$. Its lower value for $x_i$ is~$1$, which is
smaller than the current value for~$\Cost$. But $((x_i > 1 \wedge x = 0)
\setminus \textit{UC}) \subseteq closure_{\alpha}((x_i > 2 \wedge x = 0)
\setminus \textit{UC})$. Indeed, ``$\textit{UC}$ represents the set of
constraints involving the extra clock $x_i$'', so $((x_i > 1 \wedge x = 0)
\setminus \textit{UC}) = (x=0) = ((x_i > 2 \wedge x = 0) \setminus
\textit{UC})$. Since $Z \subseteq closure_{\alpha}(Z)$ for every $Z$,
we~obtain the inclusion above. The~paper further states that ``it~is necessary
to check inclusion between zones with respect only to the automaton clocks'',
which confirms our understanding of the operation~``$\setminus \textit{UC}$''.

Therefore, the state $s_6$ is \emph{not} added to~$\Waiting$, which is now
empty, and the algorithm terminates, returning the value $\Cost = 2$.

It is clear however that the BCET of the given automaton is~$1$, a~value that
would have been discovered by the algorithm if~$s_1$ had been picked out of
$\Waiting$ before~$s_2$. However the paper does not mention any assumptions
regarding the order of exploration that would invalidate our counter-example.
We contacted the authors of~\cite{ARF14}, asking to get access to the prototype
implementation mentioned in the paper, but did not get an answer.

\section{Details for~\Cref{sec:algo}}

\subsection{Proof of~\Cref{lemma:equivalence_same_future}}

\eqsamefuture*

\begin{proof}
  Pick a run $\varrho$ starting at $s=(\ell,v)$, and assume $\varrho =
  s \xrightarrow{t_1,e_1} s_1 \dots \xrightarrow{t_p,e_p} s_p$. Then
  $\varrho' = s' \xrightarrow{t_1,e_1} s'_1 \dots
  \xrightarrow{t_p,e_p} s'_p$ is also a run of $\A$ from $s'$ such
  that $s_p \equiv_M s'_p$, and $\cost(\varrho) = \cost(\varrho')$.
  This can be easily shown by induction on $p$. This reinforces the
  property stated in~\cite{BBLP05} saying that $\equiv_M$ is a strong
  timed bisimulation. 
\qed
\end{proof}

\subsection{Proof of~\Cref{theo:soundness} (soundness of~\Cref{algo:optreach})}
\label{app:soundness}
\SetKwData{Cost}{\textsc{Cost}}

\theosoundness*

\begin{proof}
  We assume~\Cref{algo:optreach} terminates.
  Remark that all elements added to the sets $\Waiting$ or $\Passed$
  are obtained through the \Post\ operation from elements of
  $\Waiting$.  Initially, $\Waiting$ contains the initial priced zone
  $(\ell_0,\mathcal{Z}_0)$, hence $\Waiting$ only contains elements
  from $\Post^*(\ell_0,\mathcal{Z}_0)$.  Such elements are necessarily
  realized in $\A$, hence the computed $\Cost$ is not smaller than the
  actual optimal cost $C = \optcost_\A(s_0)$: $\Cost \ge C$.

  \begin{itemize}
  \item Assume that $C$ is finite. Let $\epsilon>0$ and
    $\rho=(\ell_0,v_0) \xrightarrow{t_1,e_1} (\ell_1,v_1) \dots
    \xrightarrow{t_p,e_p} (\ell_p,v_p)$ be an $\epsilon/3$-optimal run
    ($\ell_p$ is a target location):
    % \NMnote{Requires $C>-\infty$}: 
    $\cost(\rho) \le C+\epsilon/3$.  Unless it is already the case for
    $\rho$, we will build a run $\widehat\rho$ such that all symbolic
    states along $\widehat\rho$ are added to $\Passed$, and
    $\cost(\widehat\rho) \le C+\epsilon$. In particular, if
    $(\widehat\ell,\widehat{\mathcal{Z}})$ is the last symbolic state
    obtained along $\widehat\rho$, and $\widehat{v}$ is the last
    valuation along $\widehat\rho$, $\mincost(\widehat{\mathcal{Z}})
    \le \widehat\zeta(\widehat{v}) \le \cost(\widehat\rho) \le
    C+\epsilon$. Hence the computed cost $\Cost$ is bounded by
    $C+\epsilon$. As $\epsilon>0$ was arbitrary, we conclude that the
    algorithm really computes the optimal cost.

    Assume that not all symbolic states along $\rho$ are added to
    $\Passed$, and write $\rho_1$ for the longest prefix of $\rho$
    along which all symbolic states are added to $\Passed$. Write
    $\tau$ for the next move along $\rho$, and $\rho_2$ for the suffix
    after $\rho_1 \tau$. Let $(\ell,v)$ be the configuration at the
    end of $\rho_1 \tau$, and $(\ell,\mathcal{Z})$ be the
    corresponding symbolic state (which we know, by hypothesis, has
    not been added to $\Passed$). Hence, at the time when
    $(\ell,\mathcal{Z})$ is handled, there exists some
    $(\ell,\mathcal{Z}')$ in $\Passed$, such that $\mathcal{Z}
    \sqsubseteq_M \mathcal{Z}'$.  Thus, there is some valuation $v' \in Z'$
    such that $v' \equiv_M v$ and $\zeta'(v') \le
    \zeta(v)+\epsilon/3$.  Since $(\ell,\mathcal{Z}') \in \Passed$,
    there is a run $\rho'$ from $(\ell_0,v_0)$ to $(\ell,v')$ such
    that (a)~all symbolic states along $\rho'$ have been added to
    $\Passed$, and (b)~$\cost(\rho') \le \zeta'(v') +\epsilon/3 \le
    \zeta(v) + 2\epsilon/3$. Now, since $v' \equiv_M v$, thanks to the
    proof of~\Cref{lemma:equivalence_same_future}, there is a run
    $\rho''$ from $(\ell,v')$ to the target state, with the same
    length as $\rho_2$, such that $\cost(\rho'') = \cost(\rho_2)$. The
    new run $\widetilde\rho = \rho' \cdot \rho''$ reaches the target
    location and its cost is: $\cost(\widetilde\rho) = \cost(\rho') +
    \cost(\rho'') \le \zeta(v) + 2 \epsilon/3 + \cost(\rho_2) \le
    \cost(\rho_1 \tau) + 2 \epsilon/3 + \cost(\rho_2) = \cost(\rho) +
    2\epsilon/3 \le C+\epsilon$.

    We repeat the argument on $\rho_2$, which is smaller than the
    initial run $\rho$, and inductively we are able to build the
    expected run $\widehat\rho$ that we described earlier. 
  \item Assume that $C=\infty$. This means that the target location is
    not reachable, and the algorithm never updates variable $\Cost$;
    its value is therefore $\infty$ as well.
  \item Assume that $C=-\infty$. A reasoning similar to the first case
    can be done, with a slight adaptation: we fix some arbitrary bound
    $K$ and $\epsilon>0$, and we pick the run $\rho$ such that
    $\cost(\rho) \le K$. The rest is identical: we will build a run
    $\widehat\rho$ such that all symbolic states along $\widehat\rho$
    are added to $\Passed$, and $\cost(\widehat\rho) \le \cost(\rho)
    +\epsilon$. \qed
  \end{itemize}
  
\end{proof}

\section{Details for~\Cref{sec:inclusion}}

\subsection{Formulation of the optimization problem}

\optimisation*

\begin{proof}
  \begin{description}
  \item[$(\Rightarrow)$] By definition of $\sqsubseteq$, for every
    $\epsilon > 0$ and $v \in Z$, there is some $v' \in Z'$ such that
    $v \equiv v'$ and $\zeta'(v') \leq \zeta(v) + \epsilon$.  The
    existence of such a $v'$ guarantees that the infimum is not taken
    over an empty set, and hence is not $+\infty$.  Thus, for every $v
    \in Z$, $\inf_{\substack{v' \in Z'\\v' \equiv v}} \zeta'(v') -
    \zeta(v) \leq 0$, from which the result is straightforward.  Note
    that if $Z$ is empty, then the supremum is $-\infty$, and the
    result still holds.
  \item[$(\Leftarrow)$] Let $v \in Z$ and $\epsilon > 0$.
    $\inf_{\substack{v' \in Z'\\v' \equiv v}} \zeta'(v') - \zeta(v)
    \leq 0$, which guarantees the existence of $v' \in Z'$ such that
    $v' \equiv v$ (otherwise this infimum, and the above supremum
    also, would be $+\infty$).  Furthermore, $v'$ can be chosen so
    that $\zeta'(v') - \zeta(v) \leq \epsilon$. \qed
  \end{description}
\end{proof}

\begin{lemma}\label{lem-equivpi}
  Let $Y\subseteq X$, $v\in Z_Y$ and $v'\in Z'$. Then $v\equiv v'$ if, and
  only~if, $v'\in Z'_Y$ and $\pi_Y (v)=\pi_Y(v')$.
\end{lemma}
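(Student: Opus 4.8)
The statement is a pure bookkeeping lemma, and the plan is to unfold the three definitions involved — $\equiv_M$, the $M$-bounded-on-$Y$ restriction $Z_Y$ (and likewise $Z'_Y$), and the projection $\pi_Y$ — and match them up. Throughout, fix $v \in Z_Y$; by definition of $Z_Y$ this means precisely that $v(x)\le M(x)$ for every $x\in Y$ and $v(x)>M(x)$ for every $x\in X\setminus Y$.

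\textbf{Direction $(\Rightarrow)$.} Assume $v\equiv v'$. For $x\in Y$ we have $v(x)\le M(x)$, so the clause ``$v(x)>M(x)$ and $v'(x)>M(x)$'' of the definition of $\equiv$ cannot apply; hence $v(x)=v'(x)$, and in particular $v'(x)=v(x)\le M(x)$. This already gives $\pi_Y(v)=\pi_Y(v')$. For $x\in X\setminus Y$ we have $v(x)>M(x)$, so whichever clause of $\equiv$ holds at $x$, we get $v'(x)>M(x)$ (either directly, or because $v'(x)=v(x)>M(x)$). Combining the two cases, $v'$ satisfies $x\le M(x)$ for $x\in Y$ and $x>M(x)$ for $x\notin Y$; together with the hypothesis $v'\in Z'$ this is exactly the definition of $v'\in Z'_Y$.

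\textbf{Direction $(\Leftarrow)$.} Assume $v'\in Z'_Y$ and $\pi_Y(v)=\pi_Y(v')$. For $x\in Y$, the equality of projections gives $v(x)=v'(x)$, so the first clause of $\equiv$ holds at $x$. For $x\in X\setminus Y$, membership $v\in Z_Y$ gives $v(x)>M(x)$ and membership $v'\in Z'_Y$ gives $v'(x)>M(x)$, so the second clause of $\equiv$ holds at $x$. Hence $v\equiv v'$.

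\textbf{Expected difficulty.} There is essentially no obstacle here: it is a routine verification by definition-chasing. The only point that deserves a moment's care is, in the forward direction, to conclude $v'\in Z'_Y$ rather than merely ``$v'$ meets the membership constraints defining $Z_Y$'': one needs the hypothesis $v'\in Z'$ for that, and one needs the slightly asymmetric reading of $\equiv$ (the ``$v(x)=v'(x)$'' clause may fire even above the bound), which is why the argument at clocks $x\notin Y$ goes through both clauses. The edge case $M(x)=-\infty$ causes no trouble, since then $x\notin Y$ automatically (as $v(x)\ge 0>-\infty$).
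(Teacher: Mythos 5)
Your proof is correct and follows essentially the same definition-unfolding argument as the paper's own proof, with the same case split on $x\in Y$ versus $x\notin Y$ in each direction. The extra care you take at clocks $x\notin Y$ (noting that either clause of $\equiv$ yields $v'(x)>M(x)$) and in invoking $v'\in Z'$ to conclude $v'\in Z'_Y$ is sound and only makes the paper's terser argument more explicit.
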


\begin{proof}
  If~$v\equiv v'$, then for all~$x\in X$, we~have $v(x)=v'(x)$ or both $v(x)$
  and $v'(x)$ are larger than~$M$. As~$v\in Z_Y$, it~holds $v(x)\leq M$ for~$x\in
  Y$ and $v(x)>M$ for~$x\notin Y$. Hence $v'(x)=v(x)$ for~$x\in Y$ and
  $v'(x)>M$ for $x\notin Y$. It~follows $v'\in Z'_Y$ and $\pi_Y
  (v)=\pi_Y(v')$.

  Conversely, if $v'\in Z'_Y$ and $\pi_Y (v)=\pi_Y(v')$, then $v'(x)>M$
  if~$x\notin Y$, and $v(x)=v'(x)$ for~$x\in Y$. Hence $v\equiv v'$.\qed
\end{proof}

Lemma~\ref{lem-equivpi} entails the following two lemmas:

\timedinclusion*

\begin{proof} 
  Assume \textit{(i)}~holds, and pick $Y\subseteq X$. If~$Z_Y$ is empty, the
  inclusion $\pi_Y(Z_Y) \subseteq \pi_Y(Z'_Y)$ is trivial. Otherwise,
  pick~$v\in Z_Y$. Applying~\textit{(i)}, there exists~$v'\in Z'$ s.t.
  $v\equiv v'$. From~Lemma~\ref{lem-equivpi}, $v'\in Z'_Y$ and
  $\pi_Y(v)=\pi_Y(v')$. Hence $\pi_Y(Z_Y) \subseteq \pi_Y(Z'_Y)$.

  Conversely, assuming~\textit{(ii)}, pick~$v\in Z$. Then for some
  (unique)~$Y$, we~have $v\in Z_Y$, so that there exists $v'\in Z'_Y$ for
  which $\pi_Y(v)=\pi_Y(v')$. Applying Lemma~\ref{lem-equivpi}, it~also holds
  $v'\equiv v$, which concludes the proof.\qed
\end{proof}

\decomposition*

\begin{proof}
  It~suffices to notice that when $v\in Z_Y$, the sets $\{v\in Z\mid v\equiv
  v'\}$ and $\{v\in Z_Y \mid v\equiv v'\}$ coincide.\qed
\end{proof}

\subsection{Proof of~\Cref{theo:projections-facets}}
\label{app:proof}

\projectionfacettes*

For the rest of this section, we fix $Y \subseteq X$ such that $Z_Y
\neq \emptyset$ (otherwise $S(\calZ,\calZ',Y)=-\infty$, and we can
take $\mathcal{K} = \mathcal{K}' = \emptyset$).  We~assume that
$\pi_Y(Z_Y) \subseteq \pi_Y(Z_Y')$ (otherwise by
Lemma~\ref{lemma:timed_inclusion}, for some~${v\in Z_Y}$, the~set
$\{v'\in Z'_Y\mid v'\equiv v\}$ would be empty, and
$S(\calZ,\calZ',Y)=+\infty$).
In~this section, we~furthermore require that $\zeta$ and~$\zeta'$ are
bounded from below over $Z_Y$ and~$Z'_Y$, respectively.  Under these
assumptions, we~describe an algorithm to compute~$S(\calZ,\calZ',Y)$.

First remark that, for $v \in Z_Y$ and $v' \in Z_Y'$, it~holds $v
\equiv v'$ if, and only~if, $\pi_Y(v) = \pi_Y(v')$. In~this
subsection, we frequently use this characterization, which can be
extended by continuity to the boundaries of $Z_Y$ and $Z'_Y$.

\paragraph{Facets and decomposition using facets.}
We recall the notion of facets, originally defined in~\cite{LBB+01}. Whenever
$x \bowtie n$ (resp. $x-y \bowtie m$) is a tight constraint defining a
\emph{closed} zone~$W$, the~strengthened zone $W \wedge (x = n)$
(resp.~$W\wedge (x-y = m)$) is called a \emph{simple facet} (w.r.t.~$x$) of~$W$. A~simple
facet of a zone is itself a zone, hence we can look at its simple facets
as~well, and so on iteratively 
we~call them the \emph{facets} of~$W$, and we
say that~$W$ is a facet of~$W$ as~well.
Facets of a non-closed zone are those of its closure.

We note $\LF_x(W)$ (resp. $\UF_x(W)$) the set of \emph{lower simple facets
  w.r.t~$x$} (resp. \emph{upper simple facets w.r.t.~$x$}) of~$W$, i.e. those
obtained from a lower bound $x \ge n$ or $x - y \geq m$ (resp.~from an upper
bound $x \leq n$ or $x-y \leq m$). We~write~$e_x$ for the unit vector in the
direction of the $x$-axis. If~$F$~is a lower (resp.~upper) simple facet of~$W$
w.r.t.~$x$, and $v_F \in F$, we~note $W^F(v_F)$ the set $\{ v \in W \mid
\exists \lambda \in \posreal .\ v - \lambda.e_x = v_F \}$ (resp. $\{ v \in W
\mid \exists \lambda \in \posreal .\ v + \lambda.e_x = v_F \}$), the set of
points of~$W$ that lie above (resp.~below) $v_F$ in the direction of the
$x$-axis. Similarly, we~note $W^F = \bigcup_{v_F \in F} W^F(v_F)$, the~set of
points of~$W$ that lie above (resp.~below)~$F$ in the direction of the
$x$-axis. Note that it holds $\bigcup_{F \in \LF_x(W)} W^F = W$; 
in~case $W$ contains no half-line of the form $v +
\posreal.e_x$ for any $v \in W$, then we also have  $\bigcup_{F
  \in \UF_x(W)} W^F = W$.

\paragraph{Reduction of one variable in the optimization problem.}
We recall that $\zeta$ and $\zeta'$ are affine functions over $Z$ and
$Z'$ respectively.  We note $\zeta_x = \frac{\partial}{\partial x}
\zeta$ and $\zeta_x' = \frac{\partial}{\partial x} \zeta'$ for every
$x \in X$. We note $c$ the constant term of $\zeta$.

Let $x \in X \setminus Y$.  We define $\mathcal{F}_x(Z_Y)$ as
$\LF_x(Z_Y)$ if $\zeta_x \geq 0$, and as $\UF_x(Z_Y)$ if $\zeta_x < 0$.
In the former case, $Z_Y = \bigcup_{F \in \mathcal{F}_x(Z_Y)} Z_Y^F$
always holds.  In~the latter case, recall that $\zeta$ is bounded from
below on~$Z$, so $Z$ contains no half-line $v + \posreal.e_x$ for any $v
\in Z$, otherwise $\zeta$ would take arbitrarily low values along this
half-line.  This ensures that $Z_Y = \bigcup_{F \in
  \mathcal{F}_x(Z_Y)} Z_Y^F$ in both cases.

Now, we~have:
\begin{xalignat}1
S(\calZ,\calZ',Y) &= \sup_{v \in Z_Y\vphantom{Z'_Y}} \pinf_{\substack{v' \in Z'_Y\\v' \equiv_M
    v}} \zeta'(v') - \zeta(v)\notag\\ 
  &= \sup_{u \in \pi_Y(Z_Y)} \sup_{\substack{v \in Z_Y\\\pi_Y(v) =
    u}} \inf_{\substack{v' \in Z_Y'\\\pi_Y(v') = u}} (\zeta'(v')
    - \zeta(v))\notag\\\noalign{\pagebreak[1]}
  &= \sup_{u \in \pi_Y(Z_Y)} \Bigl[\Bigr(\inf_{\substack{v' \in Z_Y'\\\pi_Y(v') = u}}
  \zeta'(v')\Bigr) + \Bigl(\sup_{\substack{v \in Z_Y\\\pi_Y(v) = u}}
    (- \zeta(v))\Bigr)\Bigr] \notag \\
\noalign{\hfill(because the constraint on~$v'$ is now independent
  of~$v$}
\noalign{\hfill and both $\zeta$ and $\zeta'$ are bounded from below)}
  &= \sup_{u \in \pi_Y(Z_Y)} \Bigl[\Bigl(\inf_{\substack{v' \in
  Z_Y'\\\pi_Y(v') = u}} \zeta'(v')\Bigr) - 
  \Bigl(\inf_{\substack{v \in Z_Y\\\pi_Y(v) = u}} \zeta(v)\Bigr)\Bigr]
  \label{eq-dimZY}\\
  &= \max_{F \in \mathcal{F}_x(Z_Y)} \sup_{u \in \pi_Y(Z_Y^F)} \Bigl[\Bigl(\inf_{\substack{v' \in
  Z_Y'\\\pi_Y(v') = u}} \zeta'(v')\Bigr) - \Bigl(\inf_{\substack{v \in Z_Y\\\pi_Y(v) =
  u}} \zeta(v)\Bigr)\Bigr].\notag 
\end{xalignat}

For~$F\in \mathcal{F}_x(Z_Y)$, write~$\gamma_F$ for the extra
constraint defining~$F$ in $\adherence{Z_Y}$ (of~the form $x=n$ or
${x-y=m}$). For~$v \in F$, the~value of~$v(x)$ is entirely determined
by the values $(v(y))_{y \ne x}$ and by the constraint $\gamma_F$:
$v(x)=n$ or $v(x)=v(y)+m$. Thus, $\pi_{X \setminus \{x\}}$ is a
bijection between $F$ and $\pi_{X \setminus \{x\}}(F)$. For $w \in
\pi_{X \setminus \{x\}}(F)$, we note $\iota_F(w)$ the unique point of
$F$ such that $w = \pi_{X \setminus \{x\}}(\iota_F(w))$.

We define a new cost function $\zeta_F$ on $\pi_{X\setminus\{x\}}(F)$
by $\zeta_F(w) = \zeta(\iota_F(w))$. We~remark that $\zeta_F(w) =
\zeta_x\cdot (\iota_F(w))(x) + \sum_{y \in X \setminus \{x\}}
\zeta_y\cdot w(y) + c$. Since $(\iota_F(w))(x)$ is an affine function
of~$w$, so~is~$\zeta_F$.

By definition of $\mathcal F_x(Z_Y)$, we~have $\zeta_F(w) = \inf_{v
  \in Z_Y^F(\iota_F(w))} \zeta(v)$. Now, remark that
$Z_Y^F(\iota_F(w)) = \{v \in Z_Y \mid \pi_{X \setminus \{x\}}(v) =
w\}$, so that $\zeta_F(w) = \inf_{v \in Z_Y,\ \pi_{X \setminus
    \{x\}}(v)=w} \zeta(v)$.
For $u \in \pi_Y(Z_Y^F)$, we can then write:
\begin{xalignat*}1
  \inf_{\substack{v \in Z_Y \\ \pi_Y(v) = u}} \zeta(v) &=
    \inf_{\substack{w \in \pi_{X \setminus \{x\}}(F) \\ \pi_Y(w)=u}}
    \inf_{\substack{v \in Z_Y \\ \pi_{X \setminus \{x\}}(v)=w}} \zeta(v)
  \\
  &=
    \inf_{\substack{w \in \pi_{X \setminus \{x\}}(F) \\ \pi_Y(w)=u}} \zeta_F(w)
\end{xalignat*}

The above being true for every $F \in \mathcal{F}_x(Z_Y)$, we~get:
\[
S(\calZ,\calZ',Y) = \max_{F \in \mathcal{F}_x(Z_Y)} \sup_{u \in \pi_Y(Z_Y^F)}
\Bigl[\Bigl(\inf_{\substack{v' \in Z_Y'\\\pi_Y(v') = u}} \zeta'(v')\Bigr) -
\Bigl(\inf_{\substack{w \in \pi_{X \setminus \{x\}}(F)\\\pi_Y(w) = u}}
\zeta_F(w)\Bigr)\Bigr].
\]
Finally, note that $\adherence{\pi_Y(F)} = \adherence{\pi_Y(\pi_{X
    \setminus \{x\}}(F))} = \adherence{\pi_Y([\pi_{X \setminus
    \{x\}}(F)]_Y)} = \adherence{\pi_Y(Z_Y^F)}$ (because $x\in
X\setminus Y$ and $F\subseteq \adherence{Z_Y}$), so that we conclude:
\[
S(\calZ,\calZ',Y) = \max_{F \in \mathcal{F}_x(Z_Y)} \sup_{u \in
  \pi_Y([\pi_{X \setminus \{x\}}(F)]_Y)}
\Bigl[\Bigl(\inf_{\substack{v' \in Z_Y'\\\pi_Y(v') = u}}
\zeta'(v')\Bigr) - \Bigl(\inf_{\substack{w \in \pi_{X \setminus
      \{x\}}(F)\\\pi_Y(w) = u}} \zeta_F(w)\Bigr)\Bigr].
\]
This can be rewritten as\footnote{Notice that the definition
  of~$S(\calZ,\calZ',Y)$ requires $\calZ$ and~$\calZ'$ to have the
  same dimension (because $\equiv$~does). However, using
  Eq.~\eqref{eq-dimZY}, we~can extend the definition to any zones
  involving at least the clocks of~$Y$.}
\[
S(\calZ,\calZ',Y) = \max_{F \in \mathcal{F}_x(Z_Y)} S((\pi_{X
  \setminus \{x\}}(F),\zeta_F),\calZ',Y)
\]
where $(\pi_{X \setminus \{x\}}(F),\zeta_F)$ is a new priced zone.

\paragraph{Iteration of the construction.}
For each sequence $(x_i)_{1 \le i \le p}$ of distinct elements of~$X\setminus Y$,
we define $\mathcal{F}_{x_1,\dots,x_p}(Z_Y)$ inductively as follows:
\[
\mathcal{F}_{x_1,\dots,x_p}(Z_Y) = \bigcup_{F \in
  \mathcal{F}_{x_1,\dots,x_{p-1}}(Z_Y)} \mathcal{F}_{x_{p}}(\pi_{X
  \setminus \{x_1,\dots,x_{p-1}\}}(F)).
\]
Note that $F \in \mathcal{F}_{x_1,\dots,x_{p-1}}(Z_Y)$ is a zone over
$X \setminus \{x_1,\dots,x_{p-1}\} \supseteq Y$.

By iteratively applying the above construction, we~can compute, for each ${F\in
\mathcal{F}_{x_1,\dots,x_{p-1}}(Z_Y)}$, an~affine function~$\zeta_F$ such that
\begin{equation}
\forall w \in \adherence{\pi_{X \setminus \{x_1,\dots,x_p\}}(F)}.\quad 
\zeta_F(w) = \inf_{\substack{v \in Z_Y \\ \pi_{X \setminus
      \{x_1,\dots,x_p\}}(v)=w}} \zeta(v).
\label{eq-zetainf}
\end{equation}
The previous analysis also entails:
\[
S(\calZ,\calZ',Y)  = \max_{F \in \mathcal{F}_{x_1,\dots,x_p}(Z_Y)} \sup_{u \in
  \pi_Y(F)} \inf_{\substack{v'
    \in Z_Y'\\\pi_Y(v') = u}} \zeta'(v') - \inf_{\substack{w \in
    \pi_{X \setminus \{x_1,\dots,x_p\}}(F)\\\pi_Y(w) = u}} \zeta_F(w).
\]

Then, when $\{x_1,\dots,x_p\} = X\setminus Y$, we~have $X \setminus
\{x_1,\dots,x_p\}=Y$, so that 
\[
\forall u\in \pi_Y(F).\quad
\inf_{\substack{w \in
    \pi_{X \setminus \{x_1,\dots,x_p\}}(F)\\\pi_Y(w) = u}} \zeta_F(w) = 
\inf_{\substack{w \in
    \pi_{Y}(F)\\\pi_Y(w) = u}} \zeta_F(w) =
\zeta_F(u).
\]
It~follows:
\begin{xalignat*}1 S(\calZ,\calZ',Y) &= \max_{F \in
    \mathcal{F}_{X\setminus Y}(Z_Y)} \sup_{u \in \pi_Y(F)}
  \inf_{\substack{v' \in Z_Y'\\\pi_Y(v') =
      u}} \zeta'(v') - \zeta_F(u) \\
  & = \max_{F \in \pi_Y(\mathcal{F}_{X\setminus Y}(Z_Y))} \sup_{u \in F}
  \inf_{\substack{v' \in Z_Y'\\\pi_Y(v') =
      u}} \zeta'(v') - \zeta_F(u) 
\end{xalignat*}
where we write $\calF_{X\setminus Y}(Z_Y)$ for
$\mathcal{F}_{x_1,\dots,x_p}(Z_Y)$. Elements thereof are facets of
zones over $Y \cup \{x_p\}$, and elements of $\pi_Y(\calF_{X\setminus
  Y}(Z_Y))$ are thus zones over $Y$.

\medskip A similar construction of facets for $Z_Y'$ can be performed:
for $x\in X\setminus Y$, we~consider the set $\calF'_x(Z'_Y)$ of upper
or lower (depending on~$\zeta'_x$) facets of~$Z'_Y$
w.r.t.~$x$. For~each~$F'\in \calF'_x(Z'_Y)$, we~define the~affine
function~$\zeta'_{F'}$ over $\pi_{X\setminus \{x\}}(F')$ using the
``inverse'' projection on~$F'$. By~construction of~$\calF'_X(Z'_Y)$,
it~satisfies $\zeta'_{F'}(w')=\inf_{v'\in Z'_Y,\
  \pi_{X\setminus\{x\}}(v')=w'} \zeta'(v')$ for
all~$w'\in\pi_{X\setminus\{x\}}(F')$.  The situation is similar as
previously, and we end up with
\[
\forall u\in \pi_Y(Z^{\prime F'}_Y).\   \inf_{\substack{v' \in Z'_Y \\
    \pi_Y(v') = u}} \zeta'(v') = \inf_{\substack{w' \in \pi_{X
      \setminus \{x\}}(F') \\ \pi_Y(w')=u}} \zeta'_{F'}(w').
\]
It~follows:
\begin{xalignat*}1 S(\calZ,\calZ',Y) &= \pmax_{F \in
    \pi_Y(\mathcal{F}_{X\setminus Y}(Z_Y))} \pmax_{F'\in
    \calF_{x}(Z'_Y)} \sup_{u\in F \cap \pi_Y(Z^{\prime F'}_Y)}
  \inf_{\substack{w' \in \pi_{X \setminus \{x\}}(F') \\ \pi_Y(w')=u}}
  \zeta'_{F'}(w') - \zeta_F(u).
\end{xalignat*}
By iteratively applying the same transformation for all clocks
in~$X\setminus Y$, and using the same notations as above, we
finally~get
\begin{equation}
  S(\calZ,\calZ',Y) = \pmax_{F \in \pi_Y(\calF_{X\setminus Y}(Z_Y))} 
  \pmax_{F' \in\pi_Y(\mathcal{F}'_{X\setminus Y}(Z'_Y))}
  \sup_{u \in F \cap F'} \zeta'_{F'} (u) - \zeta_F(u).
\label{eq-final}
\end{equation}
This concludes the proof of~\Cref{theo:projections-facets}.

\subsection{Effectiveness of $\sqsubseteq$}

\relaxation*

\begin{proof}
  Assume $\zeta$ is not lower-bounded on $Z$ but $\zeta'$ is
  lower-bounded on $Z'$, then it is easy to see that $\mathcal{Z}
  \not\sqsubseteq \mathcal{Z}'$ (picking $v \in Z$ such that $\zeta(v)
  < \inf_{v' \in Z'} \zeta'(v')-1$, it is easy to see that we cannot
  find any $v' \in Z'_Y$, $v' \equiv v$, such that $\zeta'(v') \le
  \zeta(v)+1$).

  Now assume that $\zeta'$ is not lower-bounded on $Z'$.  Let $Y
  \subseteq X$.  If $\zeta'$ is lower-bounded on $Z_Y'$, $Z_Y
  \subseteq Z_Y'$ is decided with~\Cref{theo:projections-facets} or
  with the result above (depending on whether $\zeta$ is lower-bounded
  or not on $Z_Y$).

  Otherwise, there exists a direction $d$ and a valuation $v_0' \in
  Z_Y'$ such that the half-line $v_0' + \posreal \cdot d$ is included in
  $Z_Y'$, and $\zeta'$ decreases strictly along this half-line.  We
  have $d\cdot e_y = 0$ for every $y \in Y$ (otherwise the whole half-line
  would not be in $Z_Y'$), and $d\cdot e_x \geq 0$ for every $x \in X
  \setminus Y$.  Furthermore, as explained below, the half-lines
  $v' + \posreal \cdot d$ are all included in~$Z_Y'$, for and $v' \in Z_Y'$.
  % (otherwise $Z_Y'$ would not be convex).  
  For $v \in Z_Y$, there exists $v' \in Z_Y'$ such that $v \equiv_M
  v'$.  But $d$ is such that the half-line $v' + \posreal \cdot d$ is
  included in~$[v]_M$.  The~affine function~$\zeta'$ takes arbitrarily
  low values along this half-line, so we can always find $v' \equiv_M
  v$ with $\zeta'(v') \leq \zeta(v)$.  This holds even if $\zeta$ is
  not lower-bounded on~$Z_Y$.

\medskip
  We now prove our claim on half-lines: more precisely, we show that for
  any~$v$ and~$v'$ in some zone~$Z$, and any vector~$d$ such that $v+\posreal\cdot d
  \subseteq Z$, it holds $v'+\posreal\cdot d\subseteq Z$.

  We~first characterize the directions in which $Z$ is unbounded.
  Let $C$ be the set of clocks that are unbounded in~$Z$, i.e., those
  clocks~$x$ for which the entry~$(x,0)$ of the normalized DBM of~$Z$
  is~$+\infty$. Let~$D$ be the set of vectors $\sum_{x\in C}
    \alpha_x\cdot e_x$, in~which $\alpha_x\geq 0$ for all~$x\in C$, and
    $\alpha_x \leq \alpha_y$ in case the entry~$(x,y)$ of the normalized DBM
    of~$Z$ is finite.
  Now, pick~$v\in Z$, and $d\in D$. Then assume that for some~$\lambda>0$, 
  $w=v+\lambda \cdot d$ is not in~$Z$: in particular, for some~$\lambda_0>0$,
  $v+\lambda_0\cdot d$ is on the border of~$Z$, hence it satisfies some
  constraint~$x=n$ or $x-y=n$, for which $x\leq n$ or $x-y\bowtie n$ is a tight constraint
  defining~$Z$. The~former case ($x=n$) is impossible, since the fact that the value
  of~$x$ changes in direction~$d$ indicates that $x\in C$, so that $x$ is unbounded.
  Now, assume that $v+\lambda_0\cdot d$ satisfies $x-y=n$ for some tight constraint
  $x-y\bowtie n$ defining~$Z$ ($x-y\leq n$,~say).
  Necessarily at least one of~$x$ and~$y$ must be in~$C$
  (otherwise the value of~$x-y$ would not change in direction~$d$). 
  In~case only one of them were in~$C$, since in~$Z$ it~holds $x\leq y+n$, it
  must be~$y$. But then in the direction of~$d$, the value of~$x-y$ would
  decrease, so it cannot be the case that $x-y\leq n$ gets violated.
  Hence both $x$ and~$y$ are in~$C$, and since $Z$ contains
  constraint~$x-y\leq n$, it~holds~$\alpha_x\leq \alpha_y$. But then again,
  $x-y$ would decrease in the direction of~$d$, so it cannot be the case that
  $x-y\leq n$ gets violated.

  Conversely, if for some direction~$d=\sum_{x\in X} \alpha_x\cdot e_x$
  it~holds $v+\posreal\cdot d\subseteq Z$, then any clock~$x$ for which the
  vector~$e_x$ has a non-zero coefficient in~$d$ is unbounded. Moreover, if
  for two clocks~$x$ and~$y$ the coefficients in~$d$ are such that
  $\alpha_x>\alpha_y$, then the difference $x-y$ cannot be bounded in~$Z$.
  Hence if $x-y$ is bounded, we~must have $\alpha_x\leq \alpha_y$, which
  entails that $d\in D$. Thus $D$ characterizes all directions in which~$Z$ is infinite.

  Now, pick a point~$v$ in~$Z$, hence satisfying the constraints in~$Z$, and a
  direction~$d$ in~$D$. Assume that $v+\lambda\cdot d$ is outsize~$Z$, which
  means that there is some value $\lambda_0>0$ such that
  $v+\lambda_0\cdot d$ is outside of~$Z$. Hence there is some constraint,
  $x\bowtie n$ or~$x-y\bowtie n$,
  defining~$Z$ that holds true of~$v$ and false of~$v+\lambda_0\cdot d$.
  Applying the same arguments as above, we~can prove that this leads to a
  contradiction.  \qed
\end{proof}

\subsection{Complements for~\Cref{sec:righty}}

\Ydownwardclosed*

\begin{proof}
  First notice that $X_{\le M} \subseteq Y$ and $Y \cap X_{>M} =
  \emptyset$, since~$Z_Y\not=\emptyset$.
  Towards a contradiction, assume $Y$ is not downward-closed for~$\preceq_Z$.
  There is $y \in Y$ and $x \in X \setminus Y$ such that $x \preceq_Z y$.
  Clock~$y$ cannot belong to $X_{\le M}$, as otherwise $x$ would also be an
  element of~$X_{\le M}$, thus of~$Y$. Also, $y$ cannot belong to~$X_{>M}$, as 
  otherwise it would not be in~$Y$. Hence, $Z \cap (y \le M(y)) \ne \emptyset$
  and $Z \cap (y>M(y)) \ne \emptyset$. So~it~must be the case that
  $Z \subseteq (x-y \le M(x)-M(y))$. Now $\emptyset \ne Z_Y \subseteq (y \le
  M(y)) \cap (x>M(x))$, which yields a contradiction with $Z_Y \subseteq
  Z \subseteq (x-y \le M(x)-M(y))$. \qed
\end{proof}

\section{Details for~\Cref{sec:termination}}

\preorder*

\begin{proof}
  $\sqsubseteq$ is obviously reflexive; we~prove transitivity.
  Assume $(Z, \zeta) \sqsubseteq (Z', \zeta')$ and $(Z', \zeta')
  \sqsubseteq (Z'', \zeta'')$.  Let $v \in Z$ and $\epsilon > 0$.
  Since $(Z, \zeta) \sqsubseteq (Z', \zeta')$, there exists $v' \in
  Z'$ such that $v \equiv v'$ and $\zeta'(v') \leq \zeta(v) +
  \epsilon/2$.  Since $(Z', \zeta') \sqsubseteq (Z'', \zeta'')$,
  there exists $v'' \in Z''$ such that $v'' \equiv v'$ and
  $\zeta''(v'') \leq \zeta'(v') + \epsilon/2$.  Thus, $v'' \equiv v$
  and $\zeta''(v'') \leq \zeta(v) + \epsilon$.  
\qed
\end{proof}

\wqobis*

\begin{proof}
  Fix $Y \subseteq X$; we first show that $\sqsupseteq$ is a
  well-quasi-order on $M$-bounded-on-$Y$ priced zones.  Pick an
  infinite sequence of $M$-bounded-on-$Y$ priced zones
  $(\mathcal{Z}_i)_{i \ge 0}$, with $\mathcal{Z}_i =
  (Z_i,\zeta_i)$.
  $\pi_Y(Z)$ is a zone, whose corners have integer coordinates,
  and is included in the bounded space $\prod_{x \in Y} [0,M(y)]$.
  Thus, there are finitely many possible values for $\pi_Y(Z)$.
  We cant thus assume  without loss of generality that for every
  $i \ge 0$, $\pi_Y(Z_{i+1}) \subseteq \pi_Y(Z_{i})$.

  From~\Cref{lemma:relaxation}, if there is $i \ge 0$ such that
  $\zeta_i$ is not lower-bounded on $Z_i$, then for every $j \ge i$,
  $\mathcal{Z}_j \sqsubseteq \mathcal{Z}_i$. We~now assume that all
  $\zeta_i$ are lower-bounded on~$Z_i$ by~$\mu$.
  
  With every $i$, we can associate the tuple 
  \[
  \kappa_i = \Big(\min_{\substack{v \in \adherence{Z_i} \\
      \pi_Y(v)=u}} \zeta_i(v)\Big)_{\substack{u \in \prod_{y \in Y}
      [0,M(y)] \\ u\ \text{has integral coord.}}}
  \]
  Each such coordinate has integral value larger than or equal to
  $\mu$, or $+\infty$ (in case the minimum is taken on an empty
  set). There are finitely many $u \in \prod_{y \in Y} [0,M(y)]$ with
  integral coordinates, hence the order on such tuples is a
  well-quasi-order (by Higman's lemma~\cite{higman52}): there exists
  $i<j$ such that $\kappa_i \le \kappa_j$ (pointwise order). Now,
  using \Cref{coro:vertices}, we get that $\mathcal{Z}_j \sqsubseteq
  \mathcal{Z}_i$, which implies the expected result for
  $M$-bounded-on-$Y$ priced zones.

  By application of Higman's lemma again and of~\Cref{coro:decompos},
  we get that $\sqsupseteq$ is a well-quasi-order over priced zones
  which are either not lower-bounded or lower-bounded by $\mu$. \qed
  % 
  \iffalse This follows from~\Cref{prop:wqo_bounded},
  together with Higman's lemma~\cite{higman52}.

  The first key observation is that for a given $Y \subseteq X$, and
  any zone~$Z$, the set $\pi_Y(\adherence{Z_Y})$ is a zone of the
  hyperrectangle $\prod_{y \in Y} [0 ; M(y)]$. Thus there are finitely
  many possible values for~$\pi_Y(Z_Y)$, and the set inclusion is a
  wqo on these projections.  Similarly, there are finitely many points
  with integer values in~$\pi_Y(\adherence{Z_Y})$ (\ie, finitely many
  possible values for~$u$'s in~\Cref{coro:vertices}).  Furthermore,
  $X$ being fixed, %there are finitely many~$Y$'s.
  it~has finitely many subsets~$Y$.

  Another key point is that the minima used in~\Cref{prop:wqo_bounded}
  are integers, and are all bounded below by the same~$\mu$.  Hence
  the usual ordering on integers is a wqo on the set of possible
  values for these minima. \qed
\fi
\end{proof}

\subsection{When does a weighted timed automaton generate only priced
  zones either non lower-bounded or lower-bounded by some fixed
  $\mu$?}
\label{app-mu}

We will characterize such weighted timed automata using an extension
of the corner-point abstraction defined in~\cite{BBL08} for
(clock-)bounded weighted timed automata.

Let $\A = (X,L,\ell_0,\Goal,\penalty1000 E,\weight)$ be a weighted
timed automaton. Let $\mathcal{R}$ be the set of regions for $\A$ with
regards to maximal constants $M$. Given $R \in \mathcal{R}$, it is
$M$-bounded on $Y$ for a maximal set $Y$, and a corner of $R$ is a
point of $\mathbb{N}^Y$ which belongs to $\adherence{\pi_Y(R)}$:
somehow, once a clock is larger than the maximal constant, then it
becomes inactive, hence we forget about those clocks when recording
the corners; this allows to have finitely many reachable corners! We
write $\mathcal{R}^\star$ for the set of pairs $(R,\alpha)$, where $R
\in \mathcal{R}$ and $\alpha$ is a corner of $R$.

We construct corner-point abstraction $\A_{\textsf{cp}}$ of $\A$ as
the following weighted graph: its set of states is $L \times
\mathcal{R}^\star$, and its set of transitions is defined by:
\begin{itemize}
\item There are transitions $e = (\ell,(R,\alpha)) \rightarrow
  (\ell,(R,\alpha'))$ where $\alpha$ and $\alpha'$ are distinct
  corners of $R$ and $\alpha'$ is the time successor of $\alpha$ (in
  which case, $\alpha' = \alpha + 1$). We then set $\weight(e') =
  \weight(\ell)$ (intuitively the delay between the corner-points
  $\alpha$ and $\alpha'$ is one time unit).
\item There are transitions $e = (\ell,(R,\alpha)) \rightarrow
  (\ell,(R,\alpha))$ when $R$ is $M$-bounded on $X$ (hence $R$ has a
  single corner), and $\weight(e) = \weight(\ell)$
\item There are transitions $e = (\ell,(R,\alpha)) \rightarrow
  (\ell,(R',\alpha))$ where $R'$ is the time successor region of $R$
  and $\alpha$ is a corner-point associated with both $R$ and $R'$. We
  then set $\weight(e) = 0$(intuitively, there is no delay between the
  corner-point of the two distinct regions).
\item If $e = \ell \xrightarrow{g,Y} \ell'$ is a transition of ${\cal
    A}$, there will be transitions $e' = (\ell,(R,\alpha)) \rightarrow
  (\ell',(R',\alpha'))$ in ${\cal A}_{\textsf{cp}}$ with $R \subseteq
  g$, $R' = [Y \leftarrow 0] R$, $\alpha$ corner-point associated with
  $R$, $\alpha'$ corner associated with $R'$ and $\alpha' = [Y
  \leftarrow 0]\alpha$. We then set $\weight(e') = \weight(e)$.
\end{itemize}
The first three types of transitions are delay transitions whereas the
last are discrete transitions.  It is just an extension of
\cite[Prop. 3, Prop. 5]{BBL08} to get that optimal cost in $\A$
coincides with optimal cost in $\A_{\textsf{cp}}$.

\smallskip We will use this corner-point abstraction to characterize
cases where the cost functions behave safely in the forward
exploration of the weighted timed automaton.

We reuse notations of the previous proof.

\begin{proposition}
  Let $(\ell,\mathcal{Z}_i)_{i \ge 0}$ be an infinite sequence of
  symbolic states computed during the symbolic exploration of $\A$,
  which are lower-bounded, and such that there is $Y \subseteq X$ with
  $\pi_Y((Z_{i+1})_Y) \subseteq \pi_Y((Z_i)_Y)$, but if
  \[
  \kappa_i = \Big(\min_{\substack{v \in \adherence{(Z_i)_Y} \\
      \pi_Y(v)=u}} \zeta_i(v)\Big)_{\substack{u \in \prod_{y \in Y}
      [0,M(y)] \\ u\ \text{has integral coord.}}}
  \]
  then $(\kappa_i)_{i \ge 0}$ is an infinite antichain. Then there is
  a reachable cycle in $\A_{\textsf{cp}}$ with negative cost, which is
  not purely composed of delay transitions.
\end{proposition}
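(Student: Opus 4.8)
The plan is to turn the bad antichain $(\kappa_i)_i$ into a family of runs of~$\A$ whose cost tends to~$-\infty$, transfer this to the corner‑point graph~$\A_{\textsf{cp}}$, and then exploit its finiteness together with the lower‑boundedness hypothesis. The first step is to isolate a single ``bad coordinate''. Each entry of~$\kappa_i$ lies in $\IZ\cup\{+\infty\}$ and is indexed by the finite set of integral points of $\prod_{y\in Y}[0,M(y)]$; the hypothesis $\pi_Y((Z_{i+1})_Y)\subseteq\pi_Y((Z_i)_Y)$ gives a non‑increasing chain of zones inside a bounded box with integral vertices, hence eventually constant, so after discarding a prefix all these projections equal a fixed zone~$P$ (and $Y\neq\emptyset$, since otherwise $\kappa_i$ would be a single real and a sequence of singletons cannot be an infinite antichain). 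If there were a common lower bound $\mu\in\IZ$ for all entries of all~$\kappa_i$, then the $\kappa_i$ would all lie in a product of finitely many copies of $\{\mu,\mu+1,\dots\}\cup\{+\infty\}$, which is well‑quasi‑ordered by Higman's lemma~\cite{higman52}, contradicting that $(\kappa_i)_i$ is an infinite antichain. So some entry is unbounded below, and by finiteness of the index set, after passing to a subsequence there is a fixed integral point $u^\star\in P$ with $m_i:=\min_{v\in\adherence{(Z_i)_Y},\ \pi_Y(v)=u^\star}\zeta_i(v)\to-\infty$ (each $m_i$ being finite, as the fibre over $u^\star$ is non‑empty and $\zeta_i$ is lower‑bounded on~$(Z_i)_Y$).

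Next I would produce the runs. Every $(\ell,\mathcal Z_i)\in\Post^*(\ell_0,\mathcal Z_0)$ is realized in~$\A$, so for $v\in(Z_i)_Y\subseteq Z_i$ with $\pi_Y(v)$ close to~$u^\star$ and $\zeta_i(v)$ within~$1$ of~$m_i$ (possible since $\zeta_i$ is continuous and the fibre minimiser lies in $\adherence{(Z_i)_Y}$) there is a run $\varrho_i$ from $(\ell_0,\mathbf{0}_X)$ reaching~$\ell$, with the $Y$‑clocks near the fixed integral point~$u^\star$ and the other clocks above~$M$, of cost $\le m_i+2\to-\infty$. Using the corner‑point abstraction of~\cite{BBL08} (extended as in the construction of~$\A_{\textsf{cp}}$), runs of~$\A$ project to paths of~$\A_{\textsf{cp}}$ along a common region sequence with costs differing by at most a constant of~$\A$; hence the $\varrho_i$ project to paths~$\pi_i$ ending in one of the finitely many states $(\ell,(R,\alpha))$ with $\pi_Y(\alpha)=u^\star$, i.e.\ (after a further subsequence) in a \emph{fixed} state $(\ell,(R^\star,\alpha^\star))$, with $\cost(\pi_i)\to-\infty$. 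Decomposing each $\pi_i$ into a simple path plus cycles and using finiteness of~$\A_{\textsf{cp}}$, some reachable negative cycle occurs on infinitely many~$\pi_i$ with multiplicity tending to infinity; this already yields a reachable negative cycle of~$\A_{\textsf{cp}}$.

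It remains to show one such cycle is not purely delay. Inspecting the four edge types of~$\A_{\textsf{cp}}$: type‑1 edges strictly increase the corner, type‑3 edges pass to a strictly later region, and delay edges never reset clocks, so no cycle can use type‑1 or type‑3 edges; a purely‑delay cycle is therefore a type‑2 self‑loop at some $(\ell',(R',\alpha'))$ with $R'$ $M$‑bounded on the whole of~$X$, of cost $\weight(\ell')$, so a negative one forces $\weight(\ell')<0$. Assume, for contradiction, that every reachable negative cycle of~$\A_{\textsf{cp}}$ is of this form. Then, since non‑negative cycles cannot drive $\cost(\pi_i)$ to~$-\infty$ while the simple part of $\pi_i$ stays bounded, the $\pi_i$ must traverse such self‑loops $k_i\to\infty$ times; since $\pi_i$ and $\varrho_i$ follow a common region sequence, $\varrho_i$ spends time $\to\infty$ in locations~$\ell'$ of negative rate with all clocks above~$M$. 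The prefix of $\varrho_i$ reaching such an~$\ell'$ ends in some $(\ell',\mathcal Z^{\mathrm{pre}}_i)\in\Post^*(\ell_0,\mathcal Z_0)$ whose component with all clocks above~$M$ is non‑empty, and applying~$\Post$, the time‑elapse step in~$\ell'$ produces a cost function that is not bounded from below on that component (in the direction where all clocks grow together, staying above~$M$, the cost decreases at rate $\weight(\ell')<0$). I would then argue that this unboundedness from below survives all subsequent $\Post$ steps down to $(\ell,\mathcal Z_i)$: delays only push the cost further down along the divergent direction; a reset eliminates that direction only by resetting all the clocks still carrying it, but then (being back at~$0$) these clocks can be above~$M$ again at~$\ell$ only by delaying past~$M$, which at a non‑negative rate cannot undo the divergence and at a negative rate would be yet another reachable negative delay self‑loop re‑injecting it; and if instead the divergent direction is not fully reset, the same witnessing valuations survive. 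Either way the priced zones at~$\ell$ reached by the $\varrho_i$ fail to be lower‑bounded on~$(Z_i)_Y$, contradicting the hypothesis. Hence some reachable negative cycle of~$\A_{\textsf{cp}}$ is not purely composed of delay transitions.

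I expect this last step—propagating the cost‑divergent, recession‑cone direction created by the negative delay self‑loop through the whole pipeline of $\Post$ operations down to the $(\ell,\mathcal Z_i)$, including the bookkeeping of which configurations lie in which symbolic states of the exploration, which clocks are active (below~$M$) versus forgotten along the run, and the fact that once all clocks exceed~$M$ only lower‑bound guards remain enabled—to be the main obstacle. The earlier steps are routine once phrased correctly: Higman's lemma for the coordinate extraction, the realization property of~$\Post^*$ for the runs, and the (extended) corner‑point correspondence for the transfer to~$\A_{\textsf{cp}}$.
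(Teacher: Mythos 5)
Your overall architecture matches the paper's: extract, via Higman's lemma and finiteness of the integral points of $\prod_{y\in Y}[0,M(y)]$, a fixed integral corner $u^\star$ whose fibre-minima $m_i$ tend to $-\infty$; use the realization property of $\Post^*$ to obtain runs $\varrho_i$ of cost tending to $-\infty$; project onto $\A_{\textsf{cp}}$ and extract a reachable negative cycle by finiteness; and finally rule out purely-delay cycles using lower-boundedness. The first three steps are done correctly (your simple-path-plus-cycles decomposition is in fact cleaner than the paper's K\"onig's-lemma tree), and your observation that a purely-delay cycle must be a negative-rate self-loop at the all-clocks-above-$M$ region is exactly right.

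The genuine gap is the one you flag yourself: you never complete the argument that a negative-rate delay in the unbounded region contradicts lower-boundedness of the $\mathcal{Z}_i$, and the route you sketch --- tracking a recession-cone direction of the cost function through every subsequent $\Post$ step, with a case analysis on resets --- is both unfinished and harder than necessary. The workable argument stays at the level of runs and of the ``ends in'' property, not of the $\Post$ pipeline. Suppose $\varrho_i$ delays a positive amount in some location $\ell'$ with $\weight(\ell')<0$ while every clock exceeds its bound $M$. For each $s\ge 0$, lengthen that delay by $s$ and keep every other delay and every edge unchanged: clocks reset after that point take the same values as in $\varrho_i$, and clocks not reset are only shifted upwards while already above $M$, where (by choice of $M$) only lower-bound constraints can be satisfied, so every subsequent guard remains satisfied. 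The modified runs therefore follow the same edge sequence as $\varrho_i$, hence all end in the \emph{same} symbolic state $(\ell,\mathcal{Z}_i)$, with endpoints $v_s\in Z_i$ and costs $\cost(\varrho_i)+s\cdot\weight(\ell')\to-\infty$; since ``ends in'' gives $\zeta_i(v_s)\le\cost(\varrho_i)+s\cdot\weight(\ell')$, the function $\zeta_i$ is unbounded from below on $Z_i$, directly contradicting the hypothesis. This is the content of the paper's one-line remark that a purely-delay negative cycle ``would compute a non lower-bounded priced zone''; no induction over subsequent $\Post$ steps, and no bookkeeping of which clocks carry the divergent direction, is needed.
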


\begin{proof}
  We consider the corner-point abstraction $\A_{\textsf{cp}}$.

  There is $u \in \prod_{y \in Y} [0,M(y)] $ such that $u$ has
  integral coordinates, and such that the sequence $(\kappa_i(u))_{i
    \ge 0}$ is decreasing. Let $v_i \equiv u$ be such that $v_i \in
  \adherence{Z_i}$ and $\zeta_i(v_i)$ yields the minimum for
  $\kappa_i(u)$. Let $\rho_i$ be a run (in the closure of $\A$) that
  leads to $(\ell,v_i)$ with cost $\zeta_i(v_i)$ (notice that we can
  assume it goes through points with integral coordinates).  If we
  project $\rho_i$ onto $\pi_i$ in the corner-point abstraction, state
  $(\ell,v_i)$ coincides with $u$ as a corner of some region
  $R_i$. Gathering all this paths into a tree, we get that this tree
  is infinite, it has finite branching, therefore, by Koenig's lemma,
  there is an infinite branch in the tree, and along that branch,
  there is some state $(\ell,(R,u))$ which is visited infinitely
  often. The value of the cost at each occurrence of this state is
  decreasing, hence there is a reachable cycle with negative cost.
  Assume that this cycle is purely made of delay moves: this means
  that $R$ is the maximal region (all clocks are above their maximal
  constants), and that the transitions correspond to delaying in that
  maximal regions, with a negative cost. This means that we would then
  compute a non lower-bounded priced zone; which is not possible by
  assumption.  So this cycle is not purely made of delay
  transitions. \qed
\end{proof}

The previous proposition expresses the fact that the cost becomes
arbitrarily small, hence decreases to $-\infty$. If the mentioned
cycle is co-reachable, then this means that the foreseen optimal cost
is $-\infty$; on the other hand, if the cycle is not co-reachable,
then the optimal cost might be finite, but this branch of the
exploration will make the algorithm fail to terminate.

\smallskip Conversely, priced zones generated along such a reachable
cycle will form an antichain, but this does not mean
that~\Cref{algo:optreach} will not terminate, since this antichain can
be dominated by some priced zone previously computed in which the cost
function is not lower-bounded.

\putbib[biblio,extra]
\end{bibunit}


\newcommand{\etalchar}[1]{$^{#1}$}
\begin{thebibliography}{HKSW11}

\bibitem[ACHH93]{ACHH93}
Rajeev Alur, Costas Courcoubetis, {\relax Th}omas~A. Henzinger, and Pei-Hsin
  Ho.
\newblock Hybrid automata: an algorithmic approach to specification and
  verification of hybrid systems.
\newblock In {\em Proc. Workshop on Hybrid Systems (1991 \& 1992)}, volume 736
  of {\em Lecture Notes in Computer Science}, pages 209--229. Springer, 1993.

\bibitem[AD90]{AD90}
Rajeev Alur and David~L. Dill.
\newblock Automata for modeling real-time systems.
\newblock In {\em Proc. 17th International Colloquium on Automata, Languages
  and Programming (ICALP'90)}, volume 443 of {\em Lecture Notes in Computer
  Science}, pages 322--335. Springer, 1990.

\bibitem[AD94]{AD94}
Rajeev Alur and David~L. Dill.
\newblock A theory of timed automata.
\newblock {\em Theoretical Computer Science}, 126(2):183--235, 1994.

\bibitem[AFM{\etalchar{+}}03]{AFM+03}
Tobias Amnell, Elena Fersman, Leonid Mokrushin, Paul Pettersson, and Wang Yi.
\newblock {TIMES}: A tool for schedulability analysis and code generation of
  real-time systems.
\newblock In {\em Proc. 1st International Workshop on Formal Modeling and
  Analysis of Timed Systems (FORMATS'03)}, volume 2791 of {\em Lecture Notes in
  Computer Science}, pages 60--72. Springer, 2003.

\bibitem[ALP01]{ATP01}
Rajeev Alur, Salvatore {La Torre}, and George~J. Pappas.
\newblock Optimal paths in weighted timed automata.
\newblock In {\em Proc. 4th International Workshop on Hybrid Systems:
  Computation and Control (HSCC'01)}, volume 2034 of {\em Lecture Notes in
  Computer Science}, pages 49--62. Springer, 2001.

\bibitem[ARF14]{ARF14}
Omar Al{-}Bataineh, Mark Reynolds, and Tim French.
\newblock Finding best and worst case execution times of systems using
  {D}ifference-{B}ound {M}atrices.
\newblock In {\em Proc. 12th International Conference on Formal Modeling and
  Analysis of Timed Systems (FORMATS'14)}, volume 8711 of {\em Lecture Notes in
  Computer Science}, pages 38--52. Springer, 2014.

\bibitem[BBBR07]{BBBR07}
Patricia Bouyer, {\relax Th}omas Brihaye, V{\'e}ronique Bruy{\`e}re, and
  Jean-Fran{\c c}ois Raskin.
\newblock On the optimal reachability problem.
\newblock {\em Formal Methods in System Design}, 31(2):135--175, 2007.

\bibitem[BBFL03]{BBFL03}
Gerd Behrmann, Patricia Bouyer, Emmanuel Fleury, and Kim~G. Larsen.
\newblock Static guard analysis in timed automata verification.
\newblock In {\em Proc. 9th International Conference on Tools and Algorithms
  for the Construction and Analysis of Systems (TACAS'03)}, volume 2619 of {\em
  Lecture Notes in Computer Science}, pages 254--277. Springer, 2003.

\bibitem[BBL08]{BBL08}
Patricia Bouyer, Ed~Brinksma, and Kim~G. Larsen.
\newblock Optimal infinite scheduling for multi-priced timed automata.
\newblock {\em Formal Methods in System Design}, 32(1):2--23, 2008.

\bibitem[BBLP06]{BBLP05}
Gerd Behrmann, Patricia Bouyer, Kim~G. Larsen, and Radek Pel{\`a}nek.
\newblock Zone based abstractions for timed automata exploiting lower and upper
  bounds.
\newblock {\em International Journal on Software Tools for Technology
  Transfer}, 8(3):204--215, 2006.

\bibitem[BFH{\etalchar{+}}01]{BFH+01}
Gerd Behrmann, Ansgar Fehnker, {\relax Th}omas Hune, Kim~G. Larsen, Paul
  Pettersson, Judi Romijn, and Frits Vaandrager.
\newblock Minimum-cost reachability for priced timed automata.
\newblock In {\em Proc. 4th International Workshop on Hybrid Systems:
  Computation and Control (HSCC'01)}, volume 2034 of {\em Lecture Notes in
  Computer Science}, pages 147--161. Springer, 2001.

\bibitem[BFLM11]{BFLM11}
Patricia Bouyer, Uli Fahrenberg, Kim~G. Larsen, and Nicolas Markey.
\newblock Quantitative analysis of real-time systems using priced timed
  automata.
\newblock {\em Communication of the ACM}, 54(9):78--87, 2011.

\bibitem[BLR05]{BLR05b}
Gerd Behrmann, Kim~G. Larsen, and Jacob~I. Rasmussen.
\newblock Optimal scheduling using priced timed automata.
\newblock {\em ACM Sigmetrics Performancs Evaluation Review}, 32(4):34--40,
  2005.

\bibitem[Bou04]{bouyer04}
Patricia Bouyer.
\newblock Forward analysis of updatable timed automata.
\newblock {\em Formal Methods in System Design}, 24(3):281--320, 2004.

\bibitem[BV04]{BV04}
Stephen Boyd and Lieven Vandenberghe.
\newblock {\em Convex optimization}.
\newblock Cambridge University Press, 2004.

\bibitem[BY03]{BY03}
Johan Bengtsson and Wang Yi.
\newblock On clock difference constraints and termination in reachability
  analysis of timed automata.
\newblock In {\em Proc. 5th International Conference on Formal Engineering
  Methods (ICFEM'03)}, volume 2885 of {\em Lecture Notes in Computer Science},
  pages 491--503. Springer, 2003.

\bibitem[DHS{\etalchar{+}}14]{DH+14}
Aakash Deshpande, Fr{\'{e}}d{\'{e}}ric Herbreteau, B.~Srivathsan, Thanh{-}Tung
  Tran, and Igor Walukiewicz.
\newblock Fast detection of cycles in timed automata.
\newblock Technical Report abs/1410.4509, CoRR, 2014.

\bibitem[DOT{\etalchar{+}}10]{metamoc}
Andreas~E. Dalsgaard, Mads~Chr. Olesen, Martin Toft, Ren{\'e}~Rydhof Hansen,
  and Kim~Guldstrand Larsen.
\newblock {METAMOC}: Modular execution time analysis using model checking.
\newblock In {\em Proc. 10th International Workshop on Worst-Case Execution
  Time Analysis (WCET'10)}, volume~15 of {\em OpenAccess Series in Informatics
  (OASIcs)}, pages 113--123. Schloss Dagstuhl--Leibniz-Zentrum fuer Informatik,
  2010.

\bibitem[GELP10]{GELP10}
Andreas Gustavsson, Andreas Ermedahl, Bj{\"o}rn Lisper, and Paul Pettersson.
\newblock Towards {WCET} analysis of multicore architectures using {UPPAAL}.
\newblock In {\em Proc. 10th International Workshop on Worst-Case Execution
  Time Analysis (WCET'10)}, volume~15 of {\em OpenAccess Series in Informatics
  (OASIcs)}, pages 101--112. Schloss Dagstuhl--Leibniz-Zentrum fuer Informatik,
  2010.

\bibitem[HKPV98]{HKPV98}
{\relax Th}omas~A. Henzinger, Peter~W. Kopke, Anuj Puri, and Pravin Varaiya.
\newblock What's decidable about hybrid automata?
\newblock {\em Journal of Computer and System Sciences}, 57(1):94--124, 1998.

\bibitem[HKSW11]{HKSW11}
Fr{\'e}d{\'e}ric Herbreteau, Dileep Kini, B.~Srivathsan, and Igor Walukiewicz.
\newblock Using non-convex approximations for efficient analysis of timed
  automata.
\newblock In {\em Proc. 30th Conference on Foundations of Software Technology
  and Theoretical Computer Science (FSTTCS'11)}, volume~13 of {\em LIPIcs},
  pages 78--89. Leibniz-Zentrum f{\"u}r Informatik, 2011.

\bibitem[HSW12]{HSW12}
Fr\'ed\'eric Herbreteau, B.~Srivathsan, and Igor Walukiewicz.
\newblock Better abstractions for timed automata.
\newblock In {\em Proc. 27th Annual Symposium on Logic in Computer Science
  (LICS'12)}, pages 375--384. IEEE Computer Society Press, 2012.

\bibitem[HT15]{HT15}
Fr\'ed\'eric Herbreteau and Thanh{-}Tung Tran.
\newblock Improving search order for reachability testing in timed automata.
\newblock In {\em Proc. 13th International Conference on Formal Modeling and
  Analysis of Timed Systems (FORMATS'15)}, pages 124--139. Springer, 2015.

\bibitem[JR11]{JR11}
R\'{e}mi Jaubert and Pierre-Alain Reynier.
\newblock Quantitative robustness analysis of flat timed automata.
\newblock In {\em Proc. 14th International Conference on Foundations of
  Software Science and Computation Structures (FoSSaCS'11)}, volume 6604 of
  {\em Lecture Notes in Computer Science}, pages 229--244. Springer, 2011.

\bibitem[LBB{\etalchar{+}}01]{LBB+01}
Kim~G. Larsen, Gerd Behrmann, Ed~Brinksma, Angskar Fehnker, {\relax Th}omas
  Hune, Paul Pettersson, and Judi Romijn.
\newblock As cheap as possible: Efficient cost-optimal reachability for priced
  timed automata.
\newblock In {\em Proc. 13th International Conference on Computer Aided
  Verification (CAV'01)}, volume 2102 of {\em Lecture Notes in Computer
  Science}, pages 493--505. Springer, 2001.

\bibitem[RLS06]{RLS06}
Jacob~I. Rasmussen, Kim~G. Larsen, and K.~Subramani.
\newblock On using priced timed automata to achieve optimal scheduling.
\newblock {\em Formal Methods in System Design}, 29(1):97--114, 2006.

\bibitem[RS01]{RS01}
Jan-J R{\"u}ckmann and Oliver Stein.
\newblock On linear and linearized generalized semi-infinite optimization
  problem.
\newblock {\em Annals of Operations Research}, 101(1-4):191--208, 2001.

\end{thebibliography}


\newcommand{\etalchar}[1]{$^{#1}$}
\begin{thebibliography}{LBB{\etalchar{+}}01}

\bibitem[ARF14]{ARF14}
Omar Al{-}Bataineh, Mark Reynolds, and Tim French.
\newblock Finding best and worst case execution times of systems using
  {D}ifference-{B}ound {M}atrices.
\newblock In {\em Proc. 12th International Conference on Formal Modeling and
  Analysis of Timed Systems (FORMATS'14)}, volume 8711 of {\em Lecture Notes in
  Computer Science}, pages 38--52. Springer, 2014.

\bibitem[BBL08]{BBL08}
Patricia Bouyer, Ed~Brinksma, and Kim~G. Larsen.
\newblock Optimal infinite scheduling for multi-priced timed automata.
\newblock {\em Formal Methods in System Design}, 32(1):2--23, 2008.

\bibitem[BBLP06]{BBLP05}
Gerd Behrmann, Patricia Bouyer, Kim~G. Larsen, and Radek Pel{\`a}nek.
\newblock Zone based abstractions for timed automata exploiting lower and upper
  bounds.
\newblock {\em International Journal on Software Tools for Technology
  Transfer}, 8(3):204--215, 2006.

\bibitem[Hig52]{higman52}
Graham Higman.
\newblock Ordering by divisibility in abstract algebras.
\newblock In {\em Proc. London Math. Soc.}, volume~2, pages 326--336, 1952.

\bibitem[HSW12]{HSW12}
Fr\'ed\'eric Herbreteau, B.~Srivathsan, and Igor Walukiewicz.
\newblock Better abstractions for timed automata.
\newblock In {\em Proc. 27th Annual Symposium on Logic in Computer Science
  (LICS'12)}, pages 375--384. IEEE Computer Society Press, 2012.

\bibitem[LBB{\etalchar{+}}01]{LBB+01}
Kim~G. Larsen, Gerd Behrmann, Ed~Brinksma, Angskar Fehnker, {\relax Th}omas
  Hune, Paul Pettersson, and Judi Romijn.
\newblock As cheap as possible: Efficient cost-optimal reachability for priced
  timed automata.
\newblock In {\em Proc. 13th International Conference on Computer Aided
  Verification (CAV'01)}, volume 2102 of {\em Lecture Notes in Computer
  Science}, pages 493--505. Springer, 2001.

\end{thebibliography}
\end{document}